\documentclass[11pt,letterpaper,twoside,reqno,nosumlimits]{amsart}

\usepackage{etoolbox}
\patchcmd{\section}{\scshape}{\bfseries}{}{}
\makeatletter
\renewcommand{\@secnumfont}{\bfseries}
\makeatother

\usepackage[dvipsnames]{xcolor}
\patchcmd{\section}{\normalfont}{\normalfont\color{MidnightBlue}}{}{}
\patchcmd{\subsection}{\normalfont}{\normalfont\color{MidnightBlue}}{}{}

\makeatletter
\def\subsubsection{\@startsection{subsubsection}{3}%
\z@{.5\linespacing\@plus.7\linespacing}{-.5em}%
{\normalfont\bfseries}}
\makeatother

\usepackage{fancyhdr}
\usepackage{amsmath,amsfonts,amsbsy,amsgen,amscd,mathrsfs,amssymb,amsthm,mathtools,tensor}

\usepackage{todonotes}
\usepackage[stable]{footmisc}
\usepackage{amsthm}
\usepackage{amsmath}
\usepackage{amssymb}
\usepackage{dsfont}
\usepackage{amsfonts}
\usepackage{graphicx}
\usepackage{color}
\usepackage[bf,SL,BF]{subfigure}
\usepackage{url}
\usepackage{epstopdf}
\usepackage{pdfsync}
\usepackage[colorlinks]{hyperref}
\usepackage[all]{xypic}
\usepackage{comment}
\usepackage{mathabx}
\usepackage{upgreek}

\hypersetup{
    linkcolor=blue,
}

\newlength{\fixboxwidth}
\setlength{\fixboxwidth}{\marginparwidth}
\addtolength{\fixboxwidth}{-6.8pt}

\usepackage{epsfig,amsbsy,graphicx,multirow}

\usepackage{ algorithm, algorithmic}

\renewcommand{\algorithmiccomment}[1]{\bgroup\hfill//~#1\egroup}

\usepackage[all]{xy}
\usepackage{accents}

\setcounter{secnumdepth}{3} \numberwithin{equation}{section}

\def\P{\mathbb{P}}
\def\R{\mathbb{R}}

\def\E{\mathbb{E}}

\def\X{{\bf\mathcal{X}}}

\def\L{\mathcal{L}}

\def\restrict#1{\raise-.5ex\hbox{\ensuremath|}_{#1}}

\def\<{\big\langle}
\def\>{\big\rangle}

\def\Var{\operatorname{Var}}

\def\diam{\operatorname{diam}}

\def\argmin{{\operatorname{argmin}}}

\def\supp{{\operatorname{support}}}

\def\dim{{\operatorname{dim}}}

\def\Center{\operatorname{Center}}

\def\MINIBALL{\operatorname{MINIBALL}}

\definecolor{red}{rgb}{0.9, 0, 0}

\newtheorem{Theorem}{Theorem}[section]

\newtheorem{Remark}[Theorem]{Remark}

\newtheorem{Problem}{Problem}

\graphicspath{{images/}}

\begin{document}
\title[UQ of the 4th kind]{Uncertainty Quantification of the 4th kind; optimal  posterior accuracy-uncertainty tradeoff with the minimum enclosing ball}

\date{\today}

\author[\tiny Bajgiran]{Hamed Hamze Bajgiran}
\address{Hamed Hamze Bajgiran, Caltech,  MC 9-94, Pasadena, CA 91125, USA}
 \email{hhamzeyi@caltech.edu}

\author[Batlle]{Pau Batlle}
\address{Pau Batlle, Caltech, MC 305-16, Pasadena, CA 91125, USA}
 \email{pau@caltech.edu}

\author[Owhadi]{Houman Owhadi{$^*$}}
\address{Houman Owhadi, Caltech,  MC 9-94, Pasadena, CA 91125, USA}
 \email{owhadi@caltech.edu}
 
 \author[Samir]{Mostafa Samir}
 \address{Mostafa Samir, Beyond Limits}
  \email{mibrahim@beyond.ai}

\author[Scovel]{Clint Scovel}
\address{Clint Scovel, Caltech,   MC 9-94, Pasadena, CA 91125, USA}
 \email{clintscovel@gmail.com}

\author[Shirdel]{Mahdy Shirdel}
\address{Mahdy Shirdel, Beyond Limits, 400 N Brand Blvd, Glendale, CA 91203, USA}
 \email{mshirdel@beyond.ai}

\author[Stanley]{Michael Stanley}
\address{Michael Stanley, Carnegie Mellon, Baker Hall 132, Pittsburgh, PA 15213, USA}
 \email{mcstanle@andrew.cmu.edu}

\author[Tavallali]{Peyman Tavallali}
\address{Peyman Tavallali, Jet Propulsion Laboratory}
 \email{peyman.tavallali@jpl.caltech.edu}

\thanks{$^\dagger$Author list in alphabetical order.\\
$\quad$ {$^*$}Corresponding author: \href{mailto:owhadi@caltech.edu}{owhadi@caltech.edu}  }

\maketitle

\begin{abstract}
Uncertainty quantification (UQ) is, broadly, the task of determining appropriate uncertainties to model predictions. There are essentially three kinds of approaches to Uncertainty Quantification:  (A) robust optimization (min and max), (B) Bayesian (conditional average) and (C) decision theory (minmax). Although (A) is robust, it is unfavorable with respect to accuracy and data assimilation. (B) requires a prior, it is generally non-robust (brittle) with respect to the choice of that prior and posterior estimations can be slow. Although (C) leads to the identification of an optimal prior, its approximation suffers from the curse of dimensionality and the notion of loss/risk used to identify the prior is one that is averaged with respect to the distribution of the data. We introduce a 4th kind which is a hybrid between (A), (B), (C), and hypothesis testing. It can be summarized as, after observing a sample $x$,   (1) defining a likelihood region through the relative likelihood  and (2) playing a minmax game in that region to define optimal estimators and their risk.
The resulting method has several desirable properties: (a) an optimal prior is identified after measuring the data and the notion of loss/risk is a  posterior one, (b) the determination of the optimal estimate and its risk can be reduced to computing the  minimum enclosing ball of the image of the likelihood region under the quantity of interest map (such computations are fast and do not suffer from the curse of dimensionality).
The method is characterized by a  parameter in $ [0,1]$  acting as an assumed lower bound on the rarity of the observed data (the relative likelihood). When that parameter is near $1$, the method produces a posterior distribution concentrated around a maximum likelihood estimate (MLE)   with tight but low confidence UQ estimates. When that parameter is near $0$, the method produces a maximal risk posterior distribution with high confidence UQ estimates. In addition to navigating the accuracy-uncertainty tradeoff, the proposed method addresses the brittleness of Bayesian inference by navigating the robustness-accuracy tradeoff associated with data assimilation.
\end{abstract}

\section{Introduction}
\label{sec_new_intro}
The past century has seen a steady increase in the need of estimating and
 predicting complex systems and making (possibly critical) decisions with limited
information \cite{OwhadiScovelMachineWald}. These decisions are currently being formed based on increasingly complex models with imperfectly known parameters  estimated based on available (limited) data whose distribution depends on the unknown/imperfectly known parameters of the model (if the model is well specified, i.e., if the distribution of the data belongs to the parametric family of distributions represented by the model).
Making  decisions and assessing the risk of these decisions requires identifying  methods for data assimilation (estimating the parameters of the model based on data) and quantifying the risk/uncertainties of these decisions/parametric models.
Such UQ methods are not unique, and they essentially differ through assumptions made on the generation of the true parameter of the model.
In all inference/UQ methods, there is a tradeoff between robustness and accuracy \cite{owhadi2017qualitative}, and  these assumptions lead to the accuracy of the underlying method when they hold true but also to their lack of robustness when they do not hold true.
In this paper, we introduce a new and rigorous UQ method that navigates (in a Pareto optimal manner) this tradeoff between accuracy and robustness in data assimilation and UQ for parametric models.

\begin{figure}[htp]
    \centering
    \includegraphics[width = 0.7\textwidth]{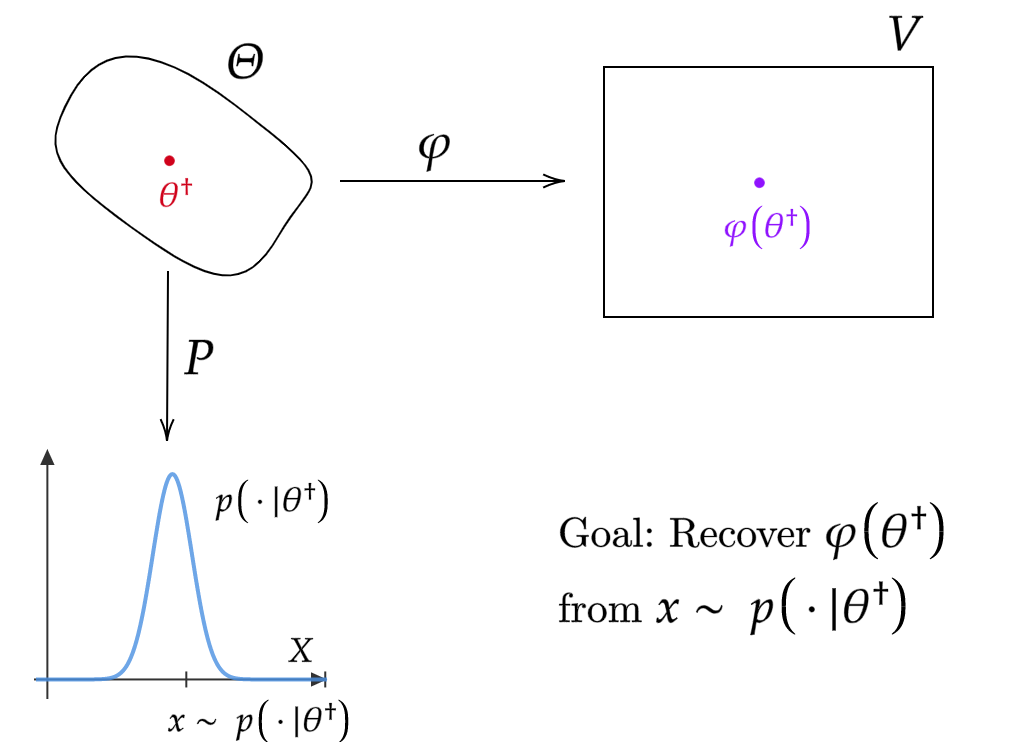}
    \caption{The Uncertainty Quantification (UQ) problem. Here, $\Theta$ is the space of parameters, $\theta^\dagger$ is the true unknown parameter, $\varphi$ a quantity of interest, and $P$ is the physical model determining the distribution $p$ from which the data $x$ is observed}
    \label{fig:UQSituation}
\end{figure}

\subsection{The problem}
To describe this method we formalize the underlying UQ problem as follows (see Fig.~\ref{fig:UQSituation}). Given a parameter space $\Theta$ and a quantity of interest $\varphi\,:\, \Theta \rightarrow V$
we seek to estimate  $\varphi(\theta^\dagger)$, where $\theta^\dagger\in \Theta$ is an unknown parameter, based on the observation of some data $x\in X$
sampled from a probability distribution $p(\cdot|\theta^\dagger)$ (given by our model $P$) depending on the unknown parameter $\theta^\dagger$.
Note that if our goal is to recover $\theta^\dagger$ itself, then we can let $\varphi$  be the identity function.
 A simple example (detailed in Sec.~\ref{sec_coin}) is to recover the probability $\theta^\dagger$ that a coin lands on heads, given the observation  $x=(x_1,\ldots,x_n)\in \{H,T\}^n$ of $n$ tosses of that coin.  Note that this general setup combines parametric uncertainty ($\theta^\dagger$ is unknown) with aleatoric uncertainty (the $x$ data is a sample from a random variable whose distribution depends on $\theta^\dagger$), and  they need be merged to estimate  $\varphi(\theta^\dagger)$ and quantify the uncertainty/risk of the estimation.

\subsection{The three main approaches to UQ}
There are currently three main approaches (detailed in Sec.~\ref{sec_intro}) to addressing this UQ problem. The {\bf worst case} (robust optimization) approach is (if $\varphi$ is real-valued) to compute,
 the minimum and maximum possible value of $\varphi(\theta)$ over all possible values the parameter $\theta\in \Theta$.
Although the data may be incorporated through empirical distribution inequalities \cite{owhadiscovelexball2015},
the worst case  approach is  conservative and, due to its lack of assumptions on the generation of $\theta^\dagger$, it is at the robust end of the tradeoff between accuracy and robustness. Indeed this approach is simply based on the observation that
\begin{equation}
\varphi(\theta^\dagger)\in \big[\min_{\theta\in \Theta}\varphi(\theta), \max_{\theta\in \Theta}\varphi(\theta)\big]\,.
\end{equation}

The {\bf Bayesian approach} is to assume that $\theta^\dagger$ is a sample from a prior distribution $\pi$ on $\Theta$, then  estimate $\varphi(\theta^\dagger)$ and quantify the uncertainty of that estimation by computing the posterior distribution of $\theta^\dagger$ given the data $x$.
Writing $d(x)$ for the estimation of $\varphi(\theta^\dagger)$ ($d\,:\, X\rightarrow V$), the Bayesian decision theoretic variant of the Bayesian approach is to introduce a loss/cost
\begin{equation}
\mathcal{\L}(\theta,d)=\E_{x\sim P(\cdot|\theta)}\E\big[\|\varphi(\theta)-d(x)\|^2\big]
\end{equation}
 for the choice of the estimator $d$ if the true value of unknown parameter is $\theta$, assume that $\theta^\dagger$ is sampled from a known prior distribution $\pi$ and identify an optimal estimator $d_\pi$ as a minimizer
 \begin{equation}\label{eqkjedjdhed}
d_\pi=\text{argmin}_{d}\E_{\theta\sim \pi}\big[\mathcal{\L}(\theta,d)\big]\,,
\end{equation}
   of the $\pi$-averaged loss $\E_{\theta\sim \pi}\big[\mathcal{\L}(\theta,d)\big]$, whose value at the minimum defines the risk of that estimator.
Due to the strength of the assumption that $\theta^\dagger$ is sampled from a known prior distribution, the Bayesian approach is at the accurate end of the tradeoff between accuracy and robustness, in particular, it is brittle to the choice of prior \cite{owhadi2017qualitative,owhadi2015brittleness, owhadi2015brittleness2,owhadi2016brittleness}.
The {\bf game/decision theoretic} approach formulates the underlying UQ problem as a zero-sum game in which $\theta$ is chosen by an adversarial player (Player I) seeking to maximize the loss $\mathcal{\L}(\theta,d)$ and $d$ is chosen by Player II seeking to minimize that loss.
As in classical game theory \cite{VNeumann28}, identifying a Nash equilibrium requires lifting this game by letting Player I randomize the selection of $\theta$ according to some mixed strategy/prior distribution $\pi$ on $\Theta$ and considering the average loss,
\begin{equation}\label{eqgame}
\mathcal{\L}(\pi,d)=\E_{\theta\sim \pi, x\sim P(\cdot|\theta)}\E\big[\|\varphi(\theta)-d(x)\|^2\big]\,, \quad \pi \in \mathcal{P}(\Theta)\,, d\,:\, X\rightarrow V\,.
\end{equation}
A saddle point $(\pi^*,d_{\pi^*})$ for \eqref{eqgame} is then identified by letting $d_\pi$ be the best Bayesian response \eqref{eqkjedjdhed} to $\pi$ and $\pi^*$ be a maximizer of the average-loss
$ \E_{\theta\sim \pi}\big[\mathcal{\L}(\theta,d_\pi)\big]$, i.e.,
\begin{equation}
\pi^* \in \text{argmax}_{\theta \in \Theta}\E_{\theta\sim \pi}\big[\mathcal{\L}(\theta,d_\pi)\big]\,.
\end{equation}
Although
 this approach achieves a balance in the accuracy/robustness tradeoff by relaxing the assumption 
that  $\theta$  is sampled from a known distribution, it does not explicitly  enable a navigation of that tradeoff. Furthermore, (1) the
  numerical approximation of an optimal mixed strategy for Player II suffers from the curse of dimensionality, and (2) $d_\pi$ is the best response to an data-averaged notion of risk rather than a data-given notion of risk.

\begin{figure}[h]
    \centering
    \includegraphics[width= \textwidth]{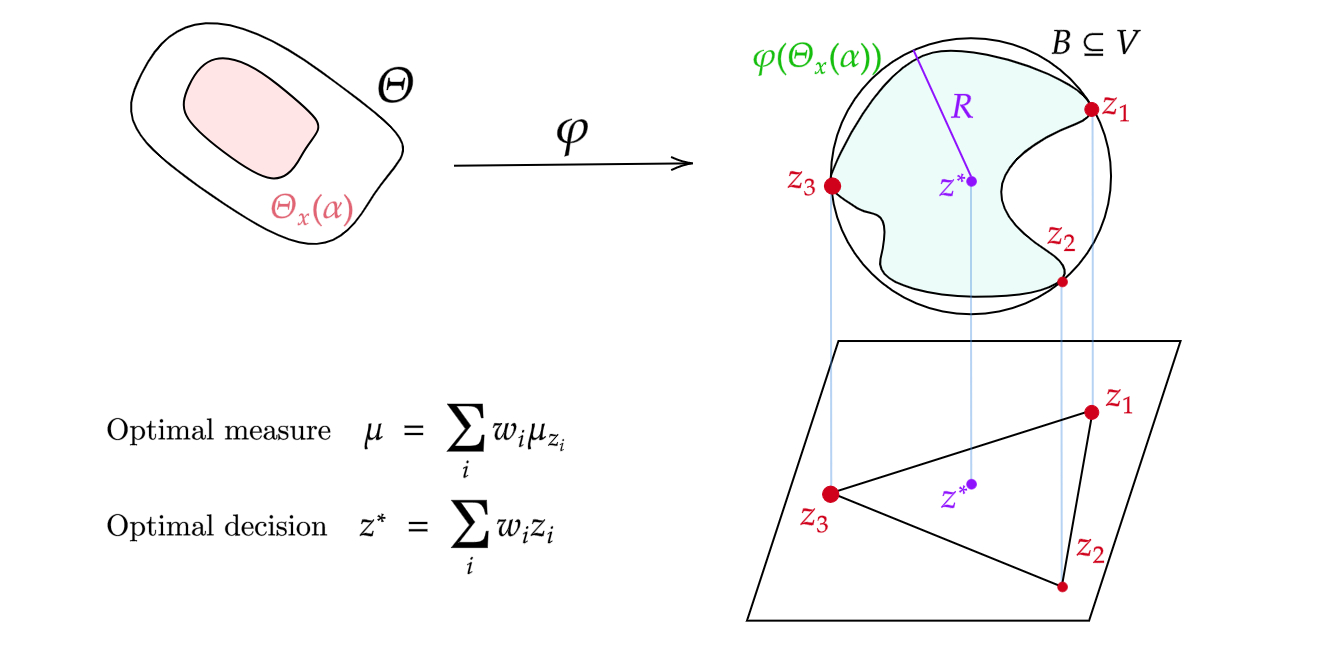}
    \caption{Example
    of the minimum enclosing ball $B$ about the image $\varphi(\Theta_{x}(\alpha))$
 (in green) with radius $R$ and center $d=z^{*}$. An optimal discrete measure $\mu:=\sum{w_{i}\updelta_{z_{i}}}$ ($z_i=\varphi(\theta_i)$) on the range of $\varphi$ for the maximum variance problem  is characterized by the fact that it is supported on
 the intersection of $\varphi(\Theta_{x}(\alpha))$ and $\partial B$ and $d=z^{*}=\sum{w_{i}z_{i}}$ is
 the center of mass of the measure $\mu$. The size of the solid red balls indicate the size of the  corresponding weights $w_{i}$.}
    \label{figminball}
\end{figure}

\subsection{Our new approach to UQ}
In this paper, we present a {\bf new approach} that does not suffer from weaknesses present in previous UQ methods such as brittleness and curse of dimensionality and that
explicitly navigates the tradeoff between accuracy and robustness in the estimation of the quantity of interest.
Motivated by the fact that the main cause of brittleness in inference is the possible rarity of the observed data \cite{owhadi2017qualitative,owhadi2015brittleness, owhadi2015brittleness2,owhadi2016brittleness}, the first step of this approach  is to make the hypothesis that the parameter $\theta$ that has generated the data is such that the data is not rare and
bound the probability that this hypothesis is false.
To describe this, given the observation $x$, for $\alpha\in [0,1]$ let $\Theta_x(\alpha)$ be the set of parameters $\theta\in \Theta$ whose relative likelihood
\begin{equation}\label{eqrellik}
\bar{p}(x|\theta):=\frac{p(x|\theta)}{\sup_{\theta'}p(x|\theta')}\,
\end{equation}
exceeds the threshold $\alpha$, i.e.,
\begin{equation}\label{eqnormrarset}
 \Theta_{x}(\alpha):= \bigl\{\theta \in \Theta: \bar{p}(x|\theta)\geq
   \alpha\bigr\}\,
   \end{equation}
 and let $ \beta_{\alpha}$ be the maximum (over $\theta\in \Theta$) probability that $\theta$ does not belong to $\Theta_x(\alpha)$ when $x$ is randomized according to the model $p(\cdot|\theta)$, i.e.,
\begin{equation}
\label{betadef0}
    \beta_{\alpha}:= \sup_{\theta \in \Theta}
P\Big(\bigl\{x' \in X: \theta \notin \Theta_{x'}(\alpha)\bigr\}\Big|\theta\Big)\,.
\end{equation}
$\beta_{\alpha}$ is interpreted as the significan/p-value of the hypothesis that $\theta^\dagger \in \Theta_x(\alpha)$. In particular,
for $\alpha$ close to one $\Theta_x(\alpha)$ concentrates around the Maximum Likelihood Estimators of $\theta^\dagger$ and the probability $ \beta_{\alpha}$ that the hypothesis is true goes to zero (which corresponds to accurate side of the tradeoff between accuracy and robustness).
For $\alpha$ close to zero, $\Theta_x(\alpha)$  stretches over the whole set $\Theta$, and the probability $ \beta_{\alpha}$ that the hypothesis is true goes to one (which corresponds to the robust side of the tradeoff).
The next step of this approach is to employ the game/decision theoretic approach with $\Theta$ replaced by the smaller set $\Theta_x(\alpha)$, i.e., replace \eqref{eqgame} with
\begin{equation}\label{eqgame2}
\mathcal{\L}(\pi,d)=\E_{\theta\sim \pi, x\sim P(\cdot|\theta)}\E\big[\|\varphi(\theta)-d(x)\|^2\big]\,, \quad \pi \in \mathcal{P}(\Theta_x(\alpha))\,, d\,:\, X\rightarrow V\,,
\end{equation}
 compute a saddle point $(\pi^{\alpha},d^\alpha)$ for \eqref{eqgame2} ($d^\alpha=d_{\pi^\alpha}$),  identify the optimal estimator as $d^\alpha$ and its
 risk/uncertainty $\mathcal{R}(d^\alpha)$ as the value  of the game:
 \begin{equation}\label{Rstar0}
 \mathcal{R}(d^\alpha):=\mathcal{\L}(\pi^\alpha,d^\alpha)\,.
 \end{equation}
\subsection{Main result}
One of our main results (Theorems \ref{thm_duala} and \ref{pioptreduced}) is that this optimal decision and its associated risk/uncertainty (defined as the value of the game at the Nash equilibrium)
can be identified as the center and the radius of the smallest ball enclosing the image of $\Theta_x(\alpha)$ under $\varphi$ (see Fig.~\ref{figminball}).
Furthermore, we present rigorous and practical algorithms (Algorithms \ref{Alg: ODAD General} and \ref{Alg: Miniball})\footnote{Python implementation for these algorithms can be found in \url{https://github.com/JPLMLIA/UQ4K}} with approximation accuracy guarantees
 for computing
that minimum enclosing ball based on the observation that optimal mixed strategies (priors) $\pi$ for Player I can be restricted to be supported at a maximum of $\dim(V)+1$ points located on the boundary of that ball.

\subsection{Coin toss}
\label{sec_coin}
At the cost of some forward referencing, we will now describe an application of our proposed problem to the estimation of the probability that a coin lands on heads based on the observation of $n$ independent tosses of that coin.

\subsubsection{$n$ tosses of a single coin.}
In this example, we estimate the probability that a biased coin lands on heads from the observation of $n$ independent tosses of that coin.
Specifically, we consider flipping a coin $Y$  which has an unknown probability $\theta^{\dagger}$ of coming heads $(Y=1)$ and
probability $1-\theta^{\dagger}$ coming up tails $(Y=0)$.
Here $\Theta:=[0,1]$, $X=\{0,1\}$, and the model $P:\Theta \rightarrow \mathcal{P}(\{0,1\})$ is
$P(Y=1|\theta)= \theta$ and
$P(Y=0|\theta)= 1-\theta$.
We toss the coin $n$ times generating a sequence
of i.i.d.~Bernoulli variables $(Y_{1},\ldots, Y_{n})$ all with the same unknown parameter
 $\theta^{\dagger} \in [0,1]$,
and let $x:=(x_{1},\ldots, x_{n})\in \{0,1\}^{n}$ denote the outcome of the experiment.
Let $h=\sum_{i=1}^{n}{x_{i}}$ denote the number of heads observed and $t = n - h$
 the number of tails.
Then the model for the $n$-fold toss is
\begin{equation}
\label{tossmodel0}P(x|\theta)=\prod_{i=1}^{n}{\theta^{x_{i}}(1-\theta)^{1-x_{i}}}=\theta^{h}
(1-\theta)^{t}
\end{equation}
and, given an observation $x$, the MLE is $\theta = \frac{h}{n}$ so that the relative likelihood \eqref{eqrellik} is\footnote{Although the fact that $\bar{p}(x|0)=\bar{p}(x|1)=0$ violates our positivity assumptions (described in Sec.~\ref{sec_intro}) on the model in our framework, in this case this technical restriction can be removed, so we can still use this example as an illustration.}
\begin{equation}
\label{tossmodel}
\bar{p}(x|\theta)=\frac{\theta^{h}
(1-\theta)^{t}}{\bigl(\frac{h}{n}\bigr)^{h}
\bigl(\frac{t}{n}\bigr)^{t}}  \, .
\end{equation}
We seek to estimate $\theta$, so let $V=\R$ and let the quantity of interest $\varphi:\Theta \rightarrow V$  be the identity function $\varphi(\theta)=\theta$. In this case, given $\alpha \in [0,1]$, the likelihood region
\begin{equation}
\Theta_{x}(\alpha) = \Bigl\{\theta \in [0,1]: \frac{\theta^{h}
(1-\theta)^{t}}{\bigl(\frac{h}{n}\bigr)^{h}
\bigl(\frac{t}{n}\bigr)^{t}} \geq
   \alpha\Bigr\}\,
\end{equation}
 constrains the support of priors to points with relative likelihood larger than $\alpha$.
Using Theorem \ref{pioptreduced} with $m=dim(V)+1=2$, one can compute a saddle point $(\pi^\alpha,d^\alpha)$ of the game \eqref{eqgame2}
as
\begin{equation}
\pi^\alpha=w \updelta_{\theta_{1}}+(1-w) \updelta_{\theta_{2}}\text{ and } d^\alpha=w\theta_{1}+(1-w) \theta_{2}
\end{equation}
where $w,  \theta_{1}, \theta_2$  maximize the variance
\begin{equation}
\begin{cases}
    \text{Maximize}&\quad w \theta_1^2 + (1 - w) \theta_2^2 - \left( w \theta_1 + (1 - w) \theta_2 \right)^2 \\
    \text{over}& \quad 0 \leq w \leq 1,\quad \theta_1, \theta_2 \in [0, 1] \\
    \text{subject to}& \quad \frac{\theta_i^{h}
(1-\theta_i)^{t}}{\bigl(\frac{h}{n}\bigr)^{h}
\bigl(\frac{t}{n}\bigr)^{t}} \geq
   \alpha , \qquad i = 1, 2\, .
\end{cases}
\end{equation}

\begin{figure}[h]
    \centering
    \includegraphics[width = 0.8\textwidth]{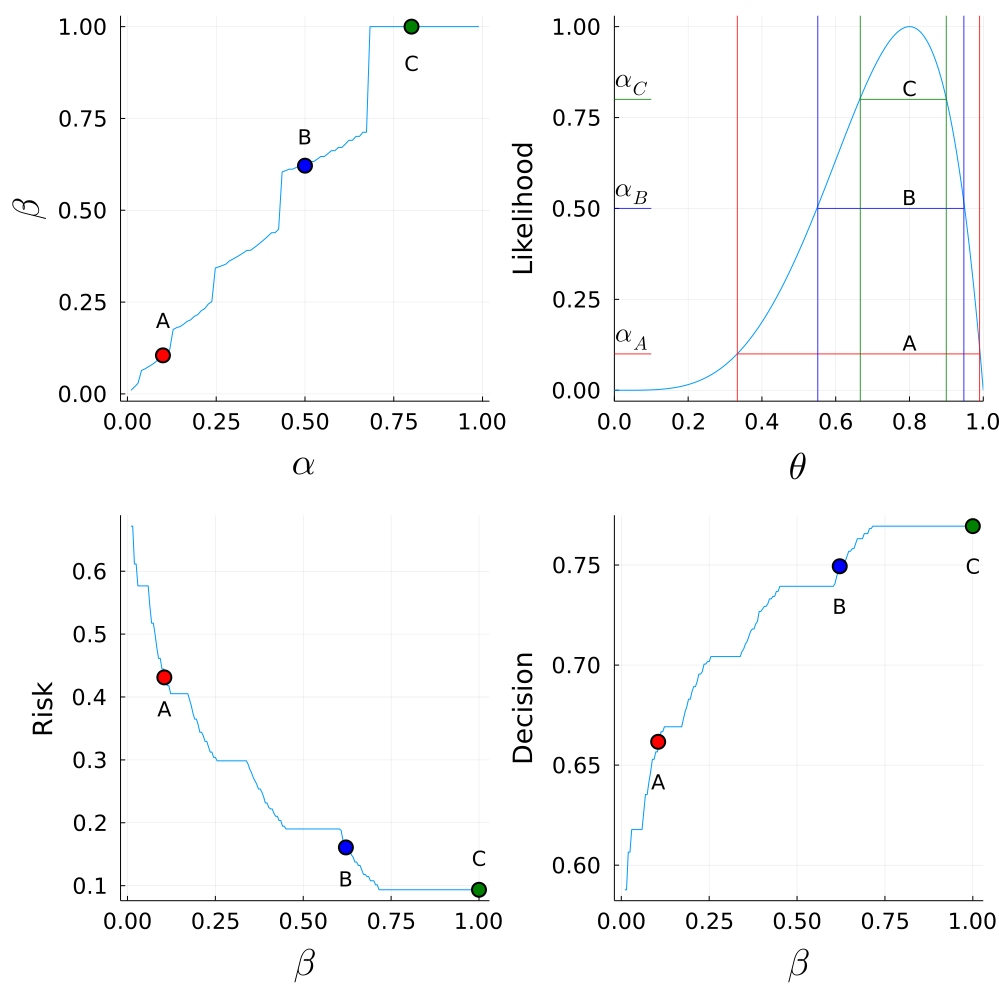}
    \caption{$\alpha - \beta$ relation, likelihood level sets, risk value and decision for different choices of $\alpha$ (and consequently $\beta$) for the 1 coin problem after observing 4 heads and 1 tails. Three different values in the $\alpha-\beta$ curve are highlighted across the plots}
    \label{fig:coins1d}
\end{figure}

 Equation \eqref{betadef0} allows us to compute $\beta \in [0,1]$ as a function of $\alpha\in [0,1]$.
The solution of the optimization problem can be found by finding the minimum enclosing ball of the set $\Theta_x(\alpha)$, which in this  $1$-D case is also subinterval of the interval $[0,1]$.
 For
  $n=5$ tosses resulting in   $h = 4$ heads and $t = 1$ tails, Figure \ref{fig:coins1d} plots
 (1) $\beta$, the relative likelihood, its level sets and minimum enclosing balls as a function of  $\alpha$, and (2) The risk $\mathcal{R}(d^\alpha)= $\eqref{Rstar0}  and optimal decision $d^\alpha$ as a function of $\beta$. Three different points in the $\alpha - \beta$ curve are highlighted. Note that as $\alpha$ goes from $0$ to $1$, the relative likelihood region $\Theta_x(\alpha)$ gets smaller (it shrinks towards the MLE),
 the optimal estimator $d^\alpha$ goes from the center of the worst case interval to the MLE estimate, the risk (variance) of the estimator shrinks
 (which corresponds to an increase in accuracy),
  but the confidence $1-\beta(\alpha)$ in that risk (the probability $\beta(\alpha)$ that $\theta^\dagger\in \Theta_x(\alpha)$) also shrinks towards zero (which corresponds to a loss of robustness).

\begin{figure}[h]
    \centering
    \includegraphics[width = 0.95\textwidth]{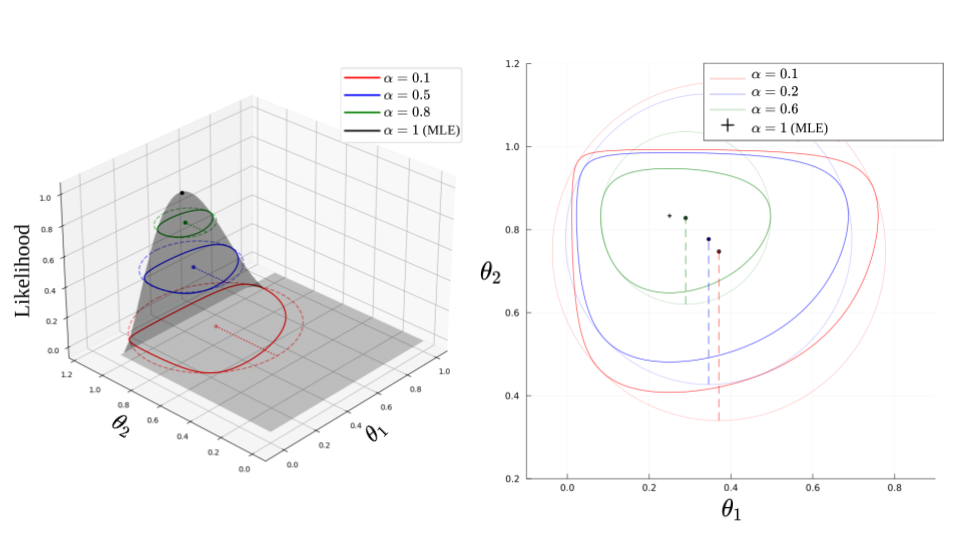}
    \caption{2D likelihood level sets and minimum enclosing balls for different values of $\alpha$, visualized as level sets of the likelihood function (left) and projected onto a 2D plane (right)}
    \label{fig:2dcoinviews}
\end{figure}

\subsubsection{$n_1$ and $n_2$ tosses of two  coins.}

We now consider the same problem with two independent coins with unknown probabilities $\theta^\dagger_1, \theta^\dagger_2$.  After tossing each coin $i$ $n_i$   times, the observation $x$ consists of
   $h_i$ heads and $t_i$ tails for each $i$,
  produce a 2D relative likelihood function on $\Theta = [0,1]^2$ given by
\begin{equation}
    \bar{p}(x|\theta_1, \theta_2) = \frac{\theta_1^{h_1}
(1-\theta_1)^{t_1}}{\bigl(\frac{h_1}{n_1}\bigr)^{h_1}
\bigl(\frac{t_1}{n_1}\bigr)^{t_1}} \frac{\theta_2^{h_2}
(1-\theta_2)^{t_2}}{\bigl(\frac{h_2}{n_2}\bigr)^{h_2}
\bigl(\frac{t_2}{n_2}\bigr)^{t_2}}\, .
\end{equation}

Figure \ref{fig:2dcoinviews} illustrates the level sets $\bar{p}(x|\theta_1, \theta_2) \geq \alpha$ and their corresponding bounding balls for $h_1 = 1, t_1 = 3, h_2 = 5, t_2 = 1$ and different values of $\alpha \in [0,1]$.

\subsection{Structure of the paper}
This article is organized as follows: In Sec.~\ref{sec_intro}, we formalize the UQ problem and review the three previous approaches to the problem, emphasizing the limitations addressed with our method. In Sec.~\ref{sec:UQ4k}, we introduce a new kind of uncertainty quantification based on the minimum enclosing ball, and in Sections \ref{sec:compfram} and \ref{sec_miniball}  we introduce the computational framework and our minimum enclosing ball algorithms. Sec.~\ref{sec:example} presents numerical illustrations of the efficacy and scope of our approach. Sec.~\ref{sec_generalloss} generalizes the loss and rarity assumptions.
Sec.~\ref{sec:thmproofs} presents supporting theorems and proofs.

\section{Previous approaches to UQ}
\label{sec_intro}
We begin by formalizing the UQ problem introduced in the previous section. Let $\varphi:\Theta \rightarrow V$ be a quantity of interest, where $V$ (the space of predictions) is a finite-dimensional vector space and
$\Theta$ (the space of parameters) is a compact set.
Let $X$ (the space of data) be a measurable space and write $\mathcal{P}(X)$ for the set of probability distributions on $X$.
Consider a  model $P:\Theta \rightarrow \mathcal{P}(X)$ representing the dependence of the distribution of a data point
$x\sim P(\cdot|\theta)$ on the value of the parameter $\theta\in \Theta$. Throughout, we use $\|\cdot\|$ to denote the Euclidean norm.
We are then interested in solving the following problem.

\begin{Problem}\label{pb1}
Let $\theta^\dagger$ be an unknown element of $\Theta$.
Given an observation $x\sim P(\cdot|\theta^\dagger)$ of data,  estimate $\varphi(\theta^\dagger)$ and quantify the  uncertainty (accuracy/risk) of the estimate.
\end{Problem}
We assume that we can write a probability density function for any $P(\cdot|\theta)$. More formally, we assume that $P$ is a dominated model with positive densities, that is,
 for each $\theta \in \Theta$, $P(\cdot|\theta)$ is
  defined by a (strictly) positive density $p(\cdot|\theta):X \rightarrow \R_{>0}$
with respect to a measure $\nu \in \mathcal{P}(X)$,
such that, for each  measurable subset $A$ of $X$,
\begin{equation}
P(A|\theta)=\int_{A}p(x'|\theta)d\nu(x'), \quad \theta \in \Theta.
\end{equation}
\subsection{The three main  approaches to UQ}\label{secthreemain}
Problem \ref{pb1} is a fundamental Uncertainty Quantification (UQ) problem, and there are essentially three main approaches
for solving it. We now describe them when $V$ is a Euclidean space with the $\ell^{2}$ loss function.

\subsubsection{Worst-case}
In a different setting, essentially where the set $\Theta$ consists of probability measures,
the OUQ framework \cite{owhadi2013optimal} provides a worst-case analysis for providing rigorous uncertainty bounds.  In the setting of this paper,
in the absence of data (or ignoring the data $x$), the  (vanilla) worst-case (or robust optimization) answer is
to estimate $\varphi(\theta^\dagger)$ with the minimizer $d^* \in V$ of the worst-case error 
\begin{equation}
\label{worstcase}
\mathcal{R}(d):= \max_{\theta \in \Theta}{\bigl[\|\varphi(\theta)-d\|^2\bigr]}\,.
\end{equation}
In that approach, $(d^*,\mathcal{R}(d^*))$ are therefore identified as the center and squared radius of the minimum enclosing ball of $\varphi(\Theta)$.

\subsubsection{Bayesian}
 The  (vanilla) Bayesian (decision  theory) approach (see e.g.~Berger \cite[Sec.~4.4]{berger2013statistical}) is to assume that $\theta$ is sampled from a prior distribution $\pi \in \mathcal{P}(\Theta)$, and approximate $\varphi(\theta^\dagger)$ with the minimizer $d_\pi(x)\in V$ of the Bayesian posterior risk 
\begin{equation}
\label{risk_bayes}
\mathcal{R}_{\pi}(d):= \E_{\theta \sim \pi_{x}}{\bigl[\|\varphi(\theta)-d\|^2\bigr]}, \quad d \in V,
\end{equation}
associated with the decision $d \in V$,
where
\begin{equation}
\label{eq_conditional0}
  \pi_{x}:=\frac{p(x|\cdot)\pi}{\int_{\Theta}{p(x|\theta)d\pi(\theta)}}
\end{equation}
is the posterior measure
determined by the likelihood $p(x|\cdot)$, the prior $\pi$ and the observation $x$.
The minimizer $d_\pi(x)$ of \eqref{risk_bayes} is the posterior distribution mean
 \begin{equation}\label{eqkjhedegdudyg}
 d_\pi(x):=\E_{\theta \sim \pi_{x}}[\varphi(\theta)]\,
 \end{equation}
  and  the uncertainty is quantified by the posterior variance
\begin{equation}
\label{postvarrisk_bayes}
\mathcal{R}_{\pi}(d_\pi(x)):= \E_{\theta \sim \pi_{x}}\big[ \|\varphi(\theta)-d_\pi(x)\|^2\big]\,.
\end{equation}

\subsubsection{Game/decision theoretic}
 The Wald's game/decision theoretic approach is to  consider  a two-player zero-sum game where player I selects   $\theta \in \Theta$, and player II selects a decision function $d:X \rightarrow V$ which estimates the quantity of interest
 $\varphi(\theta)$ (given the data $x\in X$), resulting in the loss
\begin{equation}
\label{LalphadefW0l2}
 \L(\theta,d):=\E_{x \sim P(\cdot|\theta)}\bigl[\|\varphi(\theta)-d(x)\|^2\bigr], \qquad    \theta \in \Theta,\,\,
   d:X \rightarrow  V  ,
\end{equation}
for player II. Such a game will normally not have a saddle point, so following von Neumann's approach \cite{VNeumann28}, one randomizes both players' plays to identify a Nash equilibrium. To that end, first observe that, for the quadratic loss considered here (for ease of presentation), because of the convexity of the loss in $d$, only the choice of player I needs to be randomized. Letting
$\pi \in \mathcal{P}(\Theta)$ be a probability measure randomizing the play of player I, we consider the lift
\begin{equation}
\label{LalphadefW0lifted}
 \L(\pi,d):=\E_{\theta \sim \pi}\E_{x \sim P(\cdot|\theta)}\bigl[\|\varphi(\theta)-d(x)\|^2\bigr], \qquad    \pi \in \mathcal{P}(\Theta),\,\,
   d:X \rightarrow  V,
\end{equation}
of the game \eqref{LalphadefW0l2}.
A minmax optimal estimate of $\varphi(\theta^\dagger)$ is then obtained by identifying a Nash equilibrium (a saddle point) for \eqref{LalphadefW0lifted}, i.e. $\pi^* \in \mathcal{P}(\Theta)$ and $d^*:X \rightarrow  V$ satisfying
\begin{equation}
 \L(\pi,d^*) \leq \L(\pi^{*},d^*) \leq  \L(\pi^{*},d), \qquad \pi \in \mathcal{P}(\Theta),\,\, d:X \rightarrow V.
 \end{equation}
Consequently, an optimal strategy of player II is then the posterior mean $d_{\pi^*}(x)$ of the form \eqref{eqkjhedegdudyg} determined by a  worst-case  measure
and optimal randomized/mixed strategy for player I
\begin{equation}
\label{worstprior}
\pi^*:\in \arg \max_{\pi\in \mathcal{P}(\Theta)} \E_{\theta \sim \pi, x\sim P(\cdot,\theta)}\big[ \|\varphi(\theta)-d_\pi(x)\|^2\big].
\end{equation}
To connect with the Bayesian framework we observe (by changing the order of integration) that the
 Wald's risk \eqref{LalphadefW0lifted} can be written as the average
 \begin{equation}
\label{LalphadefW4}
 \L(\pi,d):=\E_{x \sim X_{\pi} }\bigl[\mathcal{R}_{\pi}(d(x))\bigr]\,
\end{equation}
of the Bayesian decision risk $\mathcal{R}_{\pi}(d(x))$ ($=$\eqref{risk_bayes} for $d=d(x)$) determined by the prior $\pi$  and decision $d(x)$
with respect to the $X$-marginal distribution
\begin{equation}
\label{xmarg0}
 X_{\pi}:=\int_{\Theta}{P(\cdot|\theta)d\pi(\theta)}
\end{equation}
 associated with the prior $\pi$ and the model
$P$.
Therefore, the Wald framework identifies a worst-case prior \eqref{worstprior}, while the prior used in Bayesian decision theory is specified by the practitioner.

 \begin{figure}[htp]
    \centering
    \includegraphics[width= 0.7\textwidth]{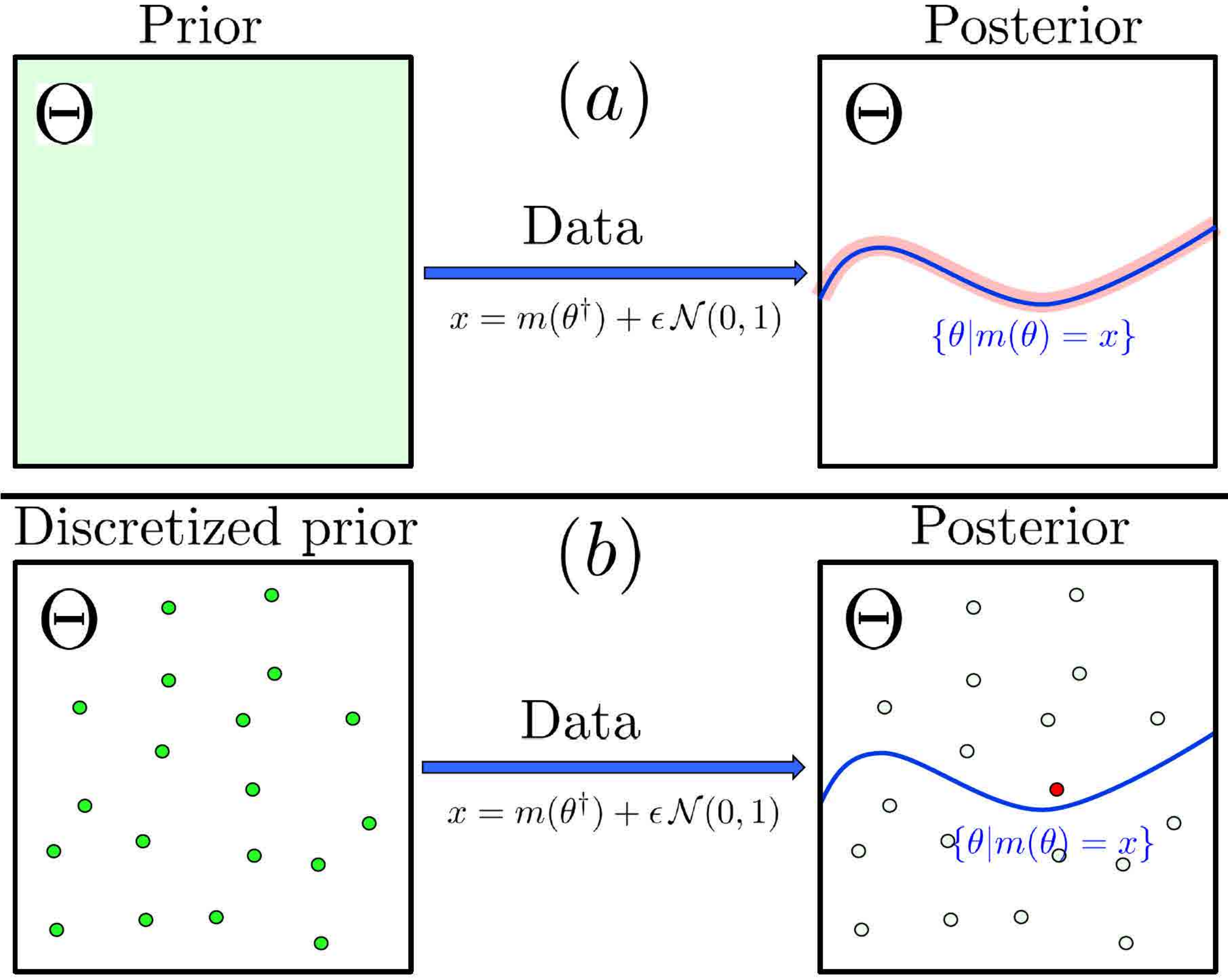}
    \caption{Curse of dimensionality in discretizing the prior. The data is of the form $x=m(\theta)+\epsilon \mathcal{N}(0,1)$ where   $m$ is  deterministic and   $\epsilon \mathcal{N}(0,1)$ is small  noise. (a) For the continuous prior, the posterior concentrates around  $\mathcal{M}:=\{\theta \in \Theta |m(\theta)=x\}$. (b) For the discretized prior, the posterior concentrates on the delta Dirac that is the closest to $\mathcal{M}$. }
    \label{figcurse}
\end{figure}

\subsection{Limitations of the three main approaches to UQ}
All three approaches described in Section \ref{secthreemain} have limitations in terms of accuracy, robustness, and computational complexity. Although the worst-case approach is robust,  it appears unfavorable in terms of accuracy and data assimilation.
The  Bayesian approach,  on the other hand, suffers from the computational complexity of estimating the posterior distribution
and from brittleness \cite{owhadi2015brittleness} with respect
to the choice of prior along with Stark's admonition \cite{Stark}
``your prior can bite you on the posterior.''
Although Kempthorne \cite{kempthorne1987numerical} develops a rigorous numerical procedure with convergence guarantees for solving the equations of Wald's statistical decision theory which
  appears  amenable to computational complexity analysis,
 it suffers from the curse of dimensionality (see Fig.~\ref{figcurse}).
 This can be understood from the fact that the risk associated with
the worst-case measure  in the Wald framework is an average over the observational
 variable $x \in X$ of the conditional risk, conditioned on the observation
 $x$. Consequently, for a discrete approximation of a worst-case measure,
 after an observation is made, there may be insufficient mass near the places where the conditioning will provide a good estimate of the appropriate conditional measure.
 Indeed,
in the proposal \cite{OwhadiScovelMachineWald} to develop Wald's statistical decision theory along the lines of Machine Learning, with its dual focus on performance and computation, it was observed that
\begin{quote}
``Although Wald’s theory of Optimal Statistical Decisions has resulted in many important statistical discoveries, looking through the three
Lehmann symposia of Rojo and P{\'e}rez-Abreu \cite{Rojo:optimality1} in 2004, and Rojo \cite{Rojo:optimality2,Rojo:optimality3} in 2006
and 2009, it is clear that the incorporation of the analysis of the computational algorithm, both in terms of its computational efficiency and its statistical optimality, has
not begun.''
\end{quote}
 Moreover,  one might ask why, after seeing the data, one is choosing a worst-case measure
which optimizes the average \eqref{LalphadefW4} of the Bayesian risk \eqref{postvarrisk_bayes}, instead of choosing it to optimize the value of the risk $\mathcal{R}_{\pi}(d_\pi(x))$ at the value of the observation $x$. It is therefore desirable for an approach to UQ to successfully assimilate the observed data, to avoid requiring having to manually select a prior and to have a data-dependent notion of risk. In Section \ref{sec:UQ4k}, we will propose a framework with all these properties. A comparison of the properties of all the mentioned methods can be found in Table \ref{table:methods}.
\begin{table}[htp]
\centering
\small
\begin{tabular}{cc|c|cl}
                & \begin{tabular}[c]{@{}c@{}}Makes use of\\ the observed data \end{tabular} & \begin{tabular}[c]{@{}c@{}}No need to\\ manually specify prior\end{tabular} & \begin{tabular}[c]{@{}c@{}} Risk depends\\ on the observed data \end{tabular}&  \\ \cline{2-4}
Worst case      & ×                & \checkmark                                                                           & ×                        &  \\
Bayesian        & \checkmark                & ×                                                                           & \checkmark                        &  \\
Decision Theory & \checkmark                & \checkmark                                                                           & ×                        &  \\
UQ4K (Section \ref{sec:UQ4k}) & \checkmark                & \checkmark                                                                           & \checkmark                        &
\end{tabular}
\caption{Comparison of the three previous approaches to uncertainty quantification with our proposed method in Section \ref{sec:UQ4k}}
\label{table:methods}
\end{table}

\section{Uncertainty Quantification of the 4th Kind}
\label{sec:UQ4k}
\subsection{Basic definitions}

In this paper, we introduce a framework which is a hybrid between Wald's statistical decision theory \cite{Wald1950},  Bayesian decision theory \cite[Sec.~4.4]{berger2013statistical}, robust optimization and hypothesis testing. Here we describe its components for simplicity
when the loss function is the $\ell^{2}$ loss. Later in Section \ref{sec_generalloss} we develop the framework for general loss functions.

\subsubsection{Rarity assumption on the data}
In \cite[Pg.~576]{owhadi2015brittleness}  it was demonstrated that one could alleviate the brittleness of Bayesian inference (see  \cite{owhadi2015brittleness2,owhadi2016brittleness})  by restricting  to priors $\pi$ for which the  observed data $x$ is not rare, that is,
\begin{equation}
\label{Palpha1}
 p(x):= \int_{\Theta}{p(x|\theta)d\pi(\theta)} \geq \alpha\,
 \end{equation}
 according to the density of the $X$-marginal determined by $\pi$ and the model $P$,
for some $\alpha >0$.
In the  proposed framework,
we consider playing a game {\em after observing the data $x$}  whose loss function is defined by
the Bayesian decision risk
$\mathcal{R}_{\pi}(d)$ \eqref{risk_bayes}, where player I selects a prior $\pi$ subject to  a
  {\em rarity assumption} ($\pi \in \mathcal{P}_{x}(\alpha)$) and player II selects a decision $d \in V$. The rarity assumption considered here is
\begin{equation}
\label{rarityass0}  \mathcal{P}_{x}(\alpha):=
\Bigl\{ \pi \in \mathcal{P}(\Theta): \supp(\pi) \subset \bigl\{\theta \in \Theta: p(x|\theta)\geq
   \alpha\bigr\}\Bigr\}\, .
 \end{equation}
Since $p(x|\theta)\geq \alpha$ for all $\theta $ in the support of any $\pi \in \mathcal{P}_{x}(\alpha)$
it follows that  such a $\pi$ satisfies \eqref{Palpha1} and therefore is sufficient to prevent Bayesian brittleness.

\subsubsection{The relative likelihood for the rarity assumption}
Observe in \eqref{eq_conditional0} that the map from the prior $\pi$ to posterior $\pi_{x}$ is scale-invariant in the likelihood
$p(x|\cdot)$ and that the effects of
scaling the likelihood  in the rarity assumption
 can be undone by modifying $\alpha$.
 Consequently, we scale the likelihood function
\begin{equation}
\label{def_rellikelihood}
\bar{p}(x|\theta):=\frac{p(x|\theta)}{\sup_{\theta \in \Theta}{p(x|\theta)}}, \quad \theta \in \Theta,
\end{equation}
to its {\em relative likelihood} function
\begin{equation}
\label{def_rellikelihood2}
    \bar{p}(x|\cdot):\Theta \rightarrow (0,1]\,.
\end{equation}
According to Sprott  \cite[Sec.~2.4]{sprott2008statistical},
the relative likelihood  measures the
 plausibility of any parameter value $\theta$ relative to a maximum likely $\theta$ and
summarizes the information about $\theta$  contained in the sample $x$.
See Rossi \cite[p.~267]{rossi2018mathematical} for its large sample connection with the $\X^{2}_{1}$ distribution and several examples of the relationship between likelihood regions and confidence intervals.

For $x \in X$ and $\alpha \in [0,1]$, let \eqref{eqnormrarset}
denote the corresponding {\em likelihood region} and, updating
\eqref{rarityass0},
 redefine the rarity assumption by
\begin{equation}
\label{rarityass}  \mathcal{P}_{x}(\alpha):=
  \mathcal{P}\bigl(\Theta_{x}(\alpha)\bigr)\, .
 \end{equation}
That is, the rarity constraint $\mathcal{P}_{x}(\alpha)$
 constrains priors to have support on the likelihood region $\Theta_{x}(\alpha)$.
We will now define the confidence level of the family $\Theta_{x}(\alpha), x \in X$.

\subsubsection{Significance/confidence level}
\label{sec_significance}
For a given $\alpha$, let the {\em significance} $\beta_\alpha$ at the value $\alpha$ be the maximum (over $\theta \in \Theta$) of the probability that a data $x'\sim P(\cdot|\theta)$ does not satisfy the rarity assumption $\bar{p}(x'|\theta)\geq \alpha$, i.e.,
\begin{eqnarray}
\label{betadef}
    \beta_{\alpha}&:=& 
 \sup_{\theta \in \Theta}
{\int{\mathds{1}_{\{\bar{p}(\cdot|\theta) < \alpha\} }(x')p(x'|\theta)d\nu(x')}}\,,
\end{eqnarray}
where, for fixed $\theta$,
$ \mathds{1}_{\{\bar{p}(\cdot|\theta) < \alpha\}} $ is the indicator function
of the set $\{x' \in X: \bar{p}(x'|\theta) < \alpha\}$.
Observe that, in the setting of hypothesis testing, (1) $\beta_\alpha$ can be interpreted as the p-value associated with the  hypothesis that
the rarity assumption is not satisfied (i.e.~the  hypothesis that $\theta$ does not belongs to the  set \eqref{eqnormrarset}), and (2) $1-\beta_\alpha$ can be interpreted as the confidence level associated with
the rarity assumption (i.e.~the smallest probability that $\theta$  belongs to the  set \eqref{eqnormrarset}).
Therefore, to select  $\alpha \in [0,1]$, we set a {\em significance level} $\beta^{*}$ (e.g.~$\beta^{*}=0.05$) and choose
$\alpha$
to be the largest value such that the significance  at $\alpha$ satisfies $\beta_{\alpha} \leq \beta^{*}$.

\subsubsection{Connection to the Likelihood Ratio Test}
The connection of the likelihood region we are defining and confidence sets can be more explicitly seen via the likelihood ratio test and the inversion of that test to produce a confidence set. Namely, we define a hypothesis test,
\begin{equation} \label{eq:hyp_test}
    H_0: \theta = \theta_0 \quad \text{versus} \quad H_1: \theta \neq \theta_0, \quad \theta \in \Theta.
\end{equation}
The likelihood ratio test statistic is defined by,
\begin{equation} \label{eq:lr_test_stat}
    \Lambda(\theta_0, x) = \frac{p(x | \theta_0)}{\sup_{\theta \in \Theta} p(x|\theta)}.
\end{equation}
Note, $\Lambda(\theta_0, x) = \bar{p}(x|\theta)$, i.e., Equation~\eqref{def_rellikelihood} and Equation~\eqref{eq:lr_test_stat} are equivalent. In the hypothesis testing setting, one would next define a critical value, $c > 0$, such that when $\Lambda(\theta_0, x) \leq c$, the null in test~\eqref{eq:hyp_test} is rejected. Ideally, $c$ is chosen such that the probability of false rejection under the null hypothesis is capped at some probability, $\alpha \in (0, 1)$, defining an $\alpha$-level test. The type 1 error probability control is mathematically characterized as,
\begin{equation} \label{eq:type1_err_control}
     P \left(\Lambda(\theta_0, x) \leq c | \theta_0 \right) \leq \alpha.
\end{equation}
The error control criterion of Equation~\eqref{eq:type1_err_control} can be used to define an acceptance region in the sample space defined as follows,
\begin{equation} \label{eq:acceptance_region}
    A(\theta_0, c) = \{ x \in X : \Lambda(\theta_0, x) \geq c \}.
\end{equation}
The type 1 error control then implies $P(x \in A(\theta_0, c)|\theta_0) \geq 1 - \alpha$ and acts in a similar way to the control exerted by $\beta_\alpha$ in  Equation~\eqref{betadef}. In that equation, $\alpha$ is chosen as the supremum over $\theta \in \Theta$ of $P \left(\Lambda(\theta, x) \leq c | \theta \right)$, thus making $\alpha$ a constant independent of $\theta$. A viable alternative is to consider a curve $\alpha(\theta)$, where
\begin{equation}
         P \left(\Lambda(\theta_0, x) \leq c | \theta_0 \right) \leq \alpha(\theta_0) \quad \forall \; \theta_0,
\end{equation}
and then redefining \eqref{eqnormrarset} accordingly. In this work we consider the fixed $\alpha$ model for simplicity.

In some cases, the false rejection control can be done cleanly. For instance, in the context of a noise model where $x = \theta + \varepsilon, \; \varepsilon \sim N(0, I), \; \theta \in \mathbb{R}^n$, it can be shown that the log-likelihood ratio follows the distribution,
\begin{equation}
    -2 \log \Lambda(\theta_0, x) \sim \chi^2_n,
\end{equation}
i.e., the log-likelihood ratio is distributed as a chi-squared distribution with $n$ degrees of freedom, allowing $c$ to be exactly chosen. In the event the test statistic distribution cannot be exactly known, asymptotic results such as that shown in Theorem~\ref{thm_asympt} can provide similar results.

As discussed in Chapter 9 of Casella/Berger \cite{casella2003statistical}, one can think about inverting an $\alpha$-level hypothesis test such as Test~\eqref{eq:hyp_test} to obtain a $1 - \alpha$ confidence set $C(x) \subset \Theta$, such that $P \left(\theta^* \in C(x)|\theta^* \right) \geq 1 - \alpha$, where $\theta^*$ is the true parameter value. The inverting is performed with the acceptance region of Equation~\eqref{eq:acceptance_region} and the set is defined as follows,
\begin{equation} \label{eq:conf_set}
    C(x) = \{\theta_0 \in \Theta : x \in A(\theta_0, c)\},
\end{equation}
Note the equivalence between Equation~\eqref{eq:conf_set} directly above and Equation~\eqref{eqnormrarset} from the previous section. By the type 1 error control, we have,
\begin{equation}
    P(\theta^* \in C(x)|\theta^*) = P(x \in A(\theta^*, c)|\theta^*) \geq 1 - \alpha,
\end{equation}
implying,
\begin{equation}
    P(\theta^* \notin C(x)|\theta^*) \leq \alpha,
\end{equation}
providing an additional view of the equivalence with Equation~\eqref{betadef}. As such, the \textit{relative likelihood} and \textit{rarity condition} can be seen through the more traditional statistical lens of the \textit{likelihood ratio} and \textit{type 1 error control} in the classical hypothesis testing setting.

\begin{Remark}
\label{rmk_surrogate}
For models where the maximum of the likelihood function
\[M(x'):=\sup_{\theta \in \Theta}{p(x'|\theta)},\quad x' \in X,\] is expensive to compute but for which there exists an efficiently computable upper approximation $M'(x') \geq M(x'), \, x'\in X$ available, the surrogate
\begin{equation}
\label{def_rellikelihooda}
\bar{p}'(x'|\theta):=\frac{p(x'|\theta)}{M'(x')},\quad x' \in X,
\end{equation}
to the relative likelihood may be used in place of \eqref{def_rellikelihood}. If we let $\beta'_{\alpha}$ denote the value
determined in  \eqref{betadef} using the surrogate \eqref{def_rellikelihooda} and $\Theta'_{x}(\alpha)$ denote the corresponding likelihood region, then we have
$\beta_{\alpha}\leq \beta'_{\alpha}\, \, \text{and}\,\, \Theta'_{x}(\alpha) \subset \Theta_{x}(\alpha),\,\, \alpha \in [0,1]$. Consequently, obtaining
$\beta'_{\alpha}\leq \beta^{*}$  for significance level $\beta^{*}$ implies that $\beta_{\alpha}\leq \beta^{*}$.

As an example, for an $N$-dimensional Gaussian model with $p(x'|\theta)=\frac{1}{(\sigma\sqrt{2\pi})^{N}}e^{-\frac{1}{2\sigma^{2}}\|x'-\theta\|^{2}}$
with $\Theta:=[-\tau,\tau]^{N}$,  the elementary upper bound
\[ M(x'):=\sup_{\theta \in \Theta}{p(x|\theta)} \leq \frac{1}{(\sigma\sqrt{2\pi})^{N}}\]
the surrogate relative likelihood defined in \eqref{def_rellikelihooda} becomes
\[ \bar{p}'(x'|\theta):= e^{-\frac{1}{2\sigma^{2}}\|x'-\theta\|^{2}}.\]

\end{Remark}

\subsubsection{Posterior game and risk}

After observing $x \in X$,
we now consider playing a game
using the loss
\begin{equation}
\label{Lalphadefa}
 \L(\pi,d):=\E_{\theta \sim \pi_{x}}\bigl[\|\varphi(\theta)-d\|^2\bigr], \quad    \pi \in
  \mathcal{P}_{x}(\alpha), d \in V  ,
\end{equation}
where $\pi_{x}$ is the posterior \eqref{eq_conditional0}. In \eqref{Lalphadefa}, we think about the maximizing player as choosing the prior $\pi$ and then the loss function depends on the posterior $\pi_x$. Since the likelihood $p(x|\cdot)$ is positive and the data $x$ is fixed, we have $\supp(\pi_{x})=\supp(\pi)$ and the map \eqref{eq_conditional0} mapping the prior $\pi$ to the posterior $\pi_x$ is bijective. Therefore one can equivalently consider the choice of the maximizing player to be directly maximizing the posterior $\pi_x$ instead of the prior $\pi$ that is later mapped to the posterior using the data. The optimal choices of these two games can then be mapped by \eqref{eq_conditional0} and its inverse.
Using the invariance of  posterior \eqref{eq_conditional0} under the scaling of the likelihood function $p(x|\cdot)$ we write the posterior in terms of the relative likelihood  \eqref{def_rellikelihood} as
\begin{equation}
\label{eq_conditional}
  \pi_{x}:=\frac{\bar{p}(x|\cdot)\pi}{\int_{\Theta}{\bar{p}(x|\theta)d\pi(\theta)}}\, .
\end{equation}
Therefore for simplicity one directly considers a game using the loss
\begin{equation}
\label{Lalphadefa2}
 \L(\pi,d):=\E_{\theta \sim \pi}\bigl[\|\varphi(\theta)-d\|^2\bigr], \quad    \pi \in
  \mathcal{P}_{x}(\alpha), d \in V  .
\end{equation}
Recall that a pair $(\pi^{\alpha},d^\alpha) \in \mathcal{P}_{x}(\alpha) \times V$ is a saddle point
of the game \eqref{Lalphadefa2} if
\begin{equation*}
    \label{eq_saddle}
\L(\pi, d^{\alpha}) \leq \L(\pi^{\alpha},d^{\alpha}) \leq \L(\pi^{\alpha},d),\qquad  \pi  \in \mathcal{P}_{x}(\alpha), \,\,  d \in V .
\end{equation*}
We then have the following theorem.

\begin{Theorem}
\label{thm_saddles}
Consider $x \in X$, $\alpha \in [0,1]$, and suppose that
the relative likelihood $p(x|\cdot)$ and the quantity of interest $\varphi:\Theta \rightarrow V$ are continuous. The loss function $\L$ for the game \eqref{eqgame2} ($=$\eqref{Lalphadef}) has saddle points
and
a pair $(\pi^{\alpha},d^\alpha) \in \mathcal{P}_{x}(\alpha) \times V$ is a saddle point
for $\L$  if and only if
 \begin{equation}
  \label{dopt}
 d^\alpha:=\E_{\pi^\alpha}{[\varphi]}\,
 \end{equation}
and
\begin{equation}
\label{piopt}
\pi^{\alpha} \in \arg \max_{\pi \in  \mathcal{P}_{x}(\alpha)}\E_{\pi}{\bigl[\|\varphi-\E_{\pi}{[\varphi]}\|^{2}\bigr]}\, .
\end{equation}
Furthermore
 the associated risk  \textup{(}the value of the  two person game  \eqref{eqgame2} ($=$\eqref{Lalphadef}) \textup{)}
\begin{equation}
\label{Rstar2}
\mathcal{R}(d^\alpha):=\L(\pi^{\alpha},d^\alpha)=\E_{\pi^{\alpha}}{\bigl[\|\varphi-\E_{\pi^{\alpha}}{[\varphi]}\|^{2}\bigr]}
\end{equation}
is the same for all saddle points of $\L$.
Moreover, the second component $d^\alpha$ of the set of saddle points is unique and the set
$\mathcal{O}_{x}(\alpha)\subset \mathcal{P}_{x}(\alpha)$
of first components of saddle points is convex,
 providing  a  convex ridge $\mathcal{O}_{x}(\alpha)\times \{d^{\alpha}\}$
  of saddle points.
\end{Theorem}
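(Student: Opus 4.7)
The strategy is to reduce the minimax game to a single-player variance maximization. First I would eliminate player II by inner minimization: for any fixed $\pi \in \mathcal{P}_{x}(\alpha)$, the map $d \mapsto \L(\pi,d) = \E_{\pi}[\|\varphi-d\|^2]$ is a strictly convex quadratic on $V$ with unique minimizer $\E_{\pi}[\varphi]$ and minimum value $\Var_{\pi}(\varphi) := \E_{\pi}[\|\varphi - \E_{\pi}[\varphi]\|^2]$. Writing $\Var_{\pi}(\varphi) = \E_{\pi}[\|\varphi\|^2] - \|\E_{\pi}[\varphi]\|^2$ shows this functional is concave in $\pi$ (a linear term minus a convex composition of a linear map with $\|\cdot\|^2$). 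So the problem collapses to maximizing a concave functional on $\mathcal{P}_{x}(\alpha)$.

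Next I would establish existence. By continuity of $\bar{p}(x|\cdot)$, the likelihood region $\Theta_{x}(\alpha)$ is closed in the compact set $\Theta$, hence compact, so $\mathcal{P}_{x}(\alpha) = \mathcal{P}(\Theta_{x}(\alpha))$ is convex and weak-$*$ compact. Since $\varphi$ is continuous and bounded on $\Theta_{x}(\alpha)$, both $\pi \mapsto \E_{\pi}[\|\varphi\|^2]$ and $\pi \mapsto \E_{\pi}[\varphi]$ are weak-$*$ continuous, hence so is $\pi \mapsto \Var_{\pi}(\varphi)$. A maximizer $\pi^{\alpha}$ therefore exists; set $d^{\alpha} := \E_{\pi^{\alpha}}[\varphi]$.

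The main technical step, which I expect to be the main obstacle, is verifying the saddle inequality $\L(\pi, d^{\alpha}) \leq \L(\pi^{\alpha}, d^{\alpha})$ for all $\pi \in \mathcal{P}_{x}(\alpha)$ (the other inequality is immediate from the definition of $d^{\alpha}$). I would exploit concavity via a variational perturbation: for any $\pi$, set $\pi_{t} := (1-t)\pi^{\alpha} + t\pi$, $t \in (0,1]$. Using the bias-variance identity $\E_{\nu}[\|\varphi - c\|^2] = \Var_{\nu}(\varphi) + \|\E_{\nu}[\varphi] - c\|^2$ applied with $c = d^{\alpha}$ and the fact that $\E_{\pi^{\alpha}}[\varphi] - d^{\alpha} = 0$, a direct expansion yields
\begin{equation*}
\Var_{\pi_{t}}(\varphi) = (1-t)\Var_{\pi^{\alpha}}(\varphi) + t\,\E_{\pi}[\|\varphi - d^{\alpha}\|^2] - t^{2}\|\E_{\pi}[\varphi] - d^{\alpha}\|^2 .
\end{equation*}
Because $\pi^{\alpha}$ maximizes variance, the left side is $\leq \Var_{\pi^{\alpha}}(\varphi)$; dividing by $t$ and letting $t\to 0^{+}$ gives $\E_{\pi}[\|\varphi-d^{\alpha}\|^2] \leq \Var_{\pi^{\alpha}}(\varphi) = \L(\pi^{\alpha}, d^{\alpha})$, as required.

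Finally I would read off the converse and the remaining items. Given any saddle $(\pi^{\alpha}, d^{\alpha})$, strict convexity of $d \mapsto \L(\pi^{\alpha}, d)$ forces $d^{\alpha} = \E_{\pi^{\alpha}}[\varphi]$; then for every $\pi$ the chain $\Var_{\pi}(\varphi) \leq \E_{\pi}[\|\varphi-d^{\alpha}\|^2] \leq \L(\pi^{\alpha},d^{\alpha}) = \Var_{\pi^{\alpha}}(\varphi)$ shows $\pi^{\alpha}$ solves \eqref{piopt}, and the common game value \eqref{Rstar2} follows. Uniqueness of $d^{\alpha}$ is obtained via the rectangle property of saddle points: if $(\pi_1,d_1)$ and $(\pi_2,d_2)$ are both saddles, then $(\pi_1,d_2)$ is also a saddle, so $d_2 = \E_{\pi_1}[\varphi] = d_1$. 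Convexity of $\mathcal{O}_{x}(\alpha)$ is then the standard fact that the argmax set of a concave functional on a convex domain is convex, and the convex ridge description is immediate.
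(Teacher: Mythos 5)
Your proof is correct, and it takes a genuinely different and more elementary route than the paper. The paper's proof invokes its Theorem~\ref{thm_minmax} (which rests on Sion's minmax theorem) to establish the minmax equality, and then cites Bertsekas et al.\ to identify saddle points with pairs consisting of a maxmin-optimal measure and a minmax-optimal decision; strict convexity and weak-$*$ compactness are then used to extract \eqref{dopt}--\eqref{piopt}, common value, uniqueness of $d^\alpha$, and convexity of $\mathcal{O}_x(\alpha)$. You instead eliminate player~II first, reduce to a concave variance-maximization over $\mathcal{P}(\Theta_x(\alpha))$ whose maximizer exists by weak-$*$ compactness, and then \emph{directly} verify the nontrivial saddle inequality via the algebraic identity
\begin{equation*}
\Var_{(1-t)\pi^{\alpha}+t\pi}(\varphi) = (1-t)\Var_{\pi^{\alpha}}(\varphi) + t\,\E_{\pi}\bigl[\|\varphi - d^{\alpha}\|^2\bigr] - t^{2}\bigl\|\E_{\pi}[\varphi] - d^{\alpha}\bigr\|^2 ,
\end{equation*}
which I checked and which does hold exactly when $d^\alpha = \E_{\pi^\alpha}[\varphi]$. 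Dividing by $t$ and sending $t\to 0^+$ is the correct first-order optimality argument, and the converse, common game value, and uniqueness via the rectangle (interchangeability) property of saddle points are all standard and correctly applied. The trade-off: your variational argument is fully self-contained and avoids general minimax machinery, but it is tailored to the quadratic structure of the $\ell^2$ loss (the bias--variance decomposition is what makes the perturbation identity exact); the paper's route through Sion and Bertsekas is heavier but is designed to carry over, with minor changes, to the general convex/coercive losses of Section~\ref{sec_generalloss} and Theorem~\ref{thm_red}. One small point worth making explicit in your writeup: $\Theta_x(\alpha)$ is nonempty because $\bar p(x|\cdot)$ is continuous on the compact $\Theta$ and attains its supremum $1$, so $\mathcal{P}_x(\alpha)$ is a nonempty, convex, weak-$*$ compact domain over which the maximizer exists.
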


\subsection{Duality with the minimum enclosing ball}
Although the Lagrangian duality between the maximum variance problem and the minimum enclosing ball problem on finite sets is known, see Yildirim
\cite{yildirim2008two}, we now analyze the infinite case.
Utilizing the recent  generalization of the one-dimensional result
 of Popoviciu \cite{Popoviciu} regarding the relationship between variance maximization   and the   minimum enclosing ball by Lim and McCann \cite[Thm.~1]{lim2021geometrical},
 the following theorem demonstrates that essentially the maximum variance problem  \eqref{piopt} determining a worst-case measure is the Lagrangian dual of the minimum enclosing ball problem  on  the image
$\varphi(\Theta_{x}(\alpha))$. Let
$\varphi_{*}:\mathcal{P}\bigl(
   \Theta_{x}(\alpha)\bigr) \rightarrow \mathcal{P}(\varphi(\Theta_{x}(\alpha))$ denote the pushforward map (change of variables) defined by
$(\varphi_{*}\pi)(A):=\pi(\varphi^{-1}(A))$ for every  Borel set $A$,
mapping probability measures on $\Theta_{x}(\alpha)$ to  probability measures on $\varphi\bigl(\Theta_{x}(\alpha)\bigr)$.

\begin{Theorem}\label{thm_duala}
For $x\in X$, $\alpha \in [0,1]$, suppose the relative likelihood $\bar{p}(x|\cdot)$ and the quantity of interest $\varphi:\Theta \rightarrow V$ are continuous.   Consider a saddle point $(\pi^\alpha,d^\alpha)$ of the game \eqref{eqgame2} ($=$\eqref{Lalphadef}).
The optimal decision $d^{\alpha}$  and its associated risk $\mathcal{R}(d^\alpha)=$\eqref{Rstar0} are
 equal to the center and squared radius, respectively,  of the minimum enclosing ball of
 $\varphi(\Theta_{x}(\alpha))$, i.e.~the  minimizer $z^{*}$ and the value $R^2$  of the minimum enclosing ball optimization problem
\begin{equation}
\label{opt_minball}
\begin{cases}
\text{Minimize }  r^{2}
 \\
\text{Subject to }   r \in \R,\, z \in \varphi(\Theta_{x}(\alpha)),\\
\|x-z\|^{2} \leq r^{2}, \quad x \in \varphi(\Theta_{x}(\alpha)).
\end{cases}
\end{equation}
Moreover, the  variance maximization problem  on $\mathcal{P}_{x}(\alpha)$ \eqref{piopt}
  pushes forward  to the variance maximization problem on  the image of the likelihood region
$\mathcal{P}(\varphi(\Theta_{x}(\alpha)))$ under $\varphi$  giving the identity
\[\E_{\pi}{\bigl[\|\varphi-\E_{\pi}{[\varphi]}\|^{2}\bigr]}
=
\E_{\varphi_{*}\pi}{\bigl[\|v-\E_{\pi'}{[v]}\|^{2}\bigr]} , \quad \pi \in \mathcal{P}_{x}(\alpha),
\] and
the latter is the Lagrangian dual to the minimum enclosing ball problem \eqref{opt_minball} on the image
$\varphi(\Theta_{x}(\alpha))$. Finally, let $B$, with center $z^{*}$, denote the minimum enclosing ball
of $\varphi(\Theta_{x}(\alpha))$. Then
 a measure $\pi^{\alpha} \in \mathcal{P}_{x}(\alpha)$ is optimal
for the variance maximization problem   \eqref{piopt} if and only if
\[ \varphi_{*}\pi^{\alpha}\bigl( \varphi(\Theta_{x}(\alpha)) \cap \partial B\bigr) =1 \]
and \[ z^{*}=\int_{V}{v d(\varphi_{*}\pi^{\alpha})(v)}, \]
that is, all the mass of $\varphi_{*}\pi^{\alpha}$ lives on the intersection
 $\varphi(\Theta_{x}(\alpha)) \cap \partial B$  of the image $ \varphi(\Theta_{x}(\alpha))$ of the
likelihood region and  the boundary  $\partial B$ of its minimum enclosing ball and
the center of mass of the measure $\varphi_{*}\pi^{\alpha}$ is the center $z^{*}$ of  $B$.
\end{Theorem}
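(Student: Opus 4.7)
The plan is to transport the variance-maximization problem \eqref{piopt} from $\mathcal{P}_{x}(\alpha)$ to the image space $V$ via the pushforward $\varphi_{*}$, and then apply the Lim--McCann generalization of Popoviciu's identity. By Theorem \ref{thm_saddles}, any saddle point $(\pi^{\alpha},d^{\alpha})$ satisfies $d^{\alpha}=\E_{\pi^{\alpha}}[\varphi]$ with $\pi^{\alpha}$ attaining the supremum in \eqref{piopt}. A direct change-of-variables gives, for every $\pi \in \mathcal{P}_{x}(\alpha)$,
\begin{equation*}
\E_{\pi}\bigl[\|\varphi-\E_{\pi}[\varphi]\|^{2}\bigr]
= \E_{\varphi_{*}\pi}\bigl[\|v-\E_{\varphi_{*}\pi}[v]\|^{2}\bigr],
\end{equation*}
which is the claimed pushforward identity and transports the objective to $\mathcal{P}(\varphi(\Theta_{x}(\alpha)))$.

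Next I would check that the transport is also surjective on the feasible sets. Continuity of $\bar{p}(x|\cdot)$ implies $\Theta_{x}(\alpha)=\{\theta:\bar{p}(x|\theta)\geq \alpha\}$ is closed in the compact set $\Theta$, hence compact; continuity of $\varphi$ then makes $\varphi(\Theta_{x}(\alpha))$ compact as well. Surjectivity of $\varphi_{*}:\mathcal{P}(\Theta_{x}(\alpha))\to\mathcal{P}(\varphi(\Theta_{x}(\alpha)))$ follows by first lifting Dirac masses through a measurable selection of $\varphi^{-1}$ (Kuratowski--Ryll-Nardzewski applies because the graph of $\varphi$ is closed with compact fibres), and then extending to arbitrary $\mu\in\mathcal{P}(\varphi(\Theta_{x}(\alpha)))$ by weak-$*$ density of convex combinations of Diracs together with tightness on the compact set. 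Consequently the supremum in \eqref{piopt} coincides with the supremum of $\mu\mapsto \E_{\mu}[\|v-\E_{\mu}[v]\|^{2}]$ over $\mathcal{P}(\varphi(\Theta_{x}(\alpha)))$, and maximizers correspond under $\varphi_{*}$.

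With the problem now living on the compact subset $\varphi(\Theta_{x}(\alpha))\subset V$, the Lim--McCann theorem \cite[Thm.~1]{lim2021geometrical} identifies this supremum with $R^{2}$, the squared radius of the minimum enclosing ball $B$ of $\varphi(\Theta_{x}(\alpha))$, shows that the center $z^{*}$ of $B$ equals the barycenter of every maximizing measure, and establishes the claimed Lagrangian duality with \eqref{opt_minball}. Pulling back through $\varphi_{*}$ yields $d^{\alpha}=z^{*}$ and $\mathcal{R}(d^{\alpha})=R^{2}$. The same theorem characterizes maximizers: $\mu\in\mathcal{P}(\varphi(\Theta_{x}(\alpha)))$ is optimal iff $\mu(\varphi(\Theta_{x}(\alpha))\cap\partial B)=1$ and $z^{*}=\int v\,d\mu(v)$ (a measure with positive mass in the open interior of $B$ cannot attain $R^{2}$, as moving that mass radially to $\partial B$ while preserving the barycenter strictly increases variance). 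Via the bijective pullback this reads as the stated characterization of $\pi^{\alpha}$.

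The main obstacle is the clean reduction from $\mathcal{P}_{x}(\alpha)$ to $\mathcal{P}(\varphi(\Theta_{x}(\alpha)))$: both the surjectivity of $\varphi_{*}$ and the fact that maximizers pull back faithfully must be verified, since all structural content of the theorem is then furnished by Lim--McCann. Once continuity of $\bar{p}(x|\cdot)$ and $\varphi$ and compactness of $\Theta$ are used to secure compactness of the relevant image, the remainder is bookkeeping.
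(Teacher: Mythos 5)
Your proposal follows the same overall route as the paper: secure compactness of $\Theta_x(\alpha)$ and $\varphi(\Theta_x(\alpha))$, transport the variance-maximization problem to the image space via the pushforward identity, verify surjectivity of $\varphi_*$, and then invoke Lim--McCann \cite[Thm.~1]{lim2021geometrical} for the variance/minimum-enclosing-ball correspondence and the characterization of maximizers. Two remarks.

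First, you establish surjectivity of $\varphi_*$ by hand, via Kuratowski--Ryll-Nardzewski selection for Diracs followed by a weak-$*$ density argument. The paper simply cites Aliprantis--Border \cite[Thm.~15.14]{Aliprantis2006}, which directly gives surjectivity of the pushforward for a continuous map on a Polish domain. Your route works (a cleaner version: take a measurable selection $s$ of $\varphi^{-1}$ and push forward any $\mu$ by $s$), but it is more labor than needed, and the density step as you state it is a bit loose — it is not obviously cheaper than just proving the general lift directly from the selection.

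Second, and more substantively, the theorem contains a standalone assertion that the variance-maximization problem is the \emph{Lagrangian dual} of the semi-infinite program \eqref{opt_minball}. You delegate that claim to Lim--McCann, writing that their Theorem 1 "establishes the claimed Lagrangian duality." That is not quite what their result provides: Lim--McCann give the geometric equivalence between the maximal variance and the squared Chebyshev radius, together with the support/barycenter characterization of maximizers, but the identification of the variance problem as the Lagrangian dual of \eqref{opt_minball} (with the measure $\mu$ as the multiplier for the infinite family of constraints) is something the paper proves separately: it writes down the Lagrangian $L(r,z;\mu)=r^2+\int(\|x-z\|^2-r^2)\,d\mu$, minimizes over $r$ to extract the normalization $\int d\mu=1$, minimizes over $z$ to obtain $\E_\mu[\|x-\E_\mu[x]\|^2]$, and appeals to Bonnans--Shapiro \cite[Sec.~5.4.1]{bonnans2013perturbation} for the validity of this Lagrangian in the semi-infinite setting. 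As written, your proof leaves this specific claim unproved. The fix is short — just do the Lagrangian computation — but it does need to appear.

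Your parenthetical heuristic for the "only if" direction ("moving mass radially to $\partial B$ while preserving the barycenter strictly increases variance") is not a valid argument in general and should not be left as even a side remark, but since you correctly defer the actual characterization to Lim--McCann this does not affect the logic.
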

\begin{Remark}
Note that once $\alpha$, and therefore $\Theta_{x}(\alpha)$, is determined
that the computation of the risk and the minmax estimator is determined by
the minimum enclosing ball about $\varphi(\Theta_{x}(\alpha))$, which is also determined by the
 worst-case optimization problem \eqref{worstcase} for $\Theta:=\Theta_{x}(\alpha)$.
\end{Remark}

Theorem  \ref{thm_duala} introduces the possibility of primal-dual algorithms. In particular, the availability of rigorous stopping criteria for
the maximum variance problem \eqref{piopt}. To that end, for
 a feasible measure $\pi \in \mathcal{P}_{x}(\alpha)$, let $\Var(\pi):=\E_{\pi}{\bigl[\|\varphi-\E_{\pi}{[\varphi]}\|^{2}\bigr]}$ denote its variance and denote by
 $\Var^{*}:=\sup_{\pi \in \mathcal{P}_{x}(\alpha)}{\Var(\pi)}=$\eqref{Rstar2} the optimal variance. Let $(r,z)$ be a feasible for the minimum enclosing ball problem \eqref{opt_minball}. Then the inequality $\Var^{*}=R^{2} \leq r^{2}$  implies the rigorous bound
\begin{equation}
\Var^{*}-\Var(\pi) \leq r^{2}- \Var(\pi)
\end{equation}
 quantifying  the suboptimality of the measure $\pi$ in terms of known quantities $r$ and $\Var(\pi)$.

\subsection{Finite-dimensional reduction}

Let $\Delta^{m}(\Theta)$ denote the set of convex sums of $m$ Dirac measures located in $\Theta$ and,
 let $\mathcal{P}^{m}_{x}(\alpha) \subset \mathcal{P}_{x}(\alpha)$ defined by
\begin{equation}
\label{PalphaHm}
 \mathcal{P}^{m}_{x}(\alpha):=
 \Delta^{m}(\Theta) \cap \mathcal{P}_{x}(\alpha)
\,
 \end{equation}
denote the  finite-dimensional subset of the  rarity assumption set $\mathcal{P}_{x}(\alpha)$
 consisting of
the convex combinations of $m$ Dirac measures supported in $\Theta_{x}(\alpha)$.

\begin{Theorem}\label{pioptreduced}
Let $\alpha \in [0,1]$ and $x \in X$, and suppose that the likelihood function $p(x|\cdot)$ and quantity of interest
$\varphi:\Theta \rightarrow V$ are continuous. Then for any $m\geq dim(V)+1$, the variance maximization problem \eqref{piopt} has the
finite-dimensional reduction
\begin{equation}
\label{eq_pioptreduced}
\max_{\pi \in  \mathcal{P}_{x}(\alpha)}\E_{\pi}{\bigl[\|\varphi-\E_{\pi}{[\varphi]}\|^{2}\bigr]}=
\max_{\pi \in  \mathcal{P}^{m}_{x}(\alpha)}\E_{\pi}{\bigl[\|\varphi-\E_{\pi}{[\varphi]}\|^{2}\bigr]}\, .
\end{equation}
Therefore one can compute a saddle point $(d^\alpha,\pi^\alpha)$ of the game \eqref{eqgame2} ($=$\eqref{Lalphadef}) as
\begin{equation}\label{eqjhegvdievd6}
\pi^\alpha=\sum_{i=1}^{m}{w_{i}\updelta_{\theta_{i}}}\text{ and } d^\alpha=\sum_{i=1}^{m}{w_{i}\varphi(\theta_{i})}
\end{equation}
where $w_{i} \geq 0, \theta_{i} \in \Theta, \, i=1,\ldots, m$ maximize
\begin{equation}
\label{opt_priora}
\begin{cases}
\text{Maximize }  \sum_{i=1}^{m}{w_{i}\bigl\|\varphi(\theta_{i})\bigr\|^{2}}-\bigl\|\sum_{i=1}^{m}{w_{i}\varphi(\theta_{i})}\bigr\|^{2}
 \\
\text{Subject to }  \,\, w_{i}\geq 0, \theta_{i}\in \Theta, i=1,\ldots, m,\,
\sum_{i=1}^{m}{w_{i}}=1\\
 \bar{p}(x|\theta_{i})\geq \alpha, \quad i=1,\ldots, m\, .
\end{cases}
\end{equation}
\end{Theorem}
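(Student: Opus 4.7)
The plan is to combine Theorem \ref{thm_duala}, which pins down the structure of optimal measures via the minimum enclosing ball, with Carath\'eodory's theorem in the finite-dimensional vector space $V$. One inequality in \eqref{eq_pioptreduced} is free, since $\mathcal{P}^m_x(\alpha) \subset \mathcal{P}_x(\alpha)$; the content of the theorem is that the supremum over $\mathcal{P}_x(\alpha)$ is attained by some measure of $\mathcal{P}^m_x(\alpha)$ whenever $m \geq \dim(V)+1$.

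First, I would invoke Theorem \ref{thm_duala}: by the continuity of $\bar p(x|\cdot)$ and $\varphi$ together with compactness of $\Theta$, the image $\varphi(\Theta_x(\alpha))$ is compact, so its minimum enclosing ball $B$ with center $z^*$ and squared radius $R^2$ is well-defined, and the optimal variance equals $R^2$. Moreover the theorem guarantees that any optimal pushforward $\varphi_*\pi^\alpha$ is supported on the compact set $S:=\varphi(\Theta_x(\alpha))\cap \partial B$ and satisfies $\int v\,d(\varphi_*\pi^\alpha)(v)=z^*$. In particular, $z^*$ lies in the convex hull of $S$.

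Second, I would apply Carath\'eodory's theorem inside $V$ to the point $z^* \in \mathrm{conv}(S)$: there exist $k \leq \dim(V)+1$ points $v_1,\ldots,v_k \in S$ and weights $w_1,\ldots,w_k\geq 0$ with $\sum_i w_i=1$ such that $z^*=\sum_i w_i v_i$. For each $v_i \in \varphi(\Theta_x(\alpha))$, pick any preimage $\theta_i \in \Theta_x(\alpha)$ with $\varphi(\theta_i)=v_i$, and form the discrete measure $\pi':=\sum_{i=1}^k w_i\updelta_{\theta_i}$. By construction $\bar p(x|\theta_i)\geq \alpha$, so $\pi' \in \mathcal{P}^k_x(\alpha) \subset \mathcal{P}^m_x(\alpha)$ for any $m\geq \dim(V)+1$ (padding with repeated atoms of weight zero if necessary).

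Third, I would verify that $\pi'$ achieves the optimal variance. Since $\mathbb{E}_{\pi'}[\varphi]=\sum_i w_i v_i = z^*$ and each $v_i \in \partial B$ satisfies $\|v_i-z^*\|^2=R^2$, we compute
\begin{equation*}
\E_{\pi'}\bigl[\|\varphi-\E_{\pi'}[\varphi]\|^2\bigr]=\sum_{i=1}^k w_i\|v_i-z^*\|^2=R^2,
\end{equation*}
which equals the value in \eqref{piopt} by Theorem \ref{thm_duala}. Hence the two maxima in \eqref{eq_pioptreduced} coincide and $\pi^\alpha:=\pi'$ is an optimal prior. The saddle point formula \eqref{eqjhegvdievd6} then follows immediately from Theorem \ref{thm_saddles} with $d^\alpha=\E_{\pi^\alpha}[\varphi]=\sum_i w_i\varphi(\theta_i)$, and maximizing the objective in \eqref{opt_priora} is just the expansion $\E_\pi[\|\varphi\|^2]-\|\E_\pi[\varphi]\|^2$ of the variance restricted to $\mathcal{P}^m_x(\alpha)$. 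The only conceptually nontrivial step is the first one, where the support localization on $S$ from Theorem \ref{thm_duala} is essential; without it, Carath\'eodory could only produce a convex combination realizing the mean $z^*$ but not the radius, and the variance identity in the third step would fail.
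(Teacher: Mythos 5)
Your proof is correct and uses the same two ingredients as the paper's --- the minimum enclosing ball duality (Theorem \ref{thm_duala}) and Carath\'eodory's theorem --- but arranges them more directly. The paper first reduces to a measure supported on $\dim(V)+2$ Dirac atoms via Theorem \ref{thm_red}, then uses Theorem \ref{thm_duala} to place those finitely many image points on $\partial B$ with barycenter $z^*$, and finally applies Carath\'eodory to that finite set of points to eliminate one atom. You instead take an arbitrary optimal $\pi^\alpha$ (whose existence is supplied by Theorem \ref{thm_saddles}), invoke Theorem \ref{thm_duala} to conclude that $\varphi_*\pi^\alpha$ is supported on the compact set $S:=\varphi(\Theta_x(\alpha))\cap \partial B$ with barycenter $z^*\in \mathrm{conv}(S)$, and apply Carath\'eodory directly to the possibly infinite $S$ to obtain the $\dim(V)+1$ atoms in one step. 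This bypasses the intermediate $\dim(V)+2$ stage, which is cleaner; the one thing to state explicitly is that you still need Theorem \ref{thm_saddles} (which the paper in turn builds on Theorems \ref{thm_red} and \ref{thm_minmax}) to guarantee an optimal $\pi^\alpha$ exists before Theorem \ref{thm_duala}'s if-and-only-if characterization can be applied, so the logical dependencies are rearranged rather than genuinely reduced.
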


As a consequence of Theorems  \ref{thm_duala} and \ref{pioptreduced},
a measure with finite support  $\mu:=\sum{w_{i}\updelta_{z_{i}}}$ on $V$ is the pushforward under $\varphi:\Theta \rightarrow V$ of an optimal measure $\pi^\alpha$ for  the maximum variance problem \eqref{piopt}
if and only if, as illustrated in Figure \ref{figminball},  it is supported on
 the intersection of $\varphi(\Theta_{x}(\alpha))$ and the boundary $\partial B$ of the minimum enclosing ball of $\varphi(\Theta_{x}(\alpha)) $
 and the center $z^{*}$ of $B$ is the center of mass $z^{*}=\sum{w_{i}z_{i}}$
 of the measure $\mu$.

 \subsection{Relaxing MLE with an  accuracy/robustness tradeoff}

For fixed $x \in X$,
assume that the model $P$ is such that  the maximum likelihood estimate (MLE)
\begin{equation}
\theta^*:=\arg \max_{\theta \in \Theta}p(x|\theta)\,
\end{equation}
of $\theta^\dagger$ exists and is unique.

Observe that for $\alpha$ near one (1) the support of $\pi^\alpha$  and $d^\alpha$ concentrate around the MLE $\theta^*$
and  $\varphi(\theta^*)$,  (2) the risk $\mathcal{R}(d^\alpha)=$\eqref{Rstar0} concentrates around zero, and (3) the confidence $1-\beta_\alpha$ associated with the rarity assumption $\theta^\dagger\in \Theta_{x}(\alpha)$ is the smallest.
In that limit, our estimator inherits the accuracy and lack of robustness of the MLE approach to estimating the quantity of interest.

Conversely for $\alpha$ near zero, since by \eqref{eqnormrarset} $\Theta_{x}(\alpha) \approx \Theta$,  (1) the support of the pushforward of $\pi^\alpha$  by $\varphi$ concentrates on the boundary of $\varphi(\Theta)$ and   $d^\alpha$ concentrate around the center of the minimum enclosing ball of $\varphi(\Theta)$,  (2) the risk $\mathcal{R}(d^\alpha)=$\eqref{Rstar2} is the highest and concentrates around the worst-case risk \eqref{worstcase}, and (3) the confidence $1-\beta_\alpha$ associated with the rarity assumption $\theta^\dagger\in \Theta_{x}(\alpha)$ is the highest.
In that limit, our estimator inherits the robustness and lack of accuracy of the worst-case approach to estimating the quantity of interest.

 For $\alpha$ between $0$ and $1$, the proposed game-theoretic approach induces a minmax optimal tradeoff between the accuracy of MLE and the robustness of the worst case.

\section{Computational framework} \label{sec:compfram}
The introduction developed this framework
in the context of a model $P$ with density $p$  in terms of a single sample $x$.
In Section  \ref{sec_coin},
the single sample case was extended to $N$ i.i.d.~samples by defining
the multisample $\mathcal{D}:=(x_{1},\ldots,x_{N})$ and defining the product model density
$p(\mathcal{D}|\theta):=\Pi_{i=1}^{N}{p(x_{i}|\theta)}$.
 Extensions incorporating correlations in the samples,
 such as Markov or other stochastic processes can easily be developed.
Here we continue this development for the general model of the introduction for the $\ell^{2}$ loss and also develop more fully a Gaussian noise model.
Later, in Sections \ref{sec_gradient} and \ref{sec_LV}
 these models will be tested on estimating a quadratic function and  a Lotka-Volterra predator-prey model based on noisy observations. In Section
\ref{sec_generalloss} the framework will be generalized to more general loss functions and rarity assumptions,  which much of the current section generalizes to.

Let the   possible  states  of nature be a compact subset
 $ \Theta \subset \mathbb{R}^{k}$,
the decision space be $V:=\R^{n}$ and  the elements of  the  $N$-fold
multisample
\[\mathcal{D}:=(\boldsymbol{x}_{1},\ldots,\boldsymbol{x}_{N})\]
  lie in $X$,
that is, $\mathcal{D}$ lies in  the multisample space $X^{N}$.
Let the $n$ components of the quantity of interest $\varphi:\Theta \rightarrow \R^{n}$ be indicated by
$\varphi_{t}:\Theta \rightarrow \R, \, t =1,\ldots n$. Here, using the i.i.d.~ product model
$P(\cdot|\theta)$ with density
$p(\mathcal{D}|\theta):=\Pi_{i=1}^{N}{p(x_{i}|\theta)}$,  the definition of
$\beta_{\alpha}$ in \eqref{betadef} becomes
\begin{eqnarray}
\label{betadef2}
    \beta_{\alpha}&:=& \sup_{\theta \in \Theta}
P\Big(\bigl\{\mathcal{D}' \in X^{N}: \theta \notin \Theta_{\mathcal{D}'}(\alpha)\bigr\}\Big|\theta\Big)\nonumber\\
&=& \sup_{\theta \in \Theta}
{\int{\mathds{1}_{\{\bar{p}(\cdot|\theta) < \alpha\} }(\mathcal{D}')p(\mathcal{D}'|\theta)d\nu^{N}(\mathcal{D}')}}\,,
\end{eqnarray}
where, for fixed $\theta$,
$ \mathds{1}_{\{\bar{p}(\cdot|\theta) < \alpha\}} $ is the indicator function
of the set $\{\mathcal{D}' \in X^{N}: \bar{p}(\mathcal{D}'|\theta) < \alpha\}$.
We use boldface, such as $\boldsymbol{x}_{i}$ or $\boldsymbol{\theta}$ to emphasize the vector nature of variables and functions in the computational framework.

In this notation, the finite dimensional reduction guaranteed by Theorem \ref{pioptreduced}
in \eqref{eqjhegvdievd6} and \eqref{opt_priora}
of the
  optimization problem \eqref{piopt} defining a worst-case measure of the form
$\pi^{\alpha}:=\sum_{i=1}^{m}{w_{i}\updelta_{\boldsymbol{\theta_{i}}}}$
 takes the form
\begin{equation}
\begin{array}{ll}
\underset{\left\{ \left(w_{i},\boldsymbol{\theta_{i}}_{i}\right)\right\} _{i=1}^{m}}{\textrm{maximize}} & \sum_{t=1}^{n}\left(\sum_{i=1}^{m}\varphi_{t}^{2}\left(\boldsymbol{\theta}_{i}\right)w_{i}-\sum_{i,j=1}^{m}w_{i}\varphi_{t}\left(\boldsymbol{\theta}_{i}\right)\varphi_{t}\left(\boldsymbol{\theta}_{j}\right)w_{j}\right)\\
s.t. & \boldsymbol{\theta}_{i} \in \Theta,\,  w_{i}\geqslant0,\quad i=1,\ldots,m\\
 & \sum_{i=1}^{m}w_{i}=1,\\
 & \bar{p}\left(\mathcal{D}|\boldsymbol{\theta}_{i}\right)\geqslant\alpha,i=1,\ldots,m,
\end{array}\label{eq: Posterior-based Opt}
\end{equation}
where the  component of the objective function
 \[var\left(\varphi_{t}\right):=
\sum_{i=1}^{m}\varphi_{t}^{2}\left(\boldsymbol{\theta}_{i}\right)w_{i}-\sum_{i,j=1}^{m}w_{i}\varphi_{t}\left(\boldsymbol{\theta}_{i}\right)\varphi_{t}\left(\boldsymbol{\theta}_{j}\right)w_{j}\]
 is the variance
of the random variable $\varphi_{t}:\Theta \rightarrow \R$ under the measure
$\pi:=\sum_{i=1}^{m}{w_{i}\updelta_{\boldsymbol{\theta}_{i}}}$.

\subsection{Algorithm for solving the game}
We are now prepared to develop an algorithm for player II (the decision maker) to play the game
\eqref{Lalphadefa2}, using the saddle point Theorem \ref{thm_saddles} and the finite dimensional reduction Theorem \ref{pioptreduced} after selecting the rarity parameter $\alpha$ quantifying the rarity  assumption \eqref{eqnormrarset} in terms of the relative likelihood \eqref{def_rellikelihood}
or a surrogate as described in Remark \ref{rmk_surrogate},  to be the largest $\alpha$ such that
the significance
$\beta_{\alpha}$ \eqref{betadef2} at $\alpha$ satisfies $\beta_{\alpha} \leq \beta^{*}$, the significance level.

At a high level the algorithm for computing a worst-case measure, its resulting risk (variance) and optimal estimator is as follows:
\begin{enumerate}
\item Observe a multisample $\mathcal{D}$
\item Find the largest $\alpha$ such that
 $\beta_{\alpha}$ defined in \eqref{betadef2} satisfies $\beta_{\alpha} \leq \beta^{*}$
\item Solve \eqref{eq: Posterior-based Opt}
determining a worst-case measure
$\pi^{\alpha}:=\sum_{i=1}^{m}{w_{i}\updelta_{\boldsymbol{\theta}_{i}}}$.
\item output the Risk as the value of \eqref{eq: Posterior-based Opt}
\item output optimal decision $d^{\alpha}:=\sum_{i=1}^{m}{w_{i}\varphi(\boldsymbol{\theta}_{i})}$
\end{enumerate}

To solve \eqref{eq: Posterior-based Opt} in Step 3 we apply the duality
 of the variance maximization problem with the minimum enclosing ball problem,
 Theorem \ref{thm_duala}, to obtain the following complete algorithm.
 It uses Algorithm \ref{Alg: Miniball} for computing the minimum enclosing ball about
the (generally)  infinite set
$\varphi(\Theta_{x}(\alpha))$, which in turn uses a minimum enclosing ball algorithm {\em Miniball}
applied to sets of size at most $dim(V)+2$, see e.g.~Welzl \cite{welzl1991smallest},
 Yildirim \cite{yildirim2008two}  and Gartner \cite{gartner1999fast}. Here we use that of
 Welzl \cite{welzl1991smallest}.
See Section \ref{sec_miniball} for a discussion and a proof in Theorem \ref{thm_miniball}
of the  convergence of Algorithm \ref{Alg: Miniball}. Theorem \ref{thm_miniball}
also establishes a convergence proof when the  distance maximization Step \ref{step8a} in
  Algorithm \ref{Alg: ODAD General} is performed approximately.
Note that the likelihood region $\Theta_{\mathcal{D}}(\alpha)$ is defined by
\[ \Theta_{\mathcal{D}}(\alpha):=\bigl\{\boldsymbol{\theta} \in \Theta:  p(\mathcal{D}|\boldsymbol{\theta}) \geq \alpha \,  p(\mathcal{D}|\boldsymbol{\theta}^{*})\bigr\} \]
where
\[ \boldsymbol{\theta}^{*}\in\underset{\boldsymbol{\theta}}{\arg\max}\,\,{ p\bigl(\mathcal{D}|\boldsymbol{\theta}\bigr)}\]
is a MLE.

\begin{algorithm}

\begin{enumerate}
\item Inputs:
\begin{enumerate}
\item Multisample $\mathcal{D}:=(\boldsymbol{x}_{1},\ldots,\boldsymbol{x}_{N})$
\item $\varepsilon_{0}$
\item Significance level $\beta^{*}$
\end{enumerate}
\item \label{step2} Find MLE $\boldsymbol{\theta}^{*}$ by
 $\boldsymbol{\theta}^{*}\in\underset{\boldsymbol{\theta}}{\arg\max}\,\,{ p\bigl(\mathcal{D}|\boldsymbol{\theta}\bigr)}$
\item  \label{step3} Find the largest $\alpha$ such that
 $\beta_{\alpha}$ defined in \eqref{betadef2} satisfies $\beta_{\alpha} \leq \beta^{*}$
\item $\boldsymbol{c}\leftarrow\boldsymbol{\varphi}\left(\boldsymbol{\theta}^{*}\right)$
\item $S\leftarrow \left\{ \boldsymbol{\varphi}\left(\boldsymbol{\theta}^{*}\right)\right\} $
\item $\rho_{0}\leftarrow0$
\item $e\leftarrow2\varepsilon_{0}$
\item while $e\geqslant\varepsilon_{0}$
\begin{enumerate}
\item \label{step8a} $\begin{array}{lll}
\bar{\boldsymbol{\theta}} & \in & \underset{\boldsymbol{\theta}}{\arg\max}\left\Vert \boldsymbol{\varphi}\left(\boldsymbol{\theta}\right)-\boldsymbol{c}\right\Vert^{2}\\
 & s.t. &  p(\mathcal{D}|\boldsymbol{\theta}) \geq \alpha \,  p(\mathcal{D}|\boldsymbol{\theta}^{*})
\end{array}$
\item if $\left\Vert \boldsymbol{\varphi}\left(\bar{\boldsymbol{\theta}}\right)-\boldsymbol{c}\right\Vert\geqslant\rho_{0}$
\begin{enumerate}
\item $S\leftarrow S\cup\left\{ \boldsymbol{\varphi}\left(\bar{\boldsymbol{\theta}}\right)\right\} $
\end{enumerate}
\item $\boldsymbol{c},\rho\leftarrow Miniball\left(S\right)$
\item $e=\left|\rho-\rho_{0}\right|$
\item $\rho_{0}\leftarrow\rho$
\item if $\left|S\right|>n+1$
\begin{enumerate}
\item find subset $S'\subset S$  of size $  n+1$ such that $ Miniball\left(S'\right)= Miniball\left(S\right)$
\item $ S \leftarrow S'$
\end{enumerate}
\end{enumerate}
\item Find $\left\{ w_{i}\right\} _{i=1}^{n+1}$ from $\underset{\left\{ w_{i}\right\} _{i=1}^{n+1}}{\textrm{maximize}}\sum_{t=1}^{n}\left(\sum_{i=1}^{n+1}\varphi_{t}^{2}\left(\boldsymbol{\theta}_{i}\right)w_{i}-\sum_{i,j=1}^{n+1}w_{i}\varphi_{t}\left(\boldsymbol{\theta}_{i}\right)\varphi_{t}\left(\boldsymbol{\theta}_{j}\right)w_{j}\right)$
\end{enumerate}
\caption{UQ4K algorithm}
\label{Alg: ODAD General}
\end{algorithm}

\subsubsection{Large sample simplifications}
\label{sec_hamed}
Here we demonstrate that when the number of samples $N$ is large, under classic regularity assumptions, the significance $\beta_{\alpha}$ is approximated by the value of a chi-squared distribution, substantially simplifying the  determination of $\alpha$  in Step \ref{step3} of Algorithm \ref{Alg: ODAD General}.

Let $\Theta\subseteq \mathbb{R}^k$ and let data be generated by the model at the value $\theta\in \Theta$. Under standard regularity conditions, check Casella and Berger \cite[Sec.~10.6.2 \& Thm.~10.1.12]{casella2003statistical}, the maximum likelihood estimator (MLE), $\hat{\theta}_N$, is asymptotically efficient for $\theta$. That is as the sample size $N\to \infty$
\begin{equation}
\sqrt{N}(\hat{\theta}_N-\theta)\xrightarrow{\text{d}} N(0,I(\theta)^{-1})\, ,
\end{equation}
where $I(\theta)$ is the Fisher information matrix. Therefore, standard arguments, see Casella and Berger \cite[Thm.~10.3.1]{casella2003statistical}, for the asymptotic distribution of the likelihood ratio test result in the following approximation of $\beta_\alpha$.

\begin{Theorem}\label{thm_asympt}
Let $\Theta\subseteq\mathbb{R}^k$ and assume that the model density $p$ satisfies the regularity conditions of Casella and Berger \cite[Section.~10.6.2]{casella2003statistical}. Then
\begin{equation}
    \beta_\alpha \to 1-\chi^2_k\big(2\ln \frac{1}{\alpha}\big)\,
\end{equation}
as  $N\to \infty$,
where $\chi^2_k$ is the chi-square distribution with $k$ degrees of freedom.
\end{Theorem}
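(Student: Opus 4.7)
The plan is to recognize $\beta_\alpha$ as (the supremum over $\theta$ of) the tail probability of a log-likelihood ratio statistic and then invoke Wilks' theorem. Writing $\hat{\boldsymbol{\theta}}_N(\mathcal{D}') \in \arg\max_{\theta'\in \Theta} p(\mathcal{D}'|\theta')$ for the MLE based on $\mathcal{D}'$, the normalization \eqref{def_rellikelihood} gives $\bar{p}(\mathcal{D}'|\theta) = p(\mathcal{D}'|\theta)/p(\mathcal{D}'|\hat{\boldsymbol{\theta}}_N(\mathcal{D}'))$, so the event $\{\theta \notin \Theta_{\mathcal{D}'}(\alpha)\}$ appearing in \eqref{betadef2} is exactly $\{\Lambda_N(\theta,\mathcal{D}') > 2\ln(1/\alpha)\}$, where
\[
\Lambda_N(\theta,\mathcal{D}') := -2\ln \bar{p}(\mathcal{D}'|\theta) = -2\bigl[\ln p(\mathcal{D}'|\theta) - \ln p(\mathcal{D}'|\hat{\boldsymbol{\theta}}_N(\mathcal{D}'))\bigr]
\]
is the classical log-likelihood ratio statistic for the simple null $H_0:\theta_0=\theta$ against the unrestricted alternative $\theta'\in \Theta \subseteq \mathbb{R}^k$.

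For each fixed $\theta\in \Theta$, Wilks' theorem \cite[Thm.~10.3.1]{casella2003statistical}, applied under the stated regularity conditions, gives $\Lambda_N(\theta,\cdot) \xrightarrow{d} \chi^2_k$ as $N\to\infty$ when $\mathcal{D}'$ is sampled i.i.d.\ from $p(\cdot|\theta)$. Since the chi-squared CDF is continuous at $2\ln(1/\alpha)$, the portmanteau theorem yields
\[
P\bigl(\Lambda_N(\theta,\cdot) > 2\ln(1/\alpha)\,\big|\,\theta\bigr) \;\longrightarrow\; 1-\chi^2_k\bigl(2\ln(1/\alpha)\bigr),
\]
which is the pointwise-in-$\theta$ version of the claim.

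The main obstacle is upgrading this pointwise limit to convergence of the supremum in the definition \eqref{betadef2} of $\beta_\alpha$, since Wilks' theorem is usually stated pointwise. Fortunately the limit $1-\chi^2_k(2\ln(1/\alpha))$ does not depend on $\theta$, so it suffices to establish uniform convergence over $\theta$ on the compact set $\Theta$. I would obtain this by invoking the standard uniform-in-$\theta$ quadratic expansion
\[
\Lambda_N(\theta,\cdot) = N(\hat{\boldsymbol{\theta}}_N-\theta)^\top I(\theta)(\hat{\boldsymbol{\theta}}_N-\theta) + o_P(1),
\]
which is available under the same regularity hypotheses (second differentiability of the log-density, integrability of score and Hessian, continuity and positive definiteness of the Fisher information $I(\theta)$), combined with the uniform CLT $\sqrt{N}\,I(\theta)^{1/2}(\hat{\boldsymbol{\theta}}_N-\theta)\Rightarrow \mathcal{N}(0,\mathrm{Id}_k)$ on the compact $\Theta$. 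Together these imply uniform weak convergence $\Lambda_N(\theta,\cdot)\Rightarrow \chi^2_k$ in $\theta$, hence uniform convergence of the tail probability. Taking the supremum then preserves the common limit and delivers $\beta_\alpha \to 1-\chi^2_k(2\ln(1/\alpha))$.
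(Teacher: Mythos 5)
Your proposal takes essentially the same route as the paper's proof: both recognize the event $\{\theta\notin\Theta_{\mathcal D'}(\alpha)\}$ as the tail event of the log-likelihood ratio statistic $-2\ln\bar p(\mathcal D'|\theta)$ and invoke the Wilks/Casella--Berger asymptotic chi-squared result (the paper re-derives it via a Taylor expansion of the log-likelihood around the MLE, LLN, and Slutsky; you simply cite \cite[Thm.~10.3.1]{casella2003statistical} directly).

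The one genuine point of divergence is that you explicitly flag the $\sup_{\theta\in\Theta}$ in the definition \eqref{betadef2} of $\beta_\alpha$ as needing a uniform-in-$\theta$ version of Wilks' theorem. The paper's proof only establishes the pointwise-in-$\theta$ convergence $-2\ln\bar p(\mathcal D|\theta)\xrightarrow{d}\chi^2_k$ and then passes to the conclusion about $\beta_\alpha$ without commenting on how the supremum commutes with the limit. Your observation that the limiting tail probability $1-\chi^2_k(2\ln(1/\alpha))$ is $\theta$-independent, and that one therefore needs uniform (not merely pointwise) convergence over the compact $\Theta$ to justify the supremum, is correct and is a gap in the paper's argument that you are filling rather than a deviation from it. The sketch you give for the uniform version (uniform quadratic expansion plus uniform CLT for the MLE under the same regularity hypotheses) is the standard way to carry this out, though a fully rigorous version would require verifying those uniform estimates hold under the cited regularity conditions (which in Casella--Berger are stated for fixed $\theta$), so that piece remains a sketch. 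Overall: same approach as the paper, with additional and well-placed attention to the supremum.
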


Consequently, under these conditions  Step \ref{step3}  of Algorithm \ref{Alg: ODAD General} can take the simple form
\begin{enumerate}
 \item[] (Step \ref{step3}): Solve for $\alpha$ satisfying
$\beta_{\alpha}:= 1-\chi^{2}_{k}\bigl( 2 \ln \frac{1}{\alpha}\bigr) =\beta^{*}$
\end{enumerate}

\subsection{Algorithm \ref{Alg: ODAD General} for a Gaussian noise  model}
\label{sec_gaussian}
Consider  a Gaussian noise model where, $X=\R^{r}$ and for $\boldsymbol{\theta} \in \Theta$, the
components of the multisample
$\mathcal{D}:=(\boldsymbol{x}_{1},\ldots,\boldsymbol{x}_{N}) \in \R^{rN}$ are
 i.i.d.~samples
from  the Gaussian distribution
$\mathcal{N}(\boldsymbol{m}\left(\boldsymbol{\theta}\right), \sigma^{2}I_{r})$,
with mean $\boldsymbol{m}\left(\boldsymbol{\theta}\right)$ and covariance
$ \sigma^{2}I_{r}$,
where
$\boldsymbol{m}:\Theta \rightarrow\mathbb{R}^{r}$ is a \emph{measurement function}, $\sigma >0$ and
$I_{r}$ is the $r$-dimensional identity matrix. The  measurement function $\boldsymbol{m}$ is a function  such that the its value $\boldsymbol{m}(\theta)$ can be computed when the model parameter $\boldsymbol{\theta}$ is known.
Therefore the i.i.d.~multisample $\mathcal{D}:=(\boldsymbol{x}_{1},\ldots,\boldsymbol{x}_{N})$ is drawn from
$\bigl(\mathcal{N}(\boldsymbol{m}\left(\boldsymbol{\theta}\right), \sigma^{2}I_{r})\bigr)^{N}$ and so has the
probability density
\begin{equation}
p\left(\mathcal{D}|\boldsymbol{\theta}\right)=\frac{1}{\left(\sigma\sqrt{2\pi}\right)^{rN}}\exp\left(-\frac{1}{2\sigma^{2}}\sum_{j=1}^{N}\left\Vert \boldsymbol{x}_{j}-\boldsymbol{m}\left(\boldsymbol{\theta}\right)\right\Vert^{2}\right)\label{eq: Likelihood0}
\end{equation}
with respect to the Lebesgue measure $\nu$ on $X:=\mathbb{R}^{rN}$,
and defining $\left(\sigma\sqrt{2\pi}\right)^{rN}$ times the maximum likelihood
\begin{equation}
    \label{def_maxlike}
    M(\mathcal{D}):=\exp\left(-\frac{1}{2\sigma^{2}}\inf_{\boldsymbol{\theta} \in \Theta}{\Bigl(\sum_{j=1}^{N}\left\Vert \boldsymbol{x}_{j}-\boldsymbol{m}\left(\boldsymbol{\theta}\right)\right\Vert^{2}\Bigr)}\right)\,,
\end{equation}
the relative likelihood \eqref{def_rellikelihood} is
\begin{equation}
\bar{p}\left(\mathcal{D}|\boldsymbol{\theta}\right)=\frac{\exp\left(-\frac{1}{2\sigma^{2}}\sum_{j=1}^{N}\left\Vert \boldsymbol{x}_{j}-\boldsymbol{m}\left(\boldsymbol{\theta}\right)\right\Vert^{2}\right)}{M(\mathcal{D})}\, .
\label{eq: Likelihood}
\end{equation}
Taking the logarithm of the constraint $\bar{p}\left(\mathcal{D}|\boldsymbol{\theta}_{i}\right)\geqslant\alpha$
 defining the likelihood region  $\Theta_{\mathcal{D}}(\alpha)$, using
\eqref{eq: Likelihood} we obtain
\begin{equation}
\label{eq: Rarity for Gaussian}
\Theta_{\mathcal{D}}(\alpha) = \Bigl\{ \boldsymbol{\theta} \in \Theta : \sum_{j=1}^{N}\bigl\|\boldsymbol{x}_{j}-\boldsymbol{m}\left( \boldsymbol{\theta}\right)\bigr\| ^{2}\leqslant M_{\alpha} \Bigr\}
\end{equation}
in terms of
\begin{equation}
\label{Mdef}
  M_{\alpha}:=\inf_{\boldsymbol{\theta} \in \Theta}{\sum_{j=1}^{N}\bigl\|
 \boldsymbol{x}_{j}-\boldsymbol{m}\left( \boldsymbol{\theta}\right)\bigr \|^{2}} +2\sigma^{2} \ln \frac{1}{\alpha}\, .
\end{equation}
Consequently, for the Gaussian case, the worst-case measure optimization problem
\eqref{eq: Posterior-based Opt} becomes
\begin{equation}
\begin{array}{ll}
\underset{\left\{ \left(w_{i},\boldsymbol{\theta}_{i}\right)\right\} _{i=1}^{m}}{\textrm{maximize}} & \sum_{t=1}^{n}\left(\sum_{i=1}^{m}\varphi_{t}^{2}\left(\boldsymbol{\theta}_{i}\right)w_{i}-\sum_{i,j=1}^{m}w_{i}\varphi_{t}\left(\boldsymbol{\theta}_{i}\right)\varphi_{t}\left(\boldsymbol{\theta}_{j}\right)w_{j}\right)\\
s.t. & \boldsymbol{\theta}_{i} \in \Theta,\,  w_{i}\geqslant0,\quad i=1,\ldots,m\\
 & \sum_{i=1}^{m}w_{i}=1,\\
 & \sum_{j=1}^{N}\left\Vert \boldsymbol{x}_{j}-\boldsymbol{m}\left(\boldsymbol{\theta}_{i}\right)\right\Vert^{2}\leqslant M_{\alpha},\quad i=1,\ldots,m.
\end{array}\label{eq: Posterior-based Opt Gaussian}
\end{equation}
Consequently, in the Gaussian noise case,
   Algorithm \ref{Alg: ODAD General} appears with these modifications:
\begin{enumerate}
\item (Step \ref{step2}): Find MLE $\boldsymbol{\theta}^{*}$ by
 $\boldsymbol{\theta}^{*}\in \underset{\boldsymbol{\theta}}{\arg\min}\sum_{j=1}^{N}\left\Vert \boldsymbol{x}_{j}-\boldsymbol{m}\left(\boldsymbol{\theta}\right)\right\Vert^{2}$
\item (Step \ref{step8a}): Solve
\begin{equation}
\begin{array}{lll}
\bar{\boldsymbol{\theta}} & \in & \underset{\boldsymbol{\theta}}{\arg\max}\left\Vert \boldsymbol{\varphi}\left(\boldsymbol{\theta}\right)-\boldsymbol{c}\right\Vert^{2}\\
 & s.t. & \sum_{j=1}^{N}\left\Vert \boldsymbol{x}_{j}-\boldsymbol{m}\left(\boldsymbol{\theta}\right)\right\Vert^{2}\leqslant M_{\alpha}.
\end{array}\label{eq: ODAD2}
\end{equation}
\end{enumerate}

\subsubsection{Farthest point optimization in the Gaussian model }
\label{sec_merit}
In Step \ref{step8a} of  Algorithm \ref{Alg: ODAD General}  we
seek the farthest point $\bar{\boldsymbol{\theta}}$ from a center
$\boldsymbol{c}$:
\begin{equation}
\begin{array}{lll}
\bar{\boldsymbol{\theta}} & \in & \underset{\boldsymbol{\theta}}{\arg\max}\left\Vert \boldsymbol{\varphi}\left(\boldsymbol{\theta}\right)-\boldsymbol{c}\right\Vert^{2}\\
 & s.t. & \sum_{j=1}^{N}\left\Vert \boldsymbol{x}_{j}-\boldsymbol{m}\left(\boldsymbol{\theta}\right)\right\Vert^{2}\leqslant M_{\alpha}.
\end{array}\label{eq: Farthest point}
\end{equation}
To solve this optimization, we use the merit function technique \cite{nocedal2006numerical}
as follows:
\begin{equation}
\underset{\boldsymbol{\theta}}{\textrm{minimize}}-\left\Vert \boldsymbol{\varphi}\left(\boldsymbol{\theta}\right)-\boldsymbol{c}\right\Vert^{2}+\mu\max\left\{ 0,\sum_{j=1}^{N}\left\Vert \boldsymbol{x}_{j}-\boldsymbol{m}\left(\boldsymbol{\theta}\right)\right\Vert^{2}-M_{\alpha}\right\} .\label{eq: Merit Function}
\end{equation}
In implementation, one should start with a small value of $\mu$ and
increase it to find the optimum \cite{nocedal2006numerical}. The
first term in (\ref{eq: Merit Function}) intends to increase the
distance from the center $\boldsymbol{c}$ and the second term keeps
the solution feasible. Any algorithm picked to solve (\ref{eq: Merit Function})
must be able to slide near the feasibility region $\sum_{j=1}^{N}\left\Vert \boldsymbol{x}_{j}-\boldsymbol{m}\left(\boldsymbol{\theta}\right)\right\Vert^{2}\leqslant M_{\alpha}$
to guarantee a better performance. Suggestions of such algorithms
are the gradient descent \cite{kingma2014adam}, if the gradients
are available, and differential evolution \cite{storn1997differential},
if gradients are not available.

\subsubsection{Surrogate relative likelihoods}
Although the computation of the maximum likelihood $\boldsymbol{\theta}^{*}$ in Step \ref{step2}
of Algorithm \ref{Alg: ODAD General} is only done once -for the observed data $\mathcal{D}$,
 the computation of $\beta_{\alpha}$ in Step \ref{step3} requires it to be computed for all data
$\mathcal{D}'$ generated by the statistical model.   Simplification of this computation  can be obtained by
large sample $N$ approximations, see Section \ref{sec_hamed}, or the utilization of  a surrogate relative likelihood as discussed in Remark \ref{rmk_surrogate}, which we now address.

 Let the generic multisample be  $\mathcal{D}':=(x'_{1},\ldots,x'_{N})$
 in the computation of $\beta_{\alpha}$ in \eqref{betadef2}, and
consider the upper bound on the maximum likelihood of the Gaussian noise model \eqref{eq: Likelihood0}
\begin{eqnarray*}
 \sup_{\theta \in \Theta}{ p\left(\mathcal{D}'|\boldsymbol{\theta}\right)}& =&
  \frac{1}{\left(\sigma\sqrt{2\pi}\right)^{rN}} \sup_{\theta \in \Theta}{\exp\left(-\frac{1}{2\sigma^{2}}\sum_{j=1}^{N}\bigl\| \boldsymbol{x}'_{j}-\boldsymbol{m}\left(\boldsymbol{\theta}\right)\bigr\|^{2}\right)}\\
& \leq&
  \frac{1}{\left(\sigma\sqrt{2\pi}\right)^{rN}} \sup_{\boldsymbol{m} \in \R^{r}}{\exp\left(-\frac{1}{2\sigma^{2}}\sum_{j=1}^{N}\bigl\| \boldsymbol{x}'_{j}-\boldsymbol{m}\bigr\|^{2}\right)}\\
& =&
  \frac{1}{\left(\sigma\sqrt{2\pi}\right)^{rN}} \exp\left(-\frac{1}{2\sigma^{2}}\sum_{j=1}^{N}\bigl\| \boldsymbol{x}'_{j}-\frac{1}{N}\sum_{k=1}^{N}{
 \boldsymbol{x}'_{k}}\bigr\|^{2}\right),
\end{eqnarray*}
so that the  resulting
 surrogate relative
likelihood (using the same symbol as the relative likelihood)   discussed in Remark \ref{rmk_surrogate} becomes
\begin{equation}
\bar{p}\left(\mathcal{D}'|\boldsymbol{\theta}\right)=\frac{
\exp\left(-\frac{1}{2\sigma^{2}}\sum_{j=1}^{N}\bigl\| \boldsymbol{x}'_{j}-\boldsymbol{m}\left(\boldsymbol{\theta}\right)\bigr\|^{2}\right)}
{\exp\left(-\frac{1}{2\sigma^{2}}\sum_{j=1}^{N}\bigl\| \boldsymbol{x}'_{j}-\frac{1}{N}\sum_{k=1}^{N}{
 \boldsymbol{x}'_{k}}\bigr\|^{2}\right)} ,
\label{eq: Likelihood3}
\end{equation}
 and therefore the condition
$\bar{p}\left(\cdot|\boldsymbol{\theta}\right) < \alpha$ in
the computation of the surrogate significance $\beta'_{\alpha} \geq \beta_{\alpha} $  defined in \eqref{betadef2} in terms of the surrogate relative likelihood \eqref{eq: Likelihood3} in Step \ref{step3} appears as
\begin{equation}
\label{eijuuurtuur}
 \sum_{j=1}^{N}\bigl\|\boldsymbol{x}'_{j}-\boldsymbol{m}\left( \boldsymbol{\theta}\right)\bigr\| ^{2}-  \sum_{j=1}^{N}\bigl\| \boldsymbol{x}'_{j}-\frac{1}{N}\sum_{k=1}^{N}{
 \boldsymbol{x}'_{k}}\bigr\|^{2}> 2\sigma^{2} \ln \frac{1}{\alpha}   .
\end{equation}
Rewriting in terms of the $N(0,\sigma^{2}I_{r})$ Gaussian random variables
\[
  \epsilon_{i}:=\boldsymbol{x}'_{j}-\boldsymbol{m}( \boldsymbol{\theta}), i=1,\ldots, N \]
 we obtain
\begin{eqnarray*}
 \sum_{j=1}^{N}\bigl\|\boldsymbol{x}'_{j}-\boldsymbol{m}\left( \boldsymbol{\theta}\right)\bigr\| ^{2}-  \sum_{j=1}^{N}\bigl\| \boldsymbol{x}'_{j}-\frac{1}{N}\sum_{k=1}^{N}{
 \boldsymbol{x}'_{k}}\bigr\|^{2}
 &=&
\sum_{j=1}^{N}\bigl\|\epsilon_{j}\bigr\| ^{2}-  \sum_{j=1}^{N}\bigl\| \epsilon_{j}-
\frac{1}{N}\sum_{k=1}^{N}{
\epsilon_{k}}\bigr\|^{2}\\
 &=&2
\sum_{j=1}^{N}\bigl\langle \epsilon_{j},\frac{1}{N}\sum_{k=1}^{N}{
\epsilon_{k}} \rangle  - \bigl\|
\frac{1}{N}\sum_{k=1}^{N}{
\epsilon_{k}}\bigr\|^{2}\\
&=&(2N-1)
\Bigl\|
\frac{1}{N}\sum_{k=1}^{N}{
\epsilon_{k}}\Bigr\|^{2} ,
\end{eqnarray*}
that is
\begin{equation}
\label{eooooieiiiei}
 \sum_{j=1}^{N}\bigl\|\boldsymbol{x}'_{j}-\boldsymbol{m}\left( \boldsymbol{\theta}\right)\bigr\| ^{2}-  \sum_{j=1}^{N}\bigl\| \boldsymbol{x}'_{j}-\frac{1}{N}\sum_{k=1}^{N}{
 \boldsymbol{x}'_{k}}\bigr\|^{2}=(2N-1)\|v\|^{2}
\end{equation}
where
\[ v:=\frac{1}{N}\sum_{k=1}^{N}{
\epsilon_{k}}\]
is Gaussian with mean zero and, since the $\epsilon_{k}$ are i.i.d,  have  covariance  $\sigma^{2} I_{r}$, that is
$v \in N(0,\frac{\sigma^{2}}{N}I_{r})$.
 Since Schott
\cite[Thm.~9.9]{schott2016matrix} implies that
$\frac{N}{\sigma^{2}}\|v\|^{2} $ is distributed as $\chi^{2}_{r}$, it follows from
  \eqref{eijuuurtuur}, \eqref{eooooieiiiei} and the definition
of the surrogate significance $\beta'_{\alpha}$ \eqref{betadef2}  that
\begin{equation}
\label{beta_chi}
\beta'_{\alpha}=1-\chi^{2}_{r}\bigl(\frac{2N}{2N-1}\ln \frac{1}{\alpha}\bigr) \geq \beta_{\alpha}.
\end{equation}

Consequently, removing the prime indicating the surrogate significance $\beta'_{\alpha}$, denoting it as $\beta_{\alpha}$,  the modifications \eqref{eq: ODAD2} to
   Algorithm \ref{Alg: ODAD General} are augmented  to:
\begin{enumerate}
\item (Step \ref{step2}): Find MLE $\boldsymbol{\theta}^{*}$ by
 $\boldsymbol{\theta}^{*}\in \underset{\boldsymbol{\theta}}{\arg\min}\sum_{j=1}^{N}\left\Vert \boldsymbol{x}_{j}-\boldsymbol{m}\left(\boldsymbol{\theta}\right)\right\Vert^{2}$
\item (Step \ref{step3}): Solve for $\alpha$ satisfying
$\beta_{\alpha}:= 1-\chi^{2}_{r}\bigl(\frac{2N}{2N-1}\ln \frac{1}{\alpha}\bigr) =\beta^{*}$
\item (Step \ref{step8a}): Solve
\begin{equation}
\begin{array}{lll}
\bar{\boldsymbol{\theta}} & \in & \underset{\boldsymbol{\theta}}{\arg\max}\left\Vert \boldsymbol{\varphi}\left(\boldsymbol{\theta}\right)-\boldsymbol{c}\right\Vert^{2}\\
 & s.t. & \sum_{j=1}^{N}\left\Vert \boldsymbol{x}_{j}-\boldsymbol{m}\left(\boldsymbol{\theta}\right)\right\Vert^{2}\leqslant M_{\alpha}.
\end{array}\label{eq: ODAD3}
\end{equation}
\end{enumerate}

\subsection{Stochastic processes}

We now consider the case where $\Theta\subseteq\mathbb{R}^k$ and the data is the  multisample $\mathcal{D}:=(x_{1},\ldots,x_{N})$  with $x_i=(y_i,t_i)\in \mathcal{Y}\times \mathcal{T}$ where  $y_i$ corresponds to the observation of a stochastic process at time $t_i$. Letting $\theta$ parameterize  the distribution of the stochastic process and assuming  the $y_i$ to be independent given $\theta$ and the $t_i$, the model density takes the form
(A) $p(\mathcal{D}|\theta):=\Pi_{i=1}^{N}{p(x_{i}|\theta, t_i) q(t_i)}$
if the $t_i$ are assumed to be i.i.d. with distribution $Q$ (and density $q$ with respect to some given base measure on $\mathcal{T}$), (B) and $p\big((x_1,\ldots,x_N)\big|\theta, (t_1,\ldots,t_N)\big):=\Pi_{i=1}^{N}{p(x_{i}|\theta, t_i)}$ if the $t_i$ are assumed to be arbitrary.
Observe that  the model densities for cases (A) and (B) are proportional and, as a consequence, given the $t_i$ (arbitrary or sampled), they share the same likelihood region
$ \Theta_{\mathcal{D}}(\alpha) = \bigl\{\theta \in \Theta: \Pi_{i=1}^N p(y_i|\theta, t_i)\geq \alpha \sup_{\theta'} \Pi_{i=1}^N p(y_i|\theta', t_i)\bigr\}$.
Let $Q_N:=\frac{\updelta_{t_1}+\cdots+\updelta_{t_N}}{N}\in \mathcal{P}(\mathcal{T})$ be the empirical probability distribution defined by $(t_1,\ldots,t_n)$. And assume $\mathcal{T}$ to be a compact subset of a finite dimensional Euclidean space.
 The following theorem indicates the result of Theorem \ref{thm_asympt} remains valid in case (B) if $Q_N\to Q$ (e.g. when $\mathcal{T}=[0,1]$ and $Q$ is the uniform distribution and the $t_i=i/N$).

\begin{Theorem}\label{thmasymcaseB}
Assume that  the model density $(y,t)\rightarrow p(y|\theta,t) q(t)$  satisfies the regularity conditions of Casella and Berger \cite[Section.~10.6.2]{casella2003statistical}, and that  $Q_N\to Q$ (in the sense of weak convergence)  as $N\to \infty$. Then in both cases (A) and (B) the limit
$\beta_\alpha \to 1-\chi^2_k\big(2\ln \frac{1}{\alpha}\big)$ holds true as $N\to \infty$.
\end{Theorem}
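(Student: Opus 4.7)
The plan is to reduce both cases to a Wilks-type theorem for the log-likelihood ratio, and then apply the same final bookkeeping used in Theorem \ref{thm_asympt}. Case (A) is essentially immediate: since the $(y_i,t_i)$ are i.i.d.\ samples from the joint density $(y,t)\mapsto p(y|\theta,t)q(t)$ on $\mathcal{Y}\times\mathcal{T}$ with parameter $\theta\in\Theta\subseteq\mathbb{R}^k$, and since $q$ does not depend on $\theta$, the joint density inherits the regularity hypotheses assumed for $p$, and its score and Fisher information in $\theta$ agree with the $\theta$-score and $\theta$-information of $p(y|\theta,t)$ integrated against $q(t)\,d\nu(y)$. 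Theorem \ref{thm_asympt} then applies verbatim, yielding $\beta_\alpha \to 1-\chi^2_k(2\ln\frac{1}{\alpha})$ in case (A).

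Case (B) is the substantive part. Here $(y_1,\ldots,y_N)$ are independent but not identically distributed given the (arbitrary or conditioned) design $(t_1,\ldots,t_N)$. I would apply a non-i.i.d.\ Wilks theorem to the log-likelihood $\ell_N(\theta)=\sum_{i=1}^N\log p(y_i|\theta,t_i)$. Introduce the average Fisher information at the design
\[
I_N(\theta):=\frac{1}{N}\sum_{i=1}^N I(\theta,t_i)=\int_{\mathcal{T}} I(\theta,t)\,dQ_N(t),
\]
where $I(\theta,t)$ is the one-observation information under $p(\cdot|\theta,t)$. Because $\mathcal{T}$ is a compact subset of Euclidean space and the Casella--Berger regularity hypotheses make $t\mapsto I(\theta,t)$ continuous and bounded, the weak convergence $Q_N\to Q$ gives $I_N(\theta)\to I(\theta):=\int I(\theta,t)\,dQ(t)$. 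A Lindeberg triangular-array CLT for the normalized score $\nabla_\theta\ell_N(\theta)/\sqrt{N}$, together with uniform convergence of the Hessian, then yields asymptotic normality of the MLE $\hat\theta_N$ with covariance $N^{-1}I(\theta)^{-1}$, and the standard quadratic expansion of the log-likelihood ratio gives
\[
-2\log\bar p(\mathcal{D}|\theta)=N(\hat\theta_N-\theta)^\top I_N(\theta)(\hat\theta_N-\theta)+o_P(1)\xrightarrow{d}\chi^2_k
\]
when $\mathcal{D}$ is generated at $\theta$. Inverting the event $\bar p(\mathcal{D}|\theta)<\alpha$ to $-2\log\bar p(\mathcal{D}|\theta)>2\ln\frac{1}{\alpha}$ and taking the supremum over $\theta\in\Theta$ (as in Theorem \ref{thm_asympt}) produces $\beta_\alpha\to 1-\chi^2_k(2\ln\frac{1}{\alpha})$.

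The main obstacle is the combination of two non-routine points: (i) replacing the i.i.d.\ Fisher information in the classical Wilks argument by the design-dependent average $I_N(\theta)$ and showing that it stabilizes, which is exactly where the weak convergence $Q_N\to Q$ together with the compactness of $\mathcal{T}$ and continuity of $I(\theta,\cdot)$ is used, and (ii) pushing a pointwise-in-$\theta$ limit through the supremum defining $\beta_\alpha$. For (ii) one wants uniform consistency of $\hat\theta_N$ on $\Theta$ and uniform-in-$\theta$ control of the remainder in the quadratic expansion; these follow from equicontinuity of the score and Hessian on the compact $\Theta\times\mathcal{T}$, which is already implicit in the Casella--Berger regularity. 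Thus the extension to case (B) costs only the substitution $I(\theta)=\int I(\theta,t)\,dQ(t)$ and the verification that the Lindeberg condition holds uniformly on $\Theta$; everything else mirrors the proof of Theorem \ref{thm_asympt}.
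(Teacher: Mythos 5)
Your proposal is correct and follows essentially the same route as the paper's (very terse) proof: Case (A) is deferred to Theorem \ref{thm_asympt} directly, and Case (B) is handled by replacing the i.i.d.\ law of large numbers for the averaged Hessian with a convergence argument driven by $Q_N\to Q$ (precisely the paper's replacement of \eqref{eqcasa} by \eqref{eqcasab}, which is your $I_N(\theta)=\int I(\theta,t)\,dQ_N(t)\to I(\theta)$), together with consistency and asymptotic normality of the MLE in the independent-but-not-identically-distributed setting, which the paper outsources to an adaptation of Dudley's lecture notes and you make explicit via a Lindeberg triangular-array CLT for the score. You additionally flag that the pointwise-in-$\theta$ Wilks limit must be passed through the supremum defining $\beta_\alpha$, requiring uniform-in-$\theta$ control; this is a genuine refinement that the paper (and the proof of Theorem \ref{thm_asympt} on which it rests) leaves implicit.
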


\section{Minimum enclosing ball algorithm}
\label{sec_miniball}
Let $K \subset \R^{n}$ be a compact subset and let
$B\supset K$, with center $z$ and radius $R$, be the smallest closed ball containing $K$. Together
 Theorem
\ref{thm_duala} and Theorem \ref{thm_red}  demonstrate that the minimum enclosing ball exists and is unique.
The problem of  computing the minimum enclosing ball has received a considerable amount of attention, beginning with Sylvester \cite{sylvester1857question} in 1857.
Probably the most cited method is that of Welzl \cite{welzl1991smallest}, which,
 by \cite[Thm.~2]{welzl1991smallest},  achieves the solution in expected $\mathcal{O}((n+1)(n+1)!|K|)$ time,
where $|K|$ is the cardinality of the set $K$.   Yildirim \cite{yildirim2008two}
provides two algorithms which  converge to an $\epsilon$-approximate minimum enclosing ball in $\mathcal{O}(\frac{|K|n}{\epsilon})$ computations and
provides a historical review of the literature along with extensive references.

Although Yildirim does address the infinite $K$ situation, we provide a new  algorithm, Algorithm
 \ref{Alg: Miniball}, based on
 that of B\u{a}doiu,  Har-Peled and Indyk \cite[p.~251]{badoiu2002approximate}, to  approximately compute the minimum enclosing ball $B$ containing a (possibly infinite)  compact set
 $K$ in $\R^{n}$, using
the approximate computation of maximal distances from the set $K$ to fixed points in $\R^{n}$.
To  that end, let  $\MINIBALL$ denote an existing algorithm for computing the minimum enclosing ball
for sets of size $ \leq n+2$. As we will demonstrate,
the FOR loop  in Algorithm \ref{Alg: Miniball} always gets broken at Step \ref{forloopbreak} for some $x$ since, by Caratheodory's theorem,
 see e.g.~Rockafellar \cite{rockafellar1970convex,},
a minimum enclosing ball in $n$ dimensions is always determined by $n+1$ points.

For $\delta \geq 0$, and a function $f:X \rightarrow \R$, let $\arg \max^{\delta}{f}$ denote a
 $\delta$-approximate maximizer in the following sense;   $x^{*} \in \arg \max^{\delta}{f}$ if
\[  f(x^{*}) \geq \frac{1}{1+\delta}\sup_{x \in X}{f(x)}.\]
For, $\epsilon >0$, the $\epsilon$-enlargement $B^{(1+\epsilon)}(x,r)$ of a closed ball $B(x,r)$ with center $x$ and radius $r$
is the closed ball $B(x,(1+\epsilon)r)$. In the following algorithm, $\delta$ is a parameter quantifying
the degree of optimality of distance maximizations and $\epsilon$ is a parameter specifying
the accuracy required of the produced estimate to the minimum enclosing ball.
\begin{algorithm}[!ht]
\begin{algorithmic}[1]
\STATE\label{step3glocs} Inputs: $\epsilon \in [0,1), \delta \geq 0$ , $ K$ and $ \MINIBALL$ for sets of size $\leq n+2$
\STATE   $(x_{\alpha},x_{\beta})  \leftarrow  \arg\max^{\delta}_{x,x'\in K}{\|x-x'\|}$
\STATE $A_{0} \leftarrow \{x_{\alpha},x_{\beta}\}$
\STATE   $0 \leftarrow k$ (\text{iteration counter})
\REPEAT
\STATE  \label{newball} $B_{k}\leftarrow \MINIBALL(A_{k})$
\WHILE {$|A_{k}| > n+1$}
\FOR  {$x \in A_{k}$}
\STATE {$ B^{x}_{k}\leftarrow \MINIBALL(A_{k}\setminus \{x\})$}
\STATE \label{forloopbreak} {if $B^{x}_{k}=B_{k}$
then $ A_{k} \leftarrow A_{k}\setminus \{x\}$ and break loop}
\ENDFOR
\ENDWHILE
\STATE $z_{k} \leftarrow \Center(B_{k})$
\STATE \label{distantpoint} $x_{k+1} \leftarrow  \arg \max^{\delta}_{x\in K}{\|x-z_{k}\|}$
\STATE  \label{newpoint} $A_{k+1} \leftarrow A_{k}\cup \{x_{k+1}\}$
\STATE    $k \leftarrow k+1$
\UNTIL  \label{untils} {$x_{k} \in B^{(1+\epsilon)}_{k-1}$}
\RETURN \label{returns} {$B^{(1+\epsilon)(1+\delta)}_{k-1},A_{k-1}$}
\end{algorithmic}
\caption{Miniball algorithm}\label{Alg: Miniball}

\end{algorithm}
The following theorem demonstrates that Algorithm \ref{Alg: Miniball} produces an approximation with guaranteed accuracy to the minimum enclosing ball in a quantified finite number of steps.
\begin{Theorem}
\label{thm_miniball}
For a  compact subset $K\subset \R^{n}$, let $R$ denote the radius of the minimum enclosing ball of $K$.
Then, for $\epsilon \in [0,1), \delta \geq 0$, Algorithm \ref{Alg: Miniball} converges to a ball $B^{*}$  satisfying
\[ B^{*} \supset K\] and
\[ R\bigl(B^{*}\bigr) \leq  (1+\epsilon)(1+\delta)R  \]
in at most $\frac{16}{\epsilon^{2}}(1+2\delta)$ steps of the REPEAT loop. Moreover, the size of the working set $A_{k}$ is bounded by
  \[|A_{k}|\leq \min {\Bigl(2+\frac{16}{\epsilon^{2}}(1+2\delta), n+2\Bigr)}\]
 for all $k$.
\end{Theorem}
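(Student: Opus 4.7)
My plan is to follow the coreset-style analysis of Bădoiu--Har-Peled--Indyk~\cite{badoiu2002approximate}, adapted to account for the $\delta$-approximate farthest-point oracle, and to handle the working-set size bound separately by a Carathéodory argument.

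\emph{Correctness at termination.} When the REPEAT loop exits at some index $k$, the condition $x_k\in B^{(1+\epsilon)}_{k-1}$ reads $\|x_k-z_{k-1}\|\leq (1+\epsilon)r_{k-1}$, where $r_{k-1}$ is the radius of $B_{k-1}=\MINIBALL(A_{k-1})$. By Step~\ref{distantpoint}, $x_k$ is a $\delta$-approximate maximizer of $x\mapsto \|x-z_{k-1}\|$ on $K$, so
\[
\sup_{x\in K}\|x-z_{k-1}\|\;\leq\;(1+\delta)\|x_k-z_{k-1}\|\;\leq\;(1+\delta)(1+\epsilon)\,r_{k-1}\, ,
\]
which gives $K\subset B^{(1+\epsilon)(1+\delta)}_{k-1}=B^{*}$. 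Since $A_{k-1}\subset K$ is enclosed by the true minimum enclosing ball of $K$ of radius $R$, minimality of $\MINIBALL(A_{k-1})$ yields $r_{k-1}\leq R$, so $R(B^{*})\leq (1+\epsilon)(1+\delta)R$, as claimed.

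\emph{Iteration count.} The quantitative heart of the argument is a per-iteration radius-growth lemma: at any non-terminating step $k$ one has $\|x_k-z_{k-1}\|>(1+\epsilon)r_{k-1}$, and I would establish
\[
r_k^{2}\;\geq\;r_{k-1}^{2}\Bigl(1+\tfrac{\epsilon^{2}}{8}\Bigr)\,,
\]
by restricting to the two-dimensional affine slice through $z_{k-1},z_k,x_k$ and applying the KKT/Carathéodory characterization of the minimum enclosing ball (the center $z_{k-1}$ lies in the convex hull of the contact set $\partial B_{k-1}\cap A_{k-1}$, so adjoining a point $x_k$ outside the $(1+\epsilon)$-enlargement of $B_{k-1}$ forces a quantitative displacement of the center and a quantitative increase of the squared radius). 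Combining this with the upper bound $r_k\leq R$ and the initialization
\[
r_0=\tfrac12\|x_{\alpha}-x_{\beta}\|\;\geq\;\tfrac{1}{2(1+\delta)}\mathrm{diam}(K)\;\geq\;\tfrac{R}{\sqrt{2}(1+\delta)}\, ,
\]
where the last inequality is Jung's theorem, iterating the growth bound $T$ times gives $(1+\epsilon^{2}/8)^{T}\leq R^{2}/r_0^{2}\leq 2(1+\delta)^{2}$, and using $\ln(1+u)\geq u/2$ on $[0,1]$ together with $\ln(1+\delta)\leq \delta$ yields the claimed bound $T\leq \tfrac{16}{\epsilon^{2}}(1+2\delta)$.

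\emph{Working-set size and main obstacle.} In iteration $k$ the inner WHILE loop removes any $x\in A_k$ for which $\MINIBALL(A_k\setminus\{x\})=\MINIBALL(A_k)$. By Carathéodory's theorem applied to the variational characterization of the minimum enclosing ball --- the center $z_k$ lies in the convex hull of the contact set $A_k\cap \partial B_k$ --- whenever $|A_k|>n+1$ a redundant point must exist, so the WHILE loop terminates with $|A_k|\leq n+1$; the subsequent Step~\ref{newpoint} produces $|A_{k+1}|\leq n+2$. A second bound $|A_k|\leq 2+\tfrac{16}{\epsilon^{2}}(1+2\delta)$ follows from $|A_0|=2$, the iteration count, and the fact that each iteration adjoins exactly one point; the claimed estimate is the minimum of the two. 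The hardest step is pinning down the constant in the radius-growth lemma and verifying that the oracle slack $\delta$ propagates only through the initialization $r_0$ (via the approximate diameter) and not through the growth inequality itself, since the failure of termination is phrased as a strict inequality $\|x_k-z_{k-1}\|>(1+\epsilon)r_{k-1}$ that already uses the actual value of the oracle output.
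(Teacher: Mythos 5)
Your analysis for $\epsilon>0$ is sound and differs from the paper's accounting in an interesting way. The paper cites the per-iteration growth $R(A_{k+1}) \geq (1+\tfrac{\epsilon^{2}}{16})R(A_{k})$ directly from B\u{a}doiu--Har-Peled--Indyk (their Claim~2.4), then bounds the iteration count \emph{additively}: since $R(A_{0})\geq \frac{\Delta}{2(1+\delta)}$ the radius increases by at least $\frac{\epsilon^{2}}{32(1+\delta)}\Delta$ per step, and since it is bounded above by $\Delta$ the number of steps is at most $\frac{16}{\epsilon^{2}}(1+2\delta)$. You instead iterate the inequality \emph{multiplicatively} and invoke Jung's theorem to convert $r_{0}\geq \frac{\Delta}{2(1+\delta)}$ into $r_{0}\geq \frac{R}{\sqrt{2}(1+\delta)}$, closing via $\ln(1+u)\geq u/2$. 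Both give the same bound; yours avoids the diameter as an explicit upper bound but requires Jung's theorem, whereas the paper's additive version is more elementary. Your observation that the $\delta$ slack enters only through the initialization and not through the growth inequality is correct and worth making explicit. You do not need to re-derive the growth lemma: it is precisely the cited Claim~2.4 of B\u{a}doiu--Har-Peled--Indyk (which in turn rests on a lemma of Goel et al.), so you can treat it as a black box exactly as the paper does; trying to re-prove it with your own constant $1+\epsilon^{2}/8$ on the squared radius is unnecessary work. The Carath\'eodory argument for the working-set size bound matches the paper's.

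The genuine gap is the case $\epsilon=0$, which the theorem explicitly allows. For $\epsilon=0$ the iteration bound is vacuous, but the convergence claim $R(B^{*})\leq(1+\delta)R$ is not, and the algorithm need not terminate in finitely many steps. The paper devotes a separate compactness argument to this: by Bolzano--Weierstrass it extracts nested subsequences along which the centers $C(A_{t})$, the farthest points $y^{t}$, and the centers $C(A_{t}\cup\{y^{t}\})$ all converge; it then sets $d:=\max_{x\in K}\operatorname{dist}(x,B(C_{1},R_{0}))$ and uses a Pythagoras-type estimate on the lens $B(C_{1},R_{0}+\epsilon_{1})\cap B(C_{2},R_{0}+\epsilon_{1})$ to force $d=0$, hence $K\subset B(C_{1},R_{0})$ and $B(C_{1},R_{0})=B(K)$; the general $\delta\geq0$ case is then handled by applying the $\delta=0$ result to the set $K'$ of oracle-selected points and comparing $R(K')\leq R(K)\leq(1+\delta)R(K')$. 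None of this is addressed in your proposal, so as written your argument only establishes the theorem for $\epsilon\in(0,1)$.
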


\section{Examples}\label{sec:example}
\subsection{Gaussian Mean Estimation}
Consider the problem of estimating the mean $\theta^\dagger$ of a Gaussian distribution $\mathcal{N}(\theta^\dagger, \sigma^2)$
with known variance $\sigma^2>0$ from the observation  of one sample $x$ from that distribution and from the information that
$\theta^\dagger\in [-\tau,\tau]$ for some given $\tau>0$.
Note that this problem can be formulated in the setting of  Problem \ref{pb1} by letting  (1) $P(\cdot|\theta)$ be the Gaussian distribution on $X:=\R$ with mean $\theta$ and variance $\sigma^2$,
  (2) $\Theta:=[-\tau,\tau]$ and $V:=\R$  and (3) $\varphi:\Theta \rightarrow V$ be the identity map $\varphi(\theta)=\theta$.
The relative likelihood \label{def_rellikelihood} is 
\begin{equation}
\label{eq_surr}
  \bar{p}(x|\theta)= \frac{e^{-\frac{1}{2\sigma^{2}}|x-\theta|^{2}}}{\sup_{\theta \in \Theta} e^{-\frac{1}{2\sigma^{2}}|x-\theta|^{2}}}
\end{equation}
with the supremum in the denominator achieved at the closest $\theta \in \Theta$ to $x$ ($x$ itself if $x \in [-\tau, \tau]$). This defines the likelihood region
$\Theta_{x}(\alpha):=\{\theta \in \Theta:
 \bar{p}(x|\theta) \geq  \alpha\}$. A simple calculation yields, for the case $x \in [-\tau, \tau]$
\begin{equation}
\Theta_x(\alpha) = \Bigl[\max\bigl(-\tau, x-\sqrt{2\sigma^2\ln(1/\alpha)}\bigr)\,,\, \min\bigl(\tau, x+\sqrt{2\sigma^2\ln(1/\alpha)}\bigr)\Bigr].
\end{equation}

Using Theorem \ref{pioptreduced} with $m=dim(V)+1=2$, for $\alpha \in [0,1]$, one can compute a saddle point $(\pi^\alpha,d^\alpha)$ of the game \eqref{Lalphadefa2}
as
\begin{equation}
\pi^\alpha=w \updelta_{\theta_{1}}+(1-w) \updelta_{\theta_{2}}\text{ and } d^\alpha=w\theta_{1}+(1-w) \theta_{2}
\end{equation}
where $w,  \theta_{1}, \theta_2$  maximize the variance
\begin{equation}\label{Opt:Gaussian}
\begin{cases}
    \text{Maximize}&\quad w \theta_1^2 + (1 - w) \theta_2^2 - \left( w \theta_1 + (1 - w) \theta_2 \right)^2 \\
    \text{over}& \quad 0 \leq w \leq 1,\quad \theta_1, \theta_2 \in [-\tau, \tau] \\
    \text{subject to}& \quad  \frac{(x - \theta_i)^2}{2 \sigma^2} \leq \ln \frac{1}{ \alpha}, \qquad i = 1, 2,
\end{cases}
\end{equation}
where the last two constraints are equivalent to the
rarity assumption $ \theta_{i}\in \Theta_{x}(\alpha)$.

Hence for $\alpha$ near $0$, $\Theta_{x}(\alpha) =\Theta = [-\tau, \tau]$, and by Theorem \ref{thm_duala},   the variance is maximized by placing each Dirac on each boundary point of the  region $\Theta$, each receiving half of the total probability mass, that is by $\theta_1=-\tau$, $\theta_2=\tau$ and $w=1/2$, in which case $\Var{\pi^\alpha}=\tau^2$ and $d^\alpha=0$.
For $\alpha=1$, the rarity constraint implies $\theta_1=\theta_2=x$ when $x \in [-\tau,\tau]$, leading to the
 MLE $d^\alpha=x$ with $\Var{\pi^\alpha}=0$.
Note that from \eqref{betadef} we have
\begin{align*}
    \beta_{\alpha} &= \sup_{\theta \in [-\tau,\tau]} \mathbb{P}_{x' \sim \mathcal{N}(\theta,\sigma^2)} \big[ \bar{p}(x' | \theta) <  \alpha \big]
\end{align*}
which can be computed analytically for this example using \ref{eq_surr} and separating into the three cases $x < -\tau$, $x \in [-\tau, \tau]$ and $x > \tau$. We illustrate in Figure \ref{fig:4plotsnormalexample} the different results of solving the optimization problem \eqref{Opt:Gaussian} in the case $\sigma^2 = 1$, $x = 1.5$ and $\tau = 3$. We plot the $\alpha - \beta$ curve (top left), the likelihood of the model in $[-\tau, \tau]$ and the $\alpha-$level sets (top right), and the evolutions of the risk with $\beta$ (bottom left), and the optimal decision with $\beta$ (bottom right). Since, by Theorem \ref{thm_duala}, the optimal decision is the midpoint of the interval with extremes in either the $\alpha-$level sets or $\pm \tau$, we observe that for low $\beta$, our optimal decision does not coincide with the MLE.
\begin{figure}
    \centering
    \includegraphics[width = 0.8\textwidth]{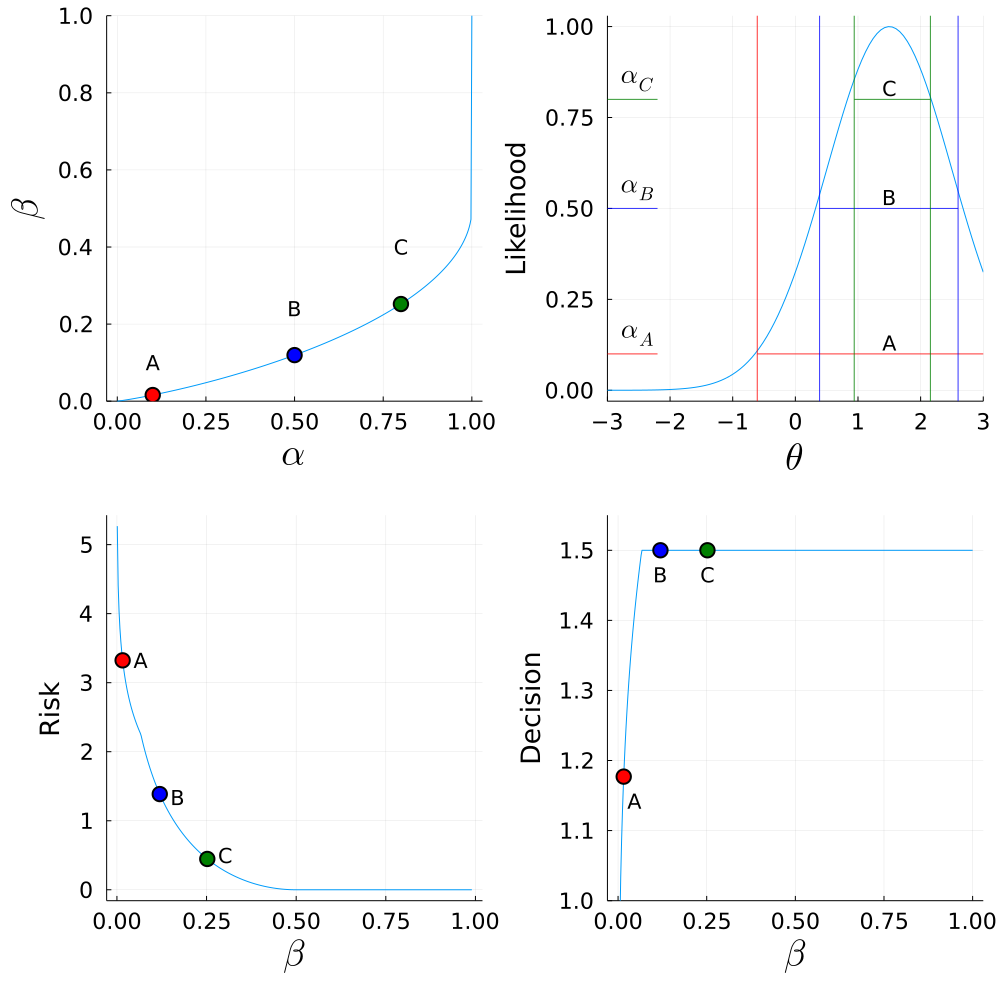}
    \caption{$\alpha - \beta$ relation, likelihood level sets, risk value and decision for different choices of $\alpha$ (and consequently $\beta$) for the normal mean estimation problem with $\tau = 3$ and observed value $x = 1.5$. Three different values in the $\alpha-\beta$ curve are highlighted across the plots}
    \label{fig:4plotsnormalexample}
\end{figure}

\subsection{Estimation of a quadratic function}
\label{sec_gradient}

The measurement function $\boldsymbol{m}$ of Section \ref{sec_LV}, being defined as the solution of the Lotka-Volterra predator-prey model as a function of its parameters $\boldsymbol{\theta}\in \Theta$,  does not appear simple to differentiate and therefore \texttt{SciPy}'s version
of Storn and Price's \cite{storn1997differential} Differential Evolution optimizer  \cite{2020SciPy-NMeth} was used to perform the farthest point optimization problem in Step \ref{step8a}
in Algorithm \ref{Alg: ODAD General}. In this section, we test this framework on a problem
which does not possess this complication; estimating the parameters  $\boldsymbol{\theta}:=
(\theta_0,\theta_1,\theta_2)$ of a  quadratic function
\[  \boldsymbol{m}(t;\boldsymbol{\theta}):=\theta_0 + \theta_1 t + \theta_2 t^2 \]
 on a uniform grid $T$ of the interval $(0,5)$
consisting of $100$ points,
using noisy observational data.
In this case, we can use automatic differentiation  in the merit function technique
of Section \ref{sec_merit}
 to perform the farthest point optimization problem in Step \ref{step8a} using gradient descent methods  via automatic differentiating modules  available in packages like autograd, or
 computing the gradient and applying a gradient descent method.

We proceed as in Section \ref{sec_LV}
 with $\Theta:=[-30,30]^{3}$
 and   assume that, given
$\boldsymbol{\theta} \in \Theta$, a single sample path
$\mathcal{D}:= \{\boldsymbol{x}\}:=\bigl(\boldsymbol{x}^{(t)}\bigr)_{t=1}^{100}$
is generated on the grid $T$
to the stochastic process
\begin{equation} \label{quad_stat_model}
    \boldsymbol{x}^{(t)} =  \boldsymbol{m}(t;\boldsymbol{\theta}) + \epsilon_t, \quad \epsilon_t \sim \mathcal{N}(\mathbf{0}, \sigma^2 ), \quad t \in T,
\end{equation}
with $\sigma^{2}:=10$. Consequently $X:=\R^{100}$.
We let the decision space be $V=\R^{3}$ and the quantity of interest $\varphi:\Theta \rightarrow \R^{3}$ be the identity function.

For the experiment, we generate a single ($N:=1$) full sample path $\mathcal{D}:=\{\boldsymbol{x}\}:=\bigl(\boldsymbol{x}^{(t)}\bigr)_{t=1}^{100}$
 according to \eqref{quad_stat_model} at the unknown values
$\boldsymbol{\theta}^{*}:=(\theta^{*}_{1},\theta^{*}_{2}, \theta^{*}_{3})=(1,.5,1)$.
 As discussed in Section \ref{sec_merit},  we tune $\mu$ in the merit
 function \eqref{eq: Merit Function}, and set gradient descent with adaptive moment estimation optimizer \cite{pytorch} with parameters (e.g. learning rate = 0.001, max epochs=50,000) to achieve full convergence.
We observe the convergence plots for the increment of $\theta$s from the ball center as diagnostic.

Figure \ref{fig:quad_beta_alpha} shows the $\beta$ vs $\alpha$ relationship
defined by  the surrogate significance \eqref{beta_chi} derived from the
surrogate likelihood method
 with $N:=1$ and $r:=100$,
 the risk as function of $\alpha $,  and the likelihood regions, their minimum enclosing balls and the optimal decisions (centers of the balls), for $\alpha \in [0,1]$. As can be seen the optimal decisions, being the centers of the minimum enclosing balls, do not move and only the size of the minimum enclosing balls change, resulting
in various risk values associated with the same optimal estimates.

Finally, Figure \ref{fig:quad_beta_alpha} shows the results for the maximum likelihood solution and  the two supporting points of the minimum enclosing balls for the case that $\beta_{\alpha}=\beta^{*} =0.05 $. From this experiment, we obtained the optimal decision
$d^{*}=(0.24,1.22,0.89)$ along with the two support points $S=\bigl\{(-2.27 ,  3.52,  0.48), (2.75, -1.10,  1.31)\bigr\}$ of the minimum enclosing ball. For the sake of comparison, we also performed the same experimentation with
\texttt{SciPy}'s version of Storn and Price's \cite{storn1997differential} Differential Evolution optimizer  \cite{2020SciPy-NMeth}
 at  the  default  settings,  to
perform the farthest point optimization problem in Step \ref{step8a}
in Algorithm \ref{Alg: ODAD General}, using
 the merit function \eqref{eq: Merit Function} of   Section \ref{sec_merit},
and obtained similar results.

\begin{figure}[htp]
    \centering
    \includegraphics[scale=0.5]{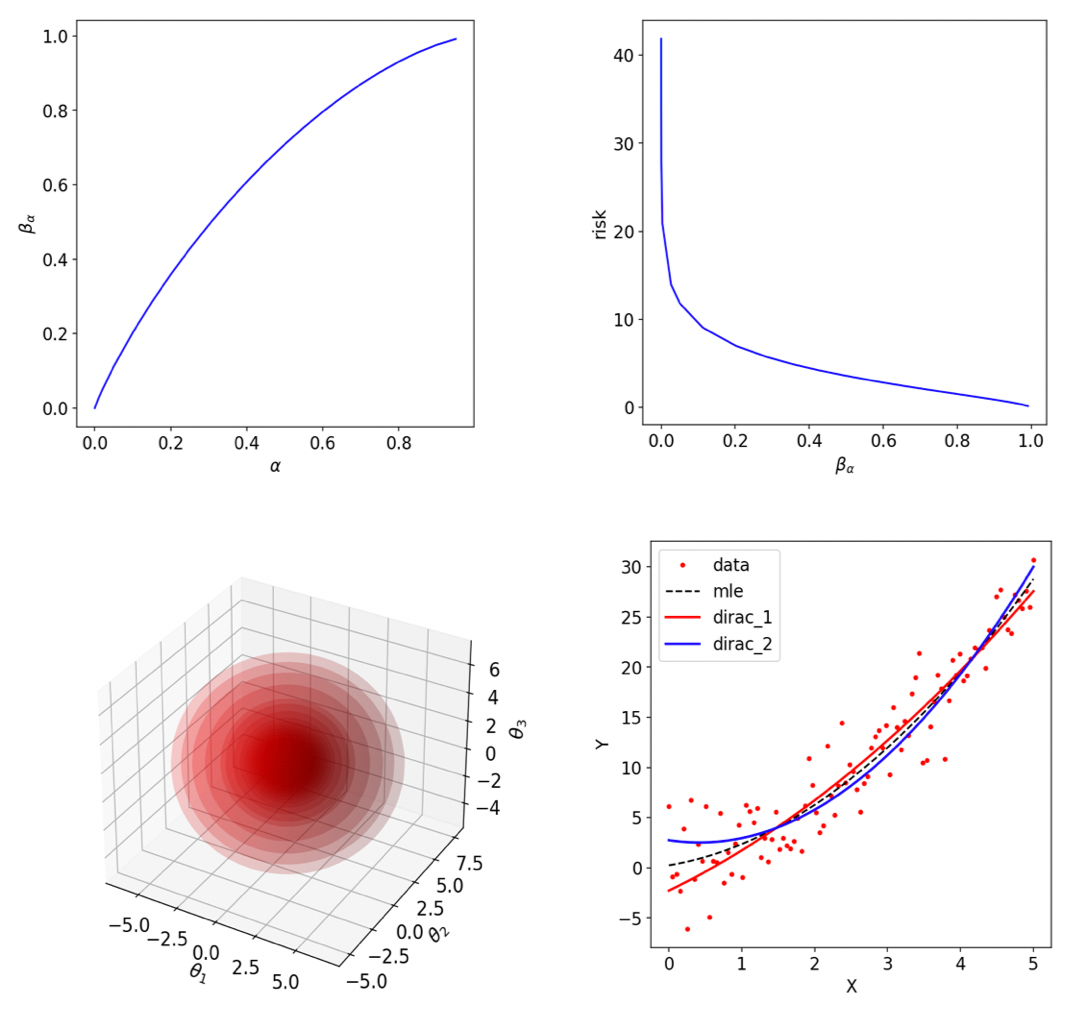}

    \caption{Quadratic model results:  (left-top)  $\beta_\alpha$ vs $\alpha$,
  (right-top)  risk vs  $\beta_\alpha$, (left-bottom)
  the supporting surfaces with the decision points at the center of each surface and (right-bottom) maximum-likelihood solution and supporting points of the minimum enclosing ball for $\beta_{\alpha}=\beta^{*} =.05 $ }
    \label{fig:quad_beta_alpha}
\end{figure}

\subsection{Estimation of a Lotka-Volterra predator-prey model}

\label{sec_LV}
Here we implement Algorithm \ref{Alg: ODAD General} for the Gaussian noise model of
 Section \ref{sec_gaussian},
where the measurement function $ (\theta_{1},\theta_{2}) \mapsto
 \boldsymbol{m}(\theta_{1},\theta_{2})$
is defined as the solution  map of the
 Lotka-Volterra \cite{lotka}  predator-prey model
\begin{align}
    \frac{dx}{dt} &= \theta_1 x - \eta xy \\
    \frac{dy}{dt} &= \xi xy - \theta_2 y,
\end{align}
evaluated on the uniform time grid
 $T := \{ t_i \}_{i=1}^{200}$ such that $t_0=0$ and $t_{200} = 20$,
with fixed and known  parameters $\eta, \xi$ and  initial data $x_{0},y_{0}$,
 describing the evolution of  a prey population with variable $x$ and a predator population with variable $y$. As such, denoting $\boldsymbol{\theta}:=(\theta_{1},\theta_{2}) \in \Theta$,
 we denote the solution map
 \[ \boldsymbol{\theta} \mapsto
\boldsymbol{m}(t;\boldsymbol{\theta}), t \in T, \]
by
 $\boldsymbol{m}:\Theta \rightarrow  (\R^{2})^{T} $.
For the
 probabilistic model, we let $\Theta:=[-5,5]^{2}$ and   assume  the Gaussian model \eqref{eq: Likelihood} with $N=1$, where
 the data $\mathcal{D}$ consists of a single sample path
$\mathcal{D}:=\{\boldsymbol{x}\}:=\bigl(\boldsymbol{x}^{(t)}\bigr)_{t=1}^{200}$
of  the $T$-indexed stochastic process
\begin{equation} \label{lot_vol_stat_model}
    \boldsymbol{x}^{(t)} = \begin{bmatrix} x_t \\ y_t \end{bmatrix} = \boldsymbol{m}(t;\boldsymbol{\theta}) + \epsilon_t, \quad \epsilon_t \sim \mathcal{N}(\mathbf{0}, \sigma^2 \mathbf{I}), \quad t \in T,
\end{equation}
where $\mathbf{I} \in \mathbb{R}^{2 \times 2}$ is the identity matrix and $\sigma = 5$.
Note that in the notation of \eqref{eq: Likelihood} we have
\[ \| \boldsymbol{x}-\boldsymbol{m}\left(\boldsymbol{\theta}\right)\|^{2}
=\sum_{t\in T}{\| \boldsymbol{x}^{(t)}-\boldsymbol{m}\left(t;\boldsymbol{\theta}\right)\|^{2}}.\]

Let the decision space be $V:=\R^{2}$ and
let the quantity of interest $\varphi:\Theta \rightarrow \R^{2}$ be the identity.
For the experiment, we generate one sample path $\mathcal{D}:=\{\boldsymbol{x}\}:=\bigl(\boldsymbol{x}^{(t)}\bigr)_{t=1}^{200}$
 according to \eqref{lot_vol_stat_model} at the unknown values
$\boldsymbol{\theta}^{*}:=(\theta^{*}_{1},\theta^{*}_{2})=(0.55,0.8)$, with
  $x_0 = 30$, $y_0 = 10$,  $\eta = 0.025$ and $\xi = 0.02$ known.
  We consider the evolution $ t \mapsto \boldsymbol{m}(t;\boldsymbol{\theta}^{*}),\, t \in T,$
  the {\em true predator-prey} values and the sample path $\bigl(\boldsymbol{x}^{(t)}\bigr)_{t=1}^{200}$
  as noisy observations of it.
   The resulting time series $\mathcal{D}$  is shown in the top image of Figure \ref{fig:lot_vol_uq_plot}.

For $\alpha \in [0,1]$,
 by taking  the logarithm of the defining relation \eqref{eqnormrarset} of the
 likelihood region $\Theta_{\mathcal{D}}(\alpha)$, we obtain the representation  \eqref{eq: Rarity for Gaussian},
\[\Theta_{\mathcal{D}}(\alpha) = \Bigl\{ \boldsymbol{\theta} \in \Theta : \sum_{t \in T}\bigl\|\boldsymbol{x}^{\left(t\right)}-\boldsymbol{m}\left(t; \boldsymbol{\theta}\right)\bigr\| ^{2}\leqslant M_{\alpha} \Bigr\}  \]
in terms of
\[  M_{\alpha}:=\inf_{\boldsymbol{\theta} \in \Theta}{\sum_{t\in T}\bigl\| \boldsymbol{x}^{\left(t\right)}-\boldsymbol{m}\left(t; \boldsymbol{\theta}\right)\bigr \|^{2}} +2\sigma^{2} \ln \frac{1}{\alpha}\, , \]
for the likelihood region $\Theta_{\mathcal{D}}(\alpha)$ in terms of the data
 $\mathcal{D}$.

To  determine $\alpha$  at significance level $\beta^{*}:=.05$, we approximate
the significance $\beta_{\alpha}$  defined in
 \eqref{betadef}
using  the chi-squared approximation \eqref{beta_chi}
and then select $\alpha$ to be the value such that this approximation yields  $\beta_{\alpha}=\beta^{*}=.05$. The validity of this approximation for this example is additionally demonstrated in the right image of Figure \ref{fig:chisq_mc_comparison}, which shows via Monte Carlo simulation that the $1-\beta_\alpha$ versus $\alpha$ curve is well characterized by the $\chi^2_2$ distribution.

Having selected $\alpha$,   to  implement Algorithm \ref{Alg: ODAD General}, we need to select an optimizer for
Step \ref{step8a}. Instead of computing the Jacobian of the
  solution map $\boldsymbol{m}$, here we utilize the gradient-free method
of \texttt{SciPy}'s version of Storn and Price's \cite{storn1997differential} Differential Evolution optimizer  \cite{2020SciPy-NMeth}
 at  the  default  settings.
 Given the data generating value $\boldsymbol{\theta}^{*}=  (0.55,0.8)$,
 the primary feasible region $\Theta:=[-5,5]^2$ is sufficiently non-suggestive of the
 the data generating value $\boldsymbol{\theta}^{*}$.
Finally, since $dim(V)=2$,  Algorithm  \ref{Alg: ODAD General} produces a set  $S$
 of at most three  boundary points of $\Theta_{\mathcal{D}}(\alpha)$,
 the minimum enclosing ball $B$  of  $\Theta_{\mathcal{D}}(\alpha)$, its center as the
optimal estimate of $\boldsymbol{\theta}^{*}$ and the weights
of the set $S$ corresponding to a worst-case measure, optimal for the variance maximization problem
\eqref{piopt}. The results  are displayed  in Figure \ref{fig:lotka_volterra_min_enclosing_result}.
\begin{figure}[htp]
    \centering
    \includegraphics[width= 0.8 \textwidth]{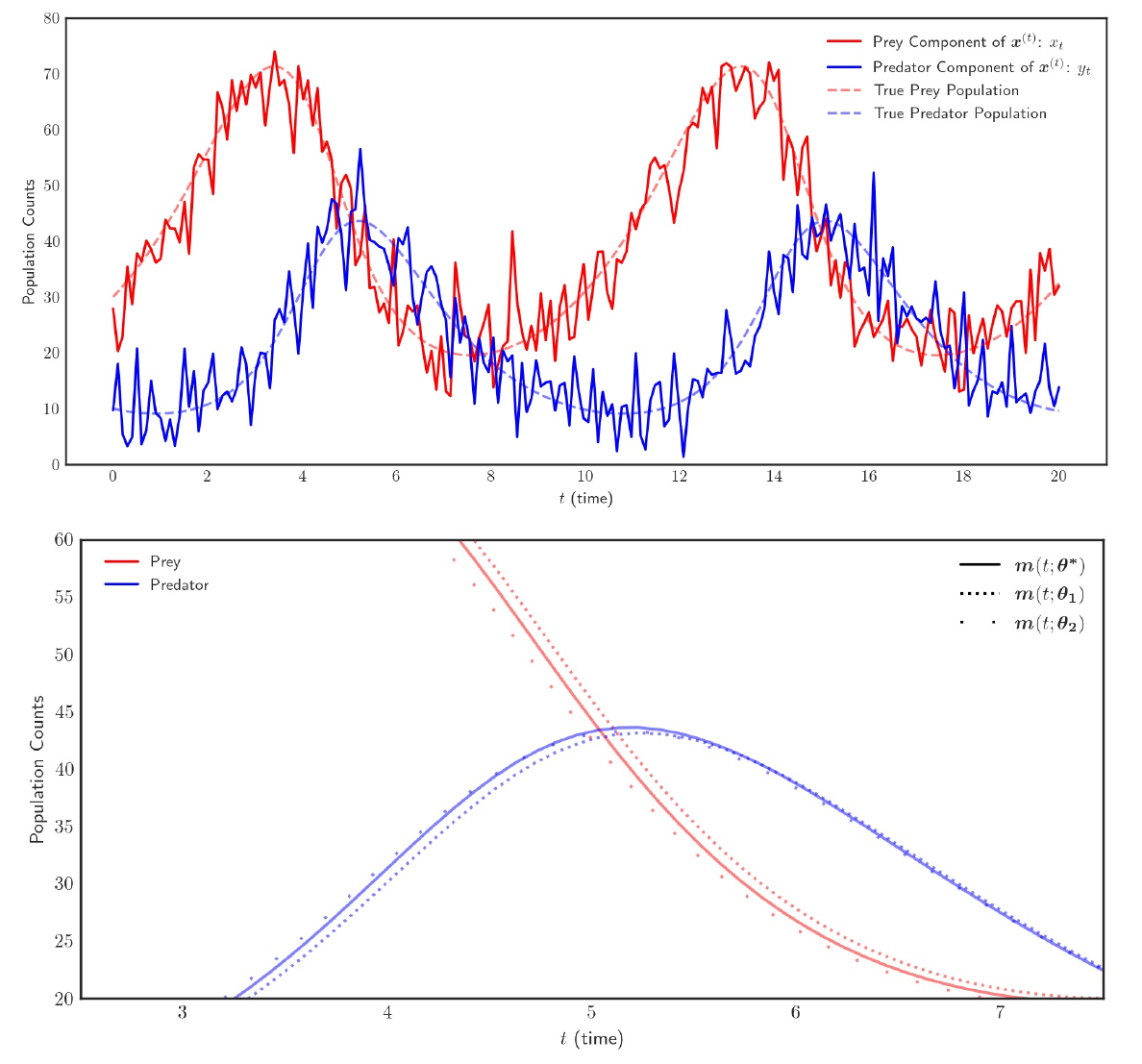}
    \caption{\textbf{(Top image)} Data $\mathcal{D}:=\boldsymbol{x}^{(t)}, t \in T$, generated,
 according to the Gaussian noise model \eqref{lot_vol_stat_model}:
 solid red
 is the prey component and solid blue the predator component of the generated data
$\boldsymbol{x}^{(t)}$,  the  dotted red and dotted blue are  the prey-predator
 components of the Lotka-Volterra solution
$\boldsymbol{m}(t,\boldsymbol{\theta}^{*})$ for  $t \in T$. \textbf{(Bottom image)} Uncertainty in the population dynamics corresponding to the worst-case measure:
  (1) red is prey and blue is predator,
(2) solid line  is the Lotka-Volterra evolution $ \boldsymbol{m}(t,
\boldsymbol{\theta}^{*}), t \in T $, fine dots  $ \boldsymbol{m}(t,
\boldsymbol{\theta}_{1}), t \in T,$ and coarse dots  $ \boldsymbol{m}(t,
\boldsymbol{\theta}_{2})$, where $S:=\{\boldsymbol{\theta}_{1}, \boldsymbol{\theta}_{2}\}$ is the set of support points of the worst case (posterior) measure (located on the boundary of the minimum enclosing ball).}
    \label{fig:lot_vol_uq_plot}
\end{figure}

\begin{figure}[htp]
    \centering
    \includegraphics[width= \textwidth]{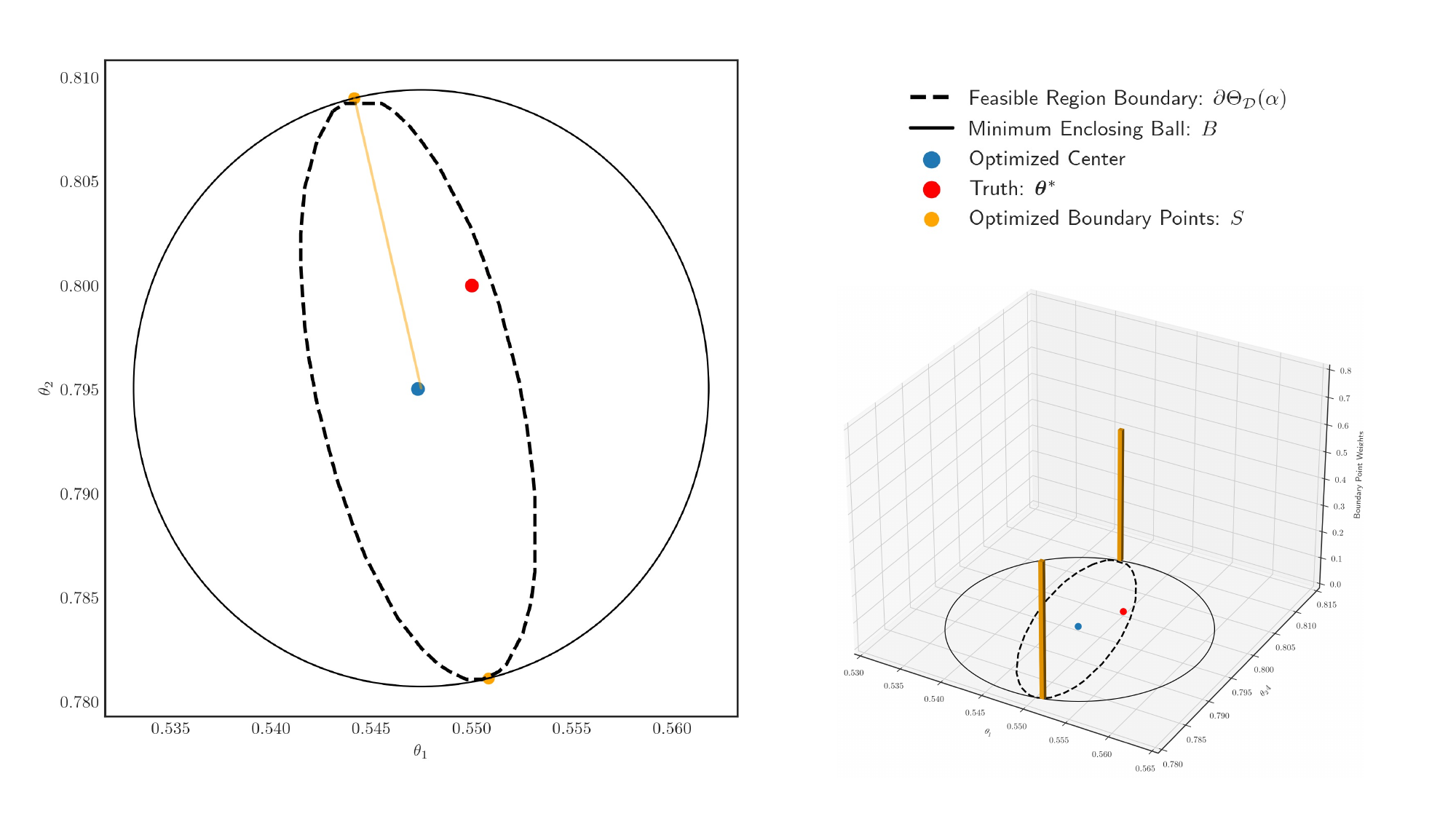}
    \caption{Minimum enclosing ball for the  Lotka-Volterra Model: (left)
the dashed line indicates the boundary
of the likelihood region $\Theta_{\mathcal{D}}(\alpha)$,  the solid circle its minimum enclosing ball,
the red point the data generating value $\boldsymbol{\theta}^{*}$ and the blue point the center of the minimum enclosing ball and the optimal estimate of $\boldsymbol{\theta}^{*}$. The two yellow points
 comprise the set $S:=\{\boldsymbol{\theta}_{1}, \boldsymbol{\theta}_{2}\}$. (right) a projected view with the yellow columns indicating the weights $(.5,.5)$ of the set $S$ in their determination of a worst-case (posterior) measure.}
    \label{fig:lotka_volterra_min_enclosing_result}
\end{figure}
\begin{figure}[htp]
    \centering
    \includegraphics[width= 0.6\textwidth]{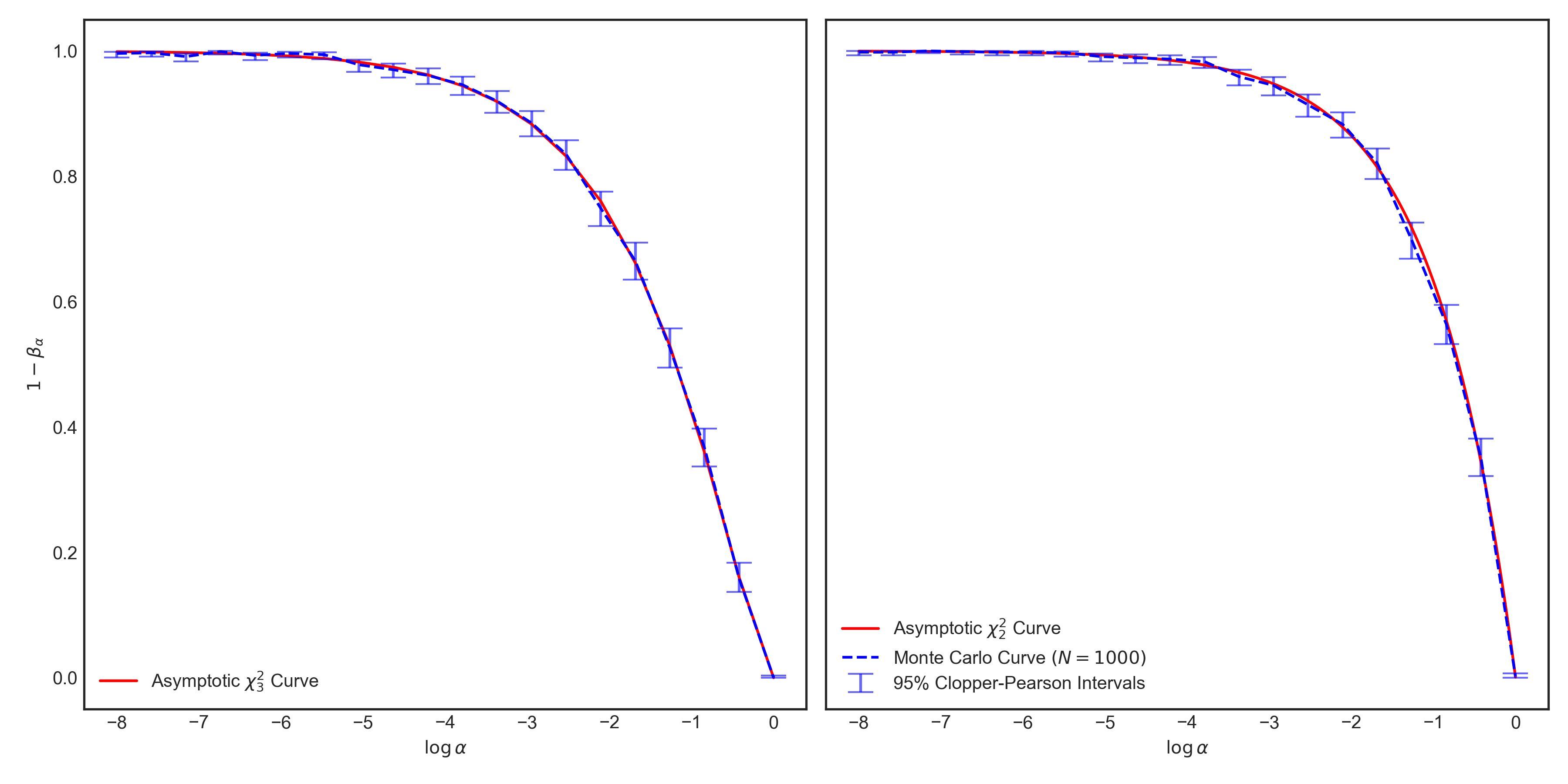}
    \caption{Monte Carlo numerically confirms the result from Theorem \ref{thm_asympt} in both the quadratic function estimation \textbf{(left image)} and Lotka-Volterra \textbf{(right image)} examples.}
    \label{fig:chisq_mc_comparison}
\end{figure}

To get a sense of the output uncertainty of $\boldsymbol{m}(\cdot)$ with these optimized results, we plot the predator and prey population dynamics associated with each optimized boundary point of
 $\Theta_{\mathcal{D}}(\alpha)$ in Figure \ref{fig:lot_vol_uq_plot}. This figure shows that with significance value $\beta_{\alpha} = 0.05$, the optimized boundary points create population dynamics in a tight band around the true population dynamics.

\section{General loss functions and rarity assumptions}
\label{sec_generalloss}
Here we generalize the framework introduced in Section \ref{sec_intro} to allow more general loss functions than the $\ell^{2}$ loss,
in for example Equations \eqref{worstcase}, \eqref{risk_bayes}, and \eqref{LalphadefW0l2}, and more general rarity assumptions than
\eqref{rarityass}.

The discussions of worst-case, robust Bayes, and Wald's statistical decision theory generalize in a straightforward manner, so we focus
on generalizing the current UQ of the 4th kind.
Let
$\ell:V \times V \rightarrow \R_{+}$ be a loss function.
 In addition to the pointwise rarity assumption \eqref{rarityass}, consider an {\em integral rarity assumption}
 $\mathcal{P}^{\Psi}_{x}(\alpha) \subset \mathcal{P}(\Theta)$ determined by a real-valued function $\Psi$, defined by
\begin{equation}
\label{PalphaH0}  \mathcal{P}^{\Psi}_{x}(\alpha):=
\Bigl\{\pi \in  \mathcal{P}(\Theta): \int_{\Theta}{\Psi\bigl(\bar{p}(x|\theta)\bigr)d\pi(\theta)} \geq \alpha\Bigr\}
 \end{equation}
generalizing \eqref{Palpha1},
Note that for $\Psi$ the identity function $ \mathcal{P}_{x}(\alpha) \subset  \mathcal{P}^{\Psi}_{x}(\alpha)$ and by comparison with  \eqref{Palpha1}
it follows from the following remark that such integral rarity assumptions  can also alleviate the brittleness of Bayesian inference.
\begin{Remark}
Jensen's inequality implies
that
\[\Psi\Bigl(\int_{\Theta}{\bar{p}(x|\theta)d\pi(\theta)}\Bigr) \leq \int_{\Theta}{\Psi\bigl(\bar{p}(x|\theta))\bigr)d\pi(\theta)}
\]
when the function $\Psi$ is convex,
and
\[\Psi\Bigl(\int_{\Theta}{\bar{p}(x|\theta)d\pi(\theta)}\Bigr)\geq
 \int_{\Theta}{\Psi\bigl(\bar{p}(x|\theta))\bigr)d\pi(\theta)}
\]
when the function $\Psi$ is concave, such as when $\Psi$ is a logarithm.
Consequently, when $\Psi$ is concave and strictly increasing
the assumption
\[\int_{\Theta}{\Psi\bigl(\bar{p}(x|\theta)\bigr)d\pi(\theta)} \geq \alpha\]
implies that the denominator
in the conditional measure \eqref{eq_conditional} satisfies
\[\int_{\Theta}{\bar{p}(x|\theta)d\pi(\theta)} \geq \Psi^{-1}( \alpha)\, .\]
Consequently, such constraints, by keeping the denominator in the conditional measure bound away from zero stabilize the numerical computation of the conditional measure in the numerical computation of a worst-case measure.
\end{Remark}

The following applies equally as well for the pointwise rarity assumption
\eqref{rarityass} and the integral rarity assumption \eqref{PalphaH0}. For simplicity of exposition, we restrict
to the pointwise rarity assumption.
 Generalizing \eqref{Lalphadefa}, consider playing a game
using the loss
\begin{equation}
\label{Lalphadef0}
 \L(\pi,d):=\E_{\theta \sim \pi_{x}}\bigl[\ell(\varphi(\theta),d)\bigr], \quad    \pi \in
  \mathcal{P}_{x}(\alpha), d \in V\,  .
\end{equation}
For the pointwise rarity assumption, the same logic following \eqref{Lalphadefa} implies that this game is equivalent to
the generalization \eqref{Lalphadefa2}
to
a game
using the loss
\begin{equation}
\label{Lalphadef}
 \L(\pi,d):=\E_{\theta \sim \pi}\bigl[\ell(\varphi(\theta),d)\bigr], \quad    \pi \in
  \mathcal{P}_{x}(\alpha), d \in V\,  .
\end{equation}
For the integral rarity assumption, we maintain the form \eqref{Lalphadef0}.
$\beta_{\alpha}$ is defined before as in \eqref{betadef}
and  the selection of
  $\alpha \in [0,1]$ is as before.

Under  the mild conditions of Theorem \ref{thm_minmax}, we can show that, for each $\alpha \in [0,1]$,  a maxmin optimal solution $\pi^{\alpha}$ of $ \max_{\pi \in \mathcal{P}_{x}(\alpha) }\min_{d \in V}{\L(\pi,d)}$
can be computed.
Moreover, by Theorem \ref{thm_minmax}, it also follows that the saddle function $\L$ in \eqref{Lalphadef}
satisfies the conditions of Sion's minmax theorem \cite{sion1958general}, resulting
 in a minmax
result for the  game \eqref{Lalphadef};
\begin{equation}
\label{Lalphaminmax}
\min_{d \in V}\max_{\pi \in \mathcal{P}_{x}(\alpha)}{\L(\pi,d)}=\max_{\pi \in \mathcal{P}_{x}(\alpha)}
\min_{d\in V}{\L(\pi,d)}\,.
\end{equation}
Consequently if, for each $\pi \in \mathcal{P}_{x}(\alpha)$,  we select
\begin{equation}
\label{dpi}
 d_{\pi} \in  \arg \min_{d \in V}{\L(\pi,d)},
\end{equation}
then a worst-case measure for the game \eqref{Lalphadef}, solving the maxmin problem on the right-hand side of \eqref{Lalphaminmax},  satisfies
\begin{equation}
\label{pistar}
 \pi^{\alpha}\in \arg \max_{\pi \in \mathcal{P}_{x}(\alpha)}{\L(\pi,d_{\pi})} \, .
\end{equation}
Moreover let $d^{\alpha} \in \arg \min_{d \in V}\max_{\pi \in \mathcal{P}_{x}(\alpha)}{\L(\pi,d)}$ denote any solution to
the minmax problem on the left-hand side of \eqref{Lalphaminmax}.
Then it is well known that the minmax equality \eqref{Lalphaminmax} implies that  the pair $(\pi^{\alpha},d^{\alpha})$ is a saddle
point of $\L$ in that
\begin{equation}
    \label{eq_saddle3}
\L(\pi, d^{\alpha}) \leq \L(\pi^{\alpha},d^{\alpha}) \leq \L(\pi^{\alpha},d),\qquad  \pi  \in \mathcal{P}_{x}(\alpha), \,\,  d \in V\, .
\end{equation}
Since the solution  to
\begin{equation}
\label{dpistar}
d_{\pi^{\alpha}}:=\arg \min_{d \in V}{\L(\pi^{\alpha},d)}\,
\end{equation}
is uniquely defined under the conditions of Theorem \ref{thm_red} (identical to those of Theorem \ref{thm_minmax}), it follows from the right-hand side of the saddle equation \eqref{eq_saddle}
that $d_{\pi^{\alpha}}=d^{\alpha}$ and $(\pi^{\alpha},d_{\pi^{\alpha}})$ is a saddle
point of $\L$, that is we have
\begin{equation}
    \label{eq_saddle2}
\L(\pi, d_{\pi^{\alpha}}) \leq \L(\pi^{\alpha},d_{\pi^{\alpha}}) \leq \L(\pi^{\alpha},d),\qquad  \pi  \in \mathcal{P}_{x}(\alpha), \,\,  d \in V\, .
\end{equation}
Moreover, its associated risk is the value
\begin{equation}
\label{Rstar}
\mathcal{R}(d_{\pi^{\alpha}}):=\L(\pi^{\alpha},d_{\pi^{\alpha}})\,
\end{equation}
in \eqref{eq_saddle2} of the  two person game defined  in \eqref{Lalphadef}, which is the same for all saddle points of $\L$.

\subsection{Finite-dimensional reduction}
\label{sec_reductionstrict}
Let $\Delta^{m}(\Theta)$ denote the set of convex sums of $m$ Dirac measures located in $\Theta$ and,
 let $\mathcal{P}^{m}_{x}(\alpha) \subset \mathcal{P}_{x}(\alpha)$ defined by
\begin{equation}
\label{PalphaHm0}
 \mathcal{P}^{m}_{x}(\alpha):=
\Bigl\{ \pi  \in  \Delta^{m}(\Theta) \cap \mathcal{P}_{x}(\alpha) \Bigr\}
\, ,
 \end{equation}
denote the  finite-dimensional subset of the  rarity assumption set consisting of
the convex combinations of $m$ Dirac measures in $ \mathcal{P}_{x}(\alpha)$. Then the following reduction Theorem \ref{thm_red} asserts that
\begin{equation}
\label{opt_ODAD202}
\max_{\pi \in \mathcal{P}_{x}(\alpha)} \min_{d\in V}{\L(\pi,d)}=\max_{\pi \in  \mathcal{P}^{m}_{x}(\alpha)}\min_{d\in V}{\L(\pi,d)}
,
\end{equation}
for any $m\geq dim(V)+2$. We note that the improvement to $m\geq dim(V)+1$  when the loss is the $\ell^{2}$ loss follows from
the Lagrangian duality, Theorem \ref{thm_duala},  of the maximization problem \eqref{opt_ODAD202} with the minimum enclosing ball and Caratheodory's theorem.

In the following theorem, applicable to both pointwise and integral rarity assumptions, we provide sufficient conditions that the computation of a worst-case measure in the optimization problem \eqref{pistar}  can be reduced to a finite-dimensional one.
\begin{Theorem}
\label{thm_red}
Let $\Theta$  be compact, $X$ be a measurable space, $\P$ be a positive dominated model such that, for each $x$, its  likelihood function is continuous, and let $\varphi:\Theta \rightarrow \R$ be continuous.
Let $V$ be a finite-dimensional Euclidean space and
let $\ell:V\times V \rightarrow \R_{+}$ be  continuously differentiable,  strictly convex and coercive  in its second variable  and vanishing along the diagonal.
Let $\Delta^{m}(\Theta)$ denote the set of convex sums of $m$ Dirac measures located in $\Theta$ and,
 for $\Psi:(0,1]\rightarrow \R$ upper semicontinuous and $x \in  X$,  consider both the
pointwise rarity assumption subset $  \mathcal{P}^{m}_{x}(\alpha)\subset \mathcal{P}_{x}(\alpha)$
 defined by
\begin{equation}
\label{PalphaHm2}
 \mathcal{P}^{m}_{x}(\alpha):=
\Bigl\{ \pi  \in  \Delta^{m}(\Theta) \cap \mathcal{P}_{x}(\alpha) \Bigr\}
\, ,
 \end{equation}
 and the integral rarity assumption subset
$  \mathcal{P}^{\Psi,m}_{x}(\alpha)\subset \mathcal{P}^{\Psi}_{x}(\alpha)$, defined by
\begin{equation}
\label{PalphaHmI}
 \mathcal{P}^{\Psi,m}_{x}(\alpha):=
\Bigl\{ \pi  \in  \Delta^{m}(\Theta): \int_{\Theta}{\Psi\bigl(p(x|\theta)\bigr)d\pi(\theta)} \geq \alpha \Bigr\}
\, ,
 \end{equation}
 where $p(x|\cdot):\Theta \rightarrow (0,1]$ is the relative likelihood.
Then we have
\begin{equation}
\label{opt_ODAD20}
\max_{\pi \in \mathcal{P}_{x}(\alpha)} \min_{d\in V}{\L(\pi,d)}=\max_{\pi \in  \mathcal{P}^{m}_{x}(\alpha)}\min_{d\in V}{\L(\pi,d)}
,
\end{equation}
for any $m\geq dim(V)+2$, and
\begin{equation}
\label{opt_ODAD2}
\max_{\pi \in \mathcal{P}^{\Psi}_{x}(\alpha)} \min_{d\in V}{\L(\pi,d)}=\max_{\pi \in  \mathcal{P}^{\Psi,m}_{x}(\alpha)}\min_{d\in V}{\L(\pi,d)}
,
\end{equation}
for any $m\geq dim(V)+3$, unless $\Psi$ is the identity function, when
 $m\geq dim(V)+2$.
\end{Theorem}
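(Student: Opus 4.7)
The plan is to reduce to finitely supported measures using Caratheodory's theorem applied to the image of a moment map. Under the hypotheses of the theorem, Theorem \ref{thm_minmax} provides a saddle pair $(\pi^{\alpha},d^{\alpha})$ with $d^{\alpha}=d_{\pi^{\alpha}}$ uniquely determined by the strict convexity and coercivity of $\ell(\cdot,\cdot)$ in its second argument. The aim is to produce a discrete measure $\tilde\pi$ with few atoms that induces the same optimal decision and the same risk.

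For the pointwise case, set $n := \dim V$ and introduce the continuous map $G : \Theta_{x}(\alpha) \to \R^{n+1}$ defined by
\begin{equation*}
G(\theta) := \bigl(\ell(\varphi(\theta),d^{\alpha}),\,\nabla_{d}\ell(\varphi(\theta),d^{\alpha})\bigr).
\end{equation*}
The saddle identity \eqref{eq_saddle2} and differentiability of $\ell$ in its second argument yield the first-order condition $\int \nabla_{d}\ell(\varphi(\theta),d^{\alpha})\,d\pi^{\alpha}(\theta)=0$, so $\int G\,d\pi^{\alpha}=(\L(\pi^{\alpha},d^{\alpha}),0)$. Since $\Theta_{x}(\alpha)$ is compact (by continuity of $\bar p(x|\cdot)$ and compactness of $\Theta$), $G(\Theta_{x}(\alpha))$ is compact, and Caratheodory's theorem in $\R^{n+1}$ yields atoms $\theta_{1},\dots,\theta_{n+2}\in \Theta_{x}(\alpha)$ and weights $w_{i}\geq 0$ summing to one such that $\tilde\pi := \sum_{i} w_{i}\updelta_{\theta_{i}}$ satisfies $\int G\,d\tilde\pi = \int G\,d\pi^{\alpha}$. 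Then $\tilde\pi \in \mathcal{P}^{m}_{x}(\alpha)$ for $m=n+2$, the map $d\mapsto\L(\tilde\pi,d)$ is strictly convex, coercive, and has vanishing gradient at $d^{\alpha}$, forcing $d_{\tilde\pi}=d^{\alpha}$, and $\L(\tilde\pi,d_{\tilde\pi})=\L(\pi^{\alpha},d^{\alpha})$ by preservation of the objective moment. This establishes \eqref{opt_ODAD20}.

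For the integral rarity assumption with a general upper semicontinuous $\Psi$, I would augment $G$ by the further coordinate $\theta\mapsto \Psi(\bar p(x|\theta))$, raising the ambient dimension to $n+2$ and producing $n+3$ atoms via Caratheodory, with the enlarged moment match automatically preserving the constraint $\int \Psi\circ\bar p\,d\pi\geq \alpha$. The sharper count $n+2$ when $\Psi$ is the identity can be obtained by exploiting that the rarity integrand then coincides with the normalizing denominator of the posterior $\pi_{x}$ in \eqref{eq_conditional}: reparameterizing from $\pi$ to the posterior $\sigma=\pi_{x}$ turns the loss \eqref{Lalphadef0} into the purely linear functional $\int \ell(\varphi,d)\,d\sigma$, and the rarity constraint translates into a constraint on $\int (1/\bar p)\,d\sigma$ that can be made active at the worst case and absorbed into the normalization of $\sigma$. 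Only the $n$ first-order moments and the single objective moment remain to be matched, so Caratheodory in $\R^{n+1}$ again suffices, delivering $n+2$ atoms.

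The main obstacle is the identity-$\Psi$ improvement: the change-of-variables between $\pi$ and $\sigma=\pi_{x}$ must be shown to respect the discrete-atom structure (each Dirac of $\sigma$ pulls back to a Dirac of $\pi$ with rescaled weight) and to cover both the active and inactive cases of the rarity inequality; in the inactive case one falls back on the pointwise bound, whereas in the active case the rarity moment is genuinely absorbed. A secondary point is verifying that $\mathcal{P}^{m}_{x}(\alpha)$ and $\mathcal{P}^{\Psi,m}_{x}(\alpha)$ are weak-$*$ closed so that the reduced maxima are actually attained, which follows from upper semicontinuity of $\Psi$ and continuity of $\bar p(x|\cdot)$ together with compactness of $\Theta$.
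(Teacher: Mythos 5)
Your pointwise argument for \eqref{opt_ODAD20} is correct and is a genuinely more direct route than the paper's. In that case $\L(\pi,d)=\E_{\theta\sim\pi}\bigl[\ell(\varphi(\theta),d)\bigr]$ is linear in $\pi$, so matching the $\dim(V)+1$ scalar moments of $G$ by Caratheodory on the compact image $G(\Theta_{x}(\alpha))\subset\R^{\dim(V)+1}$, and then invoking strict convexity and the preserved first-order condition to pin down $d_{\tilde\pi}=d^{\alpha}$, delivers the $\dim(V)+2$ atom bound cleanly. The paper reaches the same count by introducing an auxiliary scalar $\epsilon=\E_{\pi}[p(x|\theta)]$, freezing $(\epsilon,d)$, and applying the extremality theorem \cite[Thm.~4.1]{owhadi2013optimal}, built on von Weizs\"acker--Winkler \cite[Cor.~3]{weizsacker1979integral}, to the resulting linear program over measures; your moment-map argument avoids that machinery.

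The integral-rarity case \eqref{opt_ODAD2} is where your proposal breaks. For $\mathcal{P}^{\Psi}_{x}(\alpha)$ one cannot remove the conditioning: the loss $\L(\pi,d)=\E_{\theta\sim\pi_{x}}\bigl[\ell(\varphi(\theta),d)\bigr]$ is a \emph{ratio} of linear functionals in $\pi$ because the posterior $\pi_{x}=p(x|\cdot)\pi/\int p(x|\cdot)\,d\pi$ carries a $\pi$-dependent normalizer, so the unweighted moments $\int G\,d\pi$ you propose to match do not determine $\L(\pi,d^{\alpha})$ or the first-order condition. You would instead have to match the $p(x|\cdot)$-weighted moments of $\ell(\varphi(\cdot),d^{\alpha})$ and $\nabla_{d}\ell(\varphi(\cdot),d^{\alpha})$, the normalizer $\int p(x|\cdot)\,d\pi$, \emph{and} the rarity moment $\int\Psi(\bar p(x|\cdot))\,d\pi$, i.e.\ $\dim(V)+3$ scalar moments; plain Caratheodory in $\R^{\dim(V)+3}$ then yields $\dim(V)+4$ atoms, one more than the theorem asserts. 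The paper's count is tighter because von Weizs\"acker--Winkler is an \emph{extremality} statement, not a convex-combination statement: the set of probability measures satisfying $k$ linear moment constraints has extreme points supported on at most $k+1$ points, and a linear objective attains its maximum at such an extreme point, so the objective does not consume a Caratheodory dimension. After the $\epsilon$-linearization the inner problem has $k=\dim(V)+2$ constraints ($\dim(V)$ first-order equalities, one normalization equality, one rarity inequality), yielding $\dim(V)+3$. Your identity-$\Psi$ sharpening is likewise not airtight: after the change of variables $\sigma=\pi_{x}$ the rarity constraint becomes $\int(1/\bar p(x|\cdot))\,d\sigma\le 1/\alpha$, which is a genuine further inequality, not something absorbed into the normalization of $\sigma$. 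The reason the constraint drops in the paper is different and concrete: with $\Psi$ the identity, the rarity integral coincides exactly with the frozen auxiliary $\epsilon$, so it becomes a condition on $\epsilon$ rather than a moment constraint on $\pi$.
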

\begin{Remark}
Theorem \ref{thm_red} easily generalizes to vector integral functions $\Psi$ of more general form than \eqref{PalphaHmI}, where the number of Diracs required is then
$m\geq dim(V)+\dim(\Psi)+1$ and $m\geq dim(V)+\dim(\Psi)+2$ for the pointwise and integral cases, respectively.
 \end{Remark}

Shapiro and Kleywegt \cite[Thm.2.1]{Shapiro_minimax} implies one can generalize
 Theorem \ref{thm_red} to the more general class of loss functions $\ell$ which are coercive and convex in the second argument, but  requiring more,
$3\bigl(\dim(V)+1)\bigr)$, Dirac measures in the integral case. See \cite[Prop.~3.1]{Shapiro_minimax}.

The following theorem generalizes
 the duality Theorem \ref{thm_duala} to more
 general convex loss functions.
\begin{Theorem}
\label{thm_minball2}
Let $\varphi:\Theta_{x}(\alpha) \rightarrow V$ be continuous and suppose that the loss function satisfies
function $\ell(v_{1},v_{2})=W(v_{1}-v_{2})$, where $W:V \rightarrow \R_{+}$ is  non-negative,  convex and coercive.
For $x \in X$ and $\alpha \in [0,1]$,
suppose that $\Theta_{x}(\alpha)$ is compact. Then $\varphi\bigl(\Theta_{x}(\alpha)\bigr) \subset V$ is compact. Let
  $\lambda \geq 0$ be the smallest value such that there exists a $z\in V$ with
\[ \varphi(\Theta_{x}(\alpha)) +z \subset W^{-1}([0,\lambda]).\]
Then $\lambda$ is the value of the maxmin problem defined by the game \eqref{Lalphadef}
\begin{equation}
\label{maxminopt}
\lambda= \max_{\pi \in \mathcal{P}_{x}(\alpha)} \min_{d \in V}{\E_{\theta \sim \pi}\bigl[\ell(\varphi(\theta),d)\bigr]}\,,
   \end{equation}
and $\pi^{*}$ is maxmin optimal for it if and only if there exists
\[ d_{*} \in   \argmin_{d \in V}{\E_{\theta \sim \pi^{*}}\bigl[\ell(\varphi(\theta),d)\bigr]}
    \]
with
\[ \supp(\varphi_{*}\pi^{*}) \subset W^{-1}(\lambda)-d_{*}. \]
\end{Theorem}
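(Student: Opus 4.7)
The plan is to recognize the game \eqref{Lalphadef} as the Lagrangian dual of the primal covering problem defining $\lambda$ and apply Sion's minmax theorem. First I would note that $\varphi(\Theta_{x}(\alpha))$ is compact, being the continuous image of a compact set, so the infimum defining $\lambda$ is attained. Because $L(\pi,d):=\E_{\theta\sim\pi}[W(\varphi(\theta)-d)]$ is affine in $\pi$, the maximum over $\mathcal{P}(\Theta_{x}(\alpha))$ reduces to a pointwise maximum over $\theta$, yielding
\[
\min_{d\in V}\,\sup_{\pi\in\mathcal{P}_{x}(\alpha)} L(\pi,d) \;=\; \min_{d\in V}\,\max_{\theta\in\Theta_{x}(\alpha)} W(\varphi(\theta)-d) \;=\; \lambda,
\]
where the last equality comes from substituting $z=-d$ in the definition of $\lambda$. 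Weak duality then gives $\max_{\pi}\min_{d} L\le\lambda$.

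For the reverse inequality I would apply Sion's minmax theorem. The prior set $\mathcal{P}(\Theta_{x}(\alpha))$ is convex and weak-$*$ compact (Banach--Alaoglu together with compactness of $\Theta_{x}(\alpha)$); $L$ is affine and weak-$*$ continuous in $\pi$ because $\theta\mapsto W(\varphi(\theta)-d)$ is continuous on a compact set; and $L$ is convex and continuous in $d$ (inherited from $W$). To handle the non-compactness of $V$, I would use coercivity of $W$ together with boundedness of $\varphi(\Theta_{x}(\alpha))$ to pick a closed ball $\overline{B}_{R}(0)\subset V$ so large that $W(\varphi(\theta)-d)>\lambda+1$ uniformly in $\theta$ whenever $\|d\|>R$; the inner infimum over $d$ is then attained on $\overline{B}_{R}(0)$ for every $\pi$. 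Sion on $\mathcal{P}(\Theta_{x}(\alpha))\times\overline{B}_{R}(0)$ then delivers $\max_{\pi}\min_{d} L=\min_{d}\max_{\pi} L=\lambda$. Existence of a maximiser $\pi^{*}$ follows from upper semicontinuity of $\pi\mapsto\min_{d} L(\pi,d)$ (an infimum over $d$ of continuous affine functions of $\pi$) on the weak-$*$ compact prior set.

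To obtain the support characterization I would run a saddle-point argument. Pick any primal-optimal $\hat{d}\in V$ attaining $\min_{d}\max_{\theta} W(\varphi(\theta)-\hat{d})=\lambda$; the minmax equality makes $(\pi^{*},\hat{d})$ a saddle point of $L$, so the right-hand saddle inequality $L(\pi^{*},\hat{d})\le L(\pi^{*},d)$ for all $d$ places $\hat{d}\in\argmin_{d} L(\pi^{*},d)$, and one may set $d_{*}:=\hat{d}$. Combining the pointwise bound $W(\varphi(\theta)-d_{*})\le\lambda$ on $\Theta_{x}(\alpha)$ with the saddle value $L(\pi^{*},d_{*})=\lambda$ forces $W(\varphi(\theta)-d_{*})=\lambda$ on $\supp(\pi^{*})$, which under pushforward by $\varphi$ is precisely the claimed support condition (up to the sign convention identifying the primal centre $z^{*}=-d_{*}$). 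The converse is immediate: if the support condition holds for some minimiser $d_{*}$, then the integrand is $\pi^{*}$-a.s.\ equal to $\lambda$, so $L(\pi^{*},d_{*})=\lambda$ and $\pi^{*}$ attains $\max_{\pi}\min_{d} L(\pi,d)=\lambda$.

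The main obstacle is the careful application of Sion's theorem with non-compact decision space $V$; coercivity of $W$ is exactly what confines the minimisation to a compact ball uniformly in $\pi$. The remaining steps are standard saddle-point and pushforward bookkeeping.
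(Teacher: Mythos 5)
Your proof is correct and takes a genuinely different route from the paper's. The paper disposes of this theorem in one line by invoking the pushforward identity from the proof of Theorem~\ref{thm_duala} together with Lim and McCann's Theorem~2 (their extension of the Popoviciu-type duality to general convex $W$) as a black box. You instead construct the duality from scratch: you identify $\min_{d}\max_{\theta} W(\varphi(\theta)-d)$ with the primal covering value $\lambda$ (via $z=-d$ and Bauer's principle that an affine functional on $\mathcal{P}(\Theta_{x}(\alpha))$ maximizes at a Dirac), invoke Sion's minmax theorem on $\mathcal{P}(\Theta_{x}(\alpha))\times\overline{B}_{R}$ after using coercivity of $W$ and boundedness of $\varphi(\Theta_{x}(\alpha))$ to confine the inner minimization to a compact ball, and then extract the support characterization from the saddle-point inequalities in the standard way (integrand $\le\lambda$ pointwise plus integral $=\lambda$ forces equality $\pi^{*}$-a.e., and continuity upgrades this to containment of the support). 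The advantage of your route is that it is self-contained and exposes the precise role of each hypothesis; the paper's route is shorter because the cited result does exactly this work. One remark worth making explicit: your derivation yields $\supp(\varphi_{*}\pi^{*})\subset W^{-1}(\lambda)+d_{*}$, whereas the theorem as stated has $W^{-1}(\lambda)-d_{*}$; you flagged this only parenthetically, but the sign in the paper's statement appears to be a typo (check against the $\ell^{2}$ case, where the support must lie on the sphere of radius $\sqrt{\lambda}$ centered at $d_{*}=z^{*}$, the center of the minimum enclosing ball, not at $-d_{*}$), so it is worth stating that clearly rather than hiding it in a parenthetical.
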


\begin{Remark}
Pass \cite{pass2020generalized} generalizes these results to the case where the decision space $V$
 is a not-necessarily affine metric space and the $\ell^{2}$ distance is replaced by the metric.
\end{Remark}

\section{Supporting theorems and proofs}\label{sec:thmproofs}

\subsection{Minmax theorem}

\begin{Theorem}
\label{thm_minmax}
Consider the saddle function  \eqref{Lalphadef}  for the pointwise and \eqref{Lalphadef0} for the integral rarity assumptions, respectively.
Given the assumptions of Theorem \ref{thm_red}
we have
\[\min_{d \in V}\max_{\pi \in \mathcal{P}_{x}(\alpha) }\L(\pi,d)
= \max_{\pi \in \mathcal{P}_{x}(\alpha) }\min_{d \in V} \L(\pi, d)\, .
\]
and
\[\min_{d \in V}\max_{\pi \in \mathcal{P}^{\Psi}_{x}(\alpha) }\L(\pi,d)
= \max_{\pi \in \mathcal{P}^{\Psi}_{x}(\alpha) }\min_{d \in V} \L(\pi, d)\, .
\]
for the pointwise and integral rarity assumptions, respectively.
\end{Theorem}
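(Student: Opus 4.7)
The plan is to invoke Sion's minimax theorem \cite{sion1958general} in both cases and then upgrade the $\sup$/$\inf$ equality to a $\max$/$\min$ by combining weak-$*$ compactness on the $\pi$-side with the coercivity of $\ell$ in its second argument on the $d$-side.

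First I would verify that both constraint sets are convex and weak-$*$ compact subsets of $\mathcal{P}(\Theta)$. Continuity of $\bar{p}(x|\cdot)$ makes $\Theta_{x}(\alpha)$ closed in the compact set $\Theta$, hence compact, so $\mathcal{P}_{x}(\alpha)=\mathcal{P}(\Theta_{x}(\alpha))$ is weak-$*$ compact and convex. For the integral case, since $\bar{p}(x|\cdot)$ is continuous and positive on the compact set $\Theta$, its image lies in a compact subinterval $[c,1]\subset (0,1]$ with $c>0$, so $\Psi\circ\bar{p}(x|\cdot)$ is upper semicontinuous and bounded on $\Theta$; the functional $\pi\mapsto\int\Psi(\bar{p}(x|\theta))\,d\pi(\theta)$ is therefore weak-$*$ upper semicontinuous and linear, exhibiting $\mathcal{P}^{\Psi}_{x}(\alpha)$ as the intersection of a weak-$*$ closed half-space with the weak-$*$ compact simplex $\mathcal{P}(\Theta)$.

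Next I would separate the variables. In the pointwise case, $\L(\pi,d)=\E_{\theta\sim\pi}[\ell(\varphi(\theta),d)]$ is linear (hence concave and weak-$*$ continuous) in $\pi$, the integrand being continuous and bounded on compact $\Theta$, and it is convex and continuous in $d$ by the hypotheses on $\ell$. In the integral case,
\begin{equation*}
\L(\pi,d) = \frac{\int_{\Theta}\bar{p}(x|\theta)\,\ell(\varphi(\theta),d)\,d\pi(\theta)}{\int_{\Theta}\bar{p}(x|\theta)\,d\pi(\theta)}
\end{equation*}
is a ratio of two positive linear functionals of $\pi$ whose denominator is bounded below by $c>0$ on all of $\mathcal{P}(\Theta)$. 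The superlevel set $\{\pi:\L(\pi,d)\geq t\}$ reduces to the linear half-space $\{\pi:\int\bar{p}(x|\theta)[\ell(\varphi(\theta),d)-t]\,d\pi(\theta)\geq 0\}$, so $\pi\mapsto\L(\pi,d)$ is quasilinear---in particular quasiconcave---and weak-$*$ continuous, while convexity and continuity in $d$ persist because the denominator is independent of $d$.

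With these verifications Sion's theorem applies with the compact convex set being the relevant $\pi$-constraint and the other convex set being $V$, giving $\sup_{\pi}\inf_{d}\L=\inf_{d}\sup_{\pi}\L$. The inner $\inf$ in $d$ is attained since $\L(\pi,\cdot)$ is continuous, convex, and coercive on the finite-dimensional space $V$; the outer $\sup$ in $\pi$ is attained because $\pi\mapsto\inf_{d}\L(\pi,d)$ is a pointwise infimum of weak-$*$ upper semicontinuous functions and hence upper semicontinuous on a weak-$*$ compact set. A symmetric argument handles $\inf_{d}\sup_{\pi}$, using that $d\mapsto\sup_{\pi}\L(\pi,d)$ inherits convexity and coercivity from $\L(\pi,\cdot)$ uniformly in $\pi$ over the compact constraint set. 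The main obstacle is the integral case, where $\L$ is only quasilinear---not linear---in $\pi$, so the classical bilinear von Neumann minimax theorem does not apply and one must use the ratio-form quasiconcavity argument above to fit the weaker hypotheses of Sion's theorem.
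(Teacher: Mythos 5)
Your proof takes essentially the same route as the paper: verify weak-$*$ compactness and convexity of the constraint sets, check Sion's quasiconcavity/quasiconvexity and semicontinuity hypotheses (using exactly the same observation that the superlevel and sublevel sets of the ratio-form $\L(\cdot,d)$ reduce to closed half-spaces, since the denominator $\int \bar p(x|\theta)\,d\pi$ is positive), invoke Sion's minimax theorem, and then upgrade $\sup/\inf$ to $\max/\min$ via compactness, upper/lower semicontinuity of the inner optimal-value functions, and coercivity of $\ell$ in $d$. The only cosmetic difference is that the paper handles attainment of the outer $\inf_d$ by restricting to the compact cube $V^*$ built in the proof of Theorem~\ref{thm_red}, whereas you invoke uniform coercivity of $d\mapsto\sup_\pi \L(\pi,d)$ directly; both are valid and lead to the same place.
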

\begin{proof}
We prove the result for the integral rarity assumption case only, the pointwise case being much simpler.
The assumptions imply $\Psi(p(x|\cdot))$ is upper semicontinuous, implying  that
$\mathcal{P}^{\Psi}_{x}(\alpha) \subset \mathcal{P}(\Theta)$ is closed, see e.g.~\cite[Thm.~15.5]{Aliprantis2006}. Since
$\mathcal{P}(\Theta)$ is a compact subset of  the space of signed measures in the weak topology,
 see e.g.~Aliprantis and Border\cite[Thm.~15.22]{Aliprantis2006}, it follows that
$\mathcal{P}^{\Psi}_{x}(\alpha) \subset \mathcal{P}(\Theta)$ is compact.
Consequently, to apply  Sion's minmax theorem \cite{sion1958general}
it is sufficient to establish that
the map $\L(\cdot, d):\mathcal{P}(\Theta) \rightarrow \R $ is upper semicontinuous and quasiconcave for each
$d \in V$  and
the map $\L(\pi, \cdot): V\rightarrow \R $ is lower semicontinuous and quasiconvex for each
$\pi  \in  \mathcal{P}(\Theta)$.
To that end, observe that since $\phi$ is continuous and $\Theta$ compact,
the function $\ell(\phi(\cdot),d)$ is bounded and continuous for each $d \in V$. Fixing $d$,
observe  the positivity of the likelihood function implies that the set
{\footnotesize  $$\bigl\{\pi \in \mathcal{P}(\Theta): \E_{\theta \sim \pi_{x}}[\ell(\phi(\theta),d)]\geq r\bigr\}
=\bigl\{\pi \in \mathcal{P}(\Theta): \E_{\theta \sim \pi}[p(x|\cdot)\ell(\phi(\theta),d)]\geq r
\E_{\theta \sim \pi}[p(x|\cdot)]\bigr\}$$}
is closed by the continuity of $\ell(\phi(\cdot),d)$ and $p(x|\cdot)$, see e.g.~\cite[Thm.~15.5]{Aliprantis2006}. Moreover, one can show that the reverse inequality also produces a closed set. Moreover, since it is a linear condition it is convex and therefore
the function $\L(\cdot, d):\mathcal{P}(\Theta) \rightarrow \R $ is upper and lower semicontinuous and quasiconcave for each
$d \in V$. Moreover, fixing $\pi \in \mathcal{P}(\Theta)$, since the function $\ell$ is continuous and $\Theta$ is compact and $\phi$ is continuous, it follows that the function
$\L(\pi, \cdot): V\rightarrow \R $ is continuous and convex and therefore lower semicontinuous and quasiconvex.
Consequently, Sion \cite[Cor.~3.3]{sion1958general}  implies that
\[\inf_{d \in V}\sup_{\pi \in \mathcal{P}^{\Psi}_{x}(\alpha) }\L(\pi,d)
= \sup_{\pi \in \mathcal{P}^{\Psi}_{x}(\alpha) }\inf_{d \in V} \L(\pi, d)
\]
Since, for fixed $d$,  inner optimization  $\sup_{\pi \in \mathcal{P}^{\Psi}_{x}(\alpha) }\L(\pi,d)$
is over of an upper semicontinuous function over a compact set, it achieves its supremum.
Since have established in the proof of Theorem \ref{thm_red} that the inner optimization
$\inf_{d \in v} \L(\pi, d)$  achieves its infimum, we can write
\[\inf_{d \in V}\max_{\pi \in \mathcal{P}^{\Psi}_{x}(\alpha) }\L(\pi,d)
= \sup_{\pi \in \mathcal{P}^{\Psi}_{x}(\alpha) }\min_{d \in V} \L(\pi, d)
\]
Since  the inner minimum $\min_{d \in v} \L(\pi, d)$ is the minimum of a family upper semicontinuous
functions, it produces an upper semicontinuous function, see e.g.~\cite[Lem.~2.41]{Aliprantis2006}, since the maximization of the outer loop is over the compact set $\mathcal{P}^{\Psi}_{x}(\alpha)$,
  we conclude that the supremum on the righthand side is attained. Moreover, since the inner maximum
$\max_{\pi \in \mathcal{P}^{\Psi}_{x}(\alpha) }\L(\pi,d)$ is the maximum over a family of continuous functions, therefore lower semicontinuous functions, it follows that it produces a lower semicontinuous function. Restricting to the compact subset $V^{*}$ from the proof of Theorem \ref{thm_red} we conclude the outer infimum is attained, thus establishing the assertion.
\end{proof}
\subsection{Proof of Theorem \ref{thm_saddles}}

Since the $\ell^{2}$ loss  $(v_{1},v_{2}) \mapsto \|v_{1}-v_{2}\|^{2}$ is strictly convex and coercive in its second argument and vanishes on the diagonal, the assumptions imply that the saddle function
 $\L(\pi,d):=\E_{\theta \sim \pi}\bigl[\|\varphi(\theta)-d\|^2\bigr], \,    \pi \in
  \mathcal{P}_{x}(\alpha),\, d \in V  ,
$
\eqref{Lalphadefa2}
satisfies the conditions of Theorem \ref{thm_red}, so that it follows from Theorem \ref{thm_minmax} that $\L$ satisfies the minmax equality, in particular
establishes the existence
 of a worst-case measure
\begin{equation}
    \label{wcmeas}
\pi^{*} \in \arg \max_{\pi \in \mathcal{P}_{x}(\alpha)}\inf_{d\in V}{\L(\pi,d)}
\end{equation}
and a worst-case decision
\begin{equation}
    \label{wcdec}
d^{*} \in \arg \min_{d\in V}\sup_{\pi \in \mathcal{P}_{x}(\alpha)}{\L(\pi,d)}.
\end{equation}
In addition to establishing the existence of saddle points,
where a pair $\bigl(\pi^{*},d^{*}\bigr) \in  \mathcal{P}_{x}(\alpha) \times V$ is a saddle point of $\L$ if we have
\begin{equation}
    \label{def_saddle2}
\L(\pi,d^{*}) \leq \L(\pi^{*},d^{*}) \leq  \L(\pi^{*},d), \quad \pi \in \mathcal{P}_{x}(\alpha),\, d \in V,\,
\end{equation}
observe that
 Bertsekas et al.~\cite[Prop.~2.6.1]{bertsekas2003convex} asserts
that a pair $(\pi^{*},d^{*}) \in  \mathcal{P}_{x}(\alpha) \times V$ is a saddle point
if and only if they are a worst-case measure and worst-case decision, respectively, as defined in
\eqref{wcmeas} and \eqref{wcdec}.

Now let $(\pi^{*},d^{*}) $ be a saddle point.  As demonstrated in the proof of Theorem  \ref{thm_red}, since the function $d \mapsto  \|\varphi(\theta)-d\|^{2}$  is strictly convex
for all $\theta$, it follows that its expectation $ d \mapsto \L(\pi,d):=\E_{\theta \sim \pi}\bigl[\|\varphi(\theta)-d\|^2\bigr]$
is strictly convex. Moreover a minimizer
\[ d^{**}\in \arg \min_{d \in V}{\L(\pi^{*},d)}\]
exists and by strict convexity it is necessarily unique, see e.g.~\cite{Rockafellar1998}.
Consequently, by the right-hand side of the definition \eqref{def_saddle2} of a saddle point it follows that
$d^{**}=d^{*}$. Since $\pi^{*}$ satisfying \eqref{wcmeas} is equivalent to it satisfying  \eqref{dopt}
and $d^{*}$ satisfying the right-hand side of the definition \eqref{def_saddle2} of a saddle point is equivalent to
it satisfying \eqref{piopt},  the assertion regarding the form \eqref{dopt} and \eqref{piopt} for saddle points is proved.

Let  $(\pi_{1},d_{1})$ and $(\pi_{2},d_{2})$ be two saddle points.
Then by the saddle relation  \eqref{def_saddle2} we have
\[ L(\pi_{2},d_{2}) \leq\L(\pi_{2},d_{1}) \leq \L(\pi_{1},d_{1}) \leq  \L(\pi_{1},d_{2}) \leq  \L(\pi_{2},d_{2}) \]
establishing equality of the value of the risk \eqref{Rstar2}  for all saddle points.

Finally, since $ d \mapsto \L(\pi,d)$ is strictly convex it follows that
its maximum  $d \mapsto \sup_{\pi \in \mathcal{P}_{x}(\alpha)}\L(\pi,d)$ is strictly convex demonstrating
the uniqueness of solutions to \eqref{wcdec}. Moreover, since $\mathcal{P}_{x}(\alpha)$ is convex, the mapping
$\pi \mapsto \L(\pi,d):=\E_{\theta \sim \pi}\bigl[\|\varphi(\theta)-d\|^2\bigr] $ is affine and therefore concave for all $d \in V$, and therefore
its minimum  $\pi \mapsto \inf_{d \in V}\L(\pi,d) $ is also concave. Consequently,  the set of all worst-case measures, that is,  maximizers of
\eqref{wcmeas}, is convex, establishing the final  assertion.

\subsection{Proof of Theorem \ref{thm_duala}}

Since the relative likelihood is continuous, the likelihood region $\Theta_{x}(\alpha)$ is closed and therefore compact, and since
$\varphi$ is continuous, it follows that $\varphi\bigl(\Theta_{x}(\alpha)\bigr)$ is compact and therefore measurable.
According to  Bonnans and Shapiro \cite[Sec.~5.4.1]{bonnans2013perturbation}, because the
 constraint function $(r,z,x) \mapsto \|x-z\|^{2} - r^{2}$ is  continuous, the Lagrangian of
the minimum enclosing ball problem \eqref{opt_minball} is
\begin{equation}
\label{def_lagr}
 L(r,z;\mu):= r^{2}+ \int_{\varphi(\Theta_{x}(\alpha))}{\bigl(\|x-z\|^{2} - r^{2}\bigr) d\mu(x)}, \quad r \in \R,\, z \in V,\,\, \mu \in  \mathcal{M}\bigl(\varphi(\Theta_{x}(\alpha))\bigr)\, .
\end{equation}
 Define
\[ \Psi(\mu):= \inf_{r \in \R,z \in V}{L(r,z;\mu)} \quad \]
and observe that
\[\inf_{r \in \R}{L(r,z;\mu)}=
\begin{cases}
\int_{\varphi(\Theta_{x}(\alpha))}{\|x-z\|^{2}  d\mu(x)},& \quad \int{d\mu}=1\\
-\infty,  & \quad \int{d\mu}\neq 1
\end{cases}
\]
so that
\[\Psi(\mu):= \inf_{r \in \R,z \in V}{L(r,z;\mu)}
=
\begin{cases}
\int_{\varphi(\Theta_{x}(\alpha))}{\bigl\|x-\E_{\mu}[x]\bigr\|^{2}  d\mu(x)},& \quad \int{d\mu}=1\\
-\infty,  & \quad \int{d\mu}\neq 1
\end{cases}
\]
and therefore the dual problem to the minimum enclosing ball problem \eqref{opt_minball}
is
\[ \max_{\mu \in \mathcal{M}(\varphi(\Theta_{x}(\alpha)))}{\Psi(\mu)}= \max_{\mu \in \mathcal{P}(\varphi(\Theta_{x}(\alpha)))}{\E_{\mu}\bigl[\|x-\E_{\mu}[x]\|^{2}\bigr] }\,,   \]
establishing    the Lagrangian duality assertion.

Moreover, since  $\Theta_{x}(\alpha)$ is compact it is Polish, that is Hausdorff and completely metrizable,
and since $\varphi:\Theta_{x}(\alpha)\rightarrow \varphi\bigl(\Theta_{x}(\alpha)\bigr) $ is continuous
\cite[Thm.~15.14]{Aliprantis2006} asserts that
$\varphi_{*}: \mathcal{P}\bigl(
   \Theta_{x}(\alpha)\bigr) \rightarrow  \mathcal{P}(\varphi(\Theta_{x}(\alpha)))$ is surjective.
The change of variables formula \cite[Thm.~13.46]{Aliprantis2006} establishes
that the
 objective function of \eqref{piopt} satisfies
\[\E_{\pi}\bigl[\|\varphi-\E_{\pi}[\varphi]\|^{2}\bigr] =
\E_{\varphi_{*}\pi}\bigl[\|v-\E_{\varphi_{*}\pi}[v]\|^{2}\bigr]\,, \]
so that the surjectivity of  $ \varphi_{*}$ implies  that
 the value  of \eqref{piopt} is  equal to
\begin{equation}
\label{opt57}
\max_{\nu \in \mathcal{P}(\varphi(\Theta_{x}(\alpha)))}  \E_{\nu}{\bigl[\|v-\E_{\nu}{[v]}\|^{2}\bigr]}\, .
\end{equation}
The primary assertions  then follow from  Lim and McCann's \cite[Thm.~1]{lim2021geometrical}
generalization of the one-dimensional result of Popoviciu \cite{Popoviciu} regarding the relationship between variance maximization and the   minimum enclosing ball of the domain  $\varphi(\Theta_{x}(\alpha)) \subset V$.

\subsection{Proof of Theorem \ref{pioptreduced}}

 The proof of Theorem \ref{thm_red} using rarity assumptions of the current form  \eqref{rarityass}
 does not require that the likelihood function $p(x',\cdot)$ be continuous for all $x' \in X$ but only at $x$.
 It asserts the finite-dimensional reduction  \eqref{eq_pioptreduced}  for
 $m\geq dim(V)+2$ Dirac measures. On the other hand, the duality Theorem \ref{thm_duala} implies that the optimality of such a measure
 $\pi:=\sum_{i=1}^{m}{w_{i}\updelta_{\theta_{i}}}$
 is equivalent to the images of these  Diracs  $\updelta_{\varphi(\theta_{i})}, i=1,\ldots, m$ lying on the intersection
 $\varphi(\Theta_{x}(\alpha))\cap \partial B$ of the image of the likelihood region and
 the boundary of its minimum enclosing ball $B$ and that the weights of these Diracs determine that the center of mass
 of this image measure $\varphi_{*}\pi$ is the center $d^{\alpha}$ of the ball $B$, expressed as the right-hand side of
 \eqref{eqjhegvdievd6}. Consequently, the center $d^{\alpha}$ is in the convex hull of the $m$ points
 $\varphi(\theta_{i}), i=1,\ldots, m$ and by Caratheodory's theorem, see e.g.~Rockafellar \cite{rockafellar1970convex},
 $d^{\alpha}$ is in the convex hull of $dim(V)+1$ of these points. Let $S \subset \{1, \ldots, m\}$  correspond to such a subset. Then by the if and only if characterization
 of duality Theorem \ref{thm_duala} it follows that  the subset $\varphi(\theta_{i}), i \in S$ of $dim(V)+1$ image points, using the weights
 $w'_{i}, i \in S$ defining this convex
 combination to be the center $d^{\alpha}$ corresponds to an optimal measure
 $\pi':=\sum_{i\in S}{w'_{i}\updelta_{\theta_{i}}}$, thus establishing the assertion.

\subsection{Proof of Theorem \ref{thm_asympt} }
The following are standard results in the statistics literature, see Casella and Berger \cite{casella2003statistical}.
The idea is to write the Taylor expansion of the log-likelihood around the MLE, then to consider properties of the MLE, and finally apply the law of large numbers and  Slutsky's theorem.

For simplicity, first let $\theta\in \Theta\subseteq \mathbb{R}$. Since the
  second term on the right-hand side in the Taylor expansion
\begin{multline*}
    \sum_{i=1}^N \ln p(x_i|\theta)=\sum_{i=1}^N \ln p(x_i|\hat{\theta}_N)+\sum_{i=1}^N\frac{\partial}{\partial\theta} \ln p(x_i|\theta)_{\theta=\hat{\theta}_N}(\theta - \hat{\theta}_N) \\
    +\frac{1}{2}\sum_{i=1}^N\frac{\partial^2}{\partial\theta^2} \ln p(x_i|\theta)_{\theta=\hat{\theta}_N}(\theta-\hat{\theta}_N)^2 + o_p(1)
\end{multline*}
of  the log-likelihood around the MLE, $\hat{\theta}_N$,
vanishes  by the first order condition of the MLE, we obtain
 $$\sum_{i=1}^N \ln p(x_i|\theta)-\sum_{i=1}^N \ln  p(x_i|\hat{\theta}_N)=\frac{1}{2}\sum_{i=1}^N\frac{\partial^2}{\partial\theta^2} \ln p(x_i|\theta)_{\theta=\hat{\theta}_N}(\theta-\hat{\theta}_N)^2 + o_p(1), $$
 which we write  as
 $$-2\sum_{i=1}^N \ln \big(\frac{p(x_i|\theta)}{p(x_i|\hat{\theta}_N)}\big)=-\frac{1}{N}\sum_{i=1}^N\frac{\partial^2}{\partial\theta^2} \ln p(x_i|\theta)_{\theta=\hat{\theta}_N}(\sqrt{N}(\theta-\hat{\theta}_N))^2 + o_p(1).$$
 By the law of large numbers  and the consistency of the MLE we have
\begin{equation}\label{eqcasa}
-\frac{1}{N}\sum_{i=1}^N\frac{\partial^2}{\partial\theta^2} \ln p(x_i|\theta)_{\theta=\hat{\theta}_N}\xrightarrow{\text{P}}-E_{X\sim P(\cdot|\theta)}\frac{\partial^2}{\partial\theta^2} \ln p(X|\theta)=I(\theta),
\end{equation}
where
 $I(\theta):=-E_{X\sim P(\cdot|\theta)}\frac{\partial^2}{\partial\theta^2} \ln p(X|\theta)$ is the
 Fisher information and $\xrightarrow{\text{P}}$ represents  convergence in probability. By the asymptotic efficiency of the MLE (under our regularity assumptions), the variance $V_{0}(\theta)$  of the MLE  $\hat{\theta}_N$ is the inverse of the Fisher information.  That is
$$I(\theta)=(V_0(\theta))^{-1}\, .$$
Moreover, under the regularity conditions of Casella and Berger \cite[Section.~10.6.2]{casella2003statistical},  we have
$$\frac{\sqrt{N}(\hat{\theta}_N-\theta)}{\sqrt{V_0(\theta)}}\xrightarrow{\text{d}} N(0,1),$$

where $\xrightarrow{\text{d}}$ represents  convergence in distribution, and
therefore Slutsky's theorem implies
$$-2\ln \bar{p}(\mathcal{D}|\theta)= -2\sum_{i=1}^N \ln \biggl(\frac{p(x_i|\theta)}{p(x_i|\hat{\theta}_N)}\biggr)\xrightarrow{\text{d}} \chi^2_1\, .$$

In the more general case $\Theta \subset \R^{k}$, under the regularity conditions of Casella and Berger \cite[Section.~10.6.2]{casella2003statistical},  we have
\begin{equation}\label{eqcasea2}
\sqrt{N}(\hat{\theta}_N-\theta)\xrightarrow{\text{d}} N(0,I(\theta)^{-1})\ ,
\end{equation}
where $I(\theta)$ is the Fisher information matrix, and
 the same argument goes through obtaining
$$-2\ln  \bar{p}(\mathcal{D}|\theta)\overset{a}{=} N(\theta-\hat{\theta}_N)I(\theta)^{-1} (\theta-\hat{\theta}_N)\xrightarrow{\text{d}}  \chi^2_k,$$
where $\overset{a}{=}$ represents  asymptotic equality. Therefore,
  for large sample sizes, one may use the following approximation
 \begin{equation*}
     -2\ln\bar{p}(\mathcal{D}|\theta)\approx \chi^2_k\ .
 \end{equation*}
Finally, the condition $\bar{p}(\mathcal{D}|\theta)<\alpha$ in the computation of $\beta_\alpha$ is the same as
\begin{equation*}
    -2\ln\bar{p}(\mathcal{D}|\theta)>2\ln(\frac{1}{\alpha})\,,
\end{equation*}
so that
consequently, for large sample sizes,  $\beta_\alpha$ may be approximated by
\begin{equation*}
    \beta_\alpha\approx 1-\chi^2_k\bigl(2\ln(\frac{1}{\alpha})\bigr)\, .
\end{equation*}

\subsection{Proof of Theorem \ref{thmasymcaseB} }
For case (A) the proof is an application of  Theorem \ref{thm_asympt}.
Case (B) follows from a direct adaptation of the proof of  Theorem \ref{thm_asympt}.
The first main step (in this adaptation) is to use the convergence of $Q_N$ and the Independence of the $y_i$ given the $t_i$ to
replace \eqref{eqcasa} by
\begin{equation}\label{eqcasab}
-\frac{1}{N}\sum_{i=1}^N\frac{\partial^2}{\partial\theta^2} \ln p(y_i|\theta, t_i)_{\theta=\hat{\theta}_N}\xrightarrow{\text{P}}-E_{t \sim Q, y\sim P(\cdot|\theta,t)}\frac{\partial^2}{\partial\theta^2} \ln p(y|\theta,t)\,.
\end{equation}
The second main step is to derive the  asymptotic consistency and normality \eqref{eqcasea2} of the MLE in case (B). This can be done by adapting the proofs of \cite[Lec.~5]{dudley200918}.

\subsection{Proof of Theorem \ref{thm_red}}
We prove the theorem for the integral rarity case only, the pointwise rarity case being much simpler.
First note that Theorem \ref{thm_minmax} asserts that the saddle function $\L$ satisfies a minmax equality, in particular,
that the inner loop $\min_{d\in V}{\L(\pi,d)}$ of the  primary assertion \eqref{opt_ODAD2}
indeed has a solution.
To analyze such a solution, recall, by \eqref{Lalphadef0}, that  $\L(\pi,d)$ is the expectation
\begin{equation}
 \L(\pi,d):=\E_{\theta \sim \pi_{x}}[\ell(\phi(\theta),d)]\, .
\end{equation}
It is easy to show that the expectation of a family of strictly convex functions is strictly convex, so that
 it follows, for fixed $\pi$, that
$\L(\pi,d)$ is strictly convex. Since $\phi$ is continuous and $\Theta$ is compact, it follows that
the image $\phi(\Theta) \subset V$ is compact and since
$\ell$ is coercive in its second variable, continuous in its first and
$\phi(\Theta)$ is compact, it follows that $\ell$ is uniformly coercive in $\theta$, that is,
for every  $y \in \R_{+}$, there exists an $R \in \R_{+}$ such that
$|d|\geq R \implies \ell(\phi(\theta),d) \geq y,\,\, \theta \in \Theta$.
It follows that $\L(\pi,d) \geq y, |d|\ge R, \pi \in \mathcal{P}(\Theta)$.
Since $y$ is arbitrary and $\L$ is convex, it follows that $\L$ achieves its minimum and
since it is strictly convex this minimum is achieved at a unique point $d_{\pi}$, see e.g.~\cite{Rockafellar1998}.
Since $\ell$ is continuous in its first  variable, $\phi$ is continuous and $\Theta$ compact, it follows that
$\ell(\phi(\theta),0)$ is uniformly bounded in $\theta$ and therefore implies a uniform bound on
$\L(\pi,0), \pi \in  \mathcal{P}(\Theta)$. Consequently, the coerciveness of $\L$ implies that we can
uniformly bound the unique optima $d_{\pi}, \pi \in \mathcal{P}(\Theta)$.

Let $V^{*} \supset \phi(\Theta)$ be a closed cube in $V$ containing  an open neighborhood of the  image $\phi(\Theta)$ and this feasible set of optima just discussed.
Since $\phi$ is continuous, $\Theta$ is compact,  and $\ell$ is  continuously differentiable, it
follows that $\nabla_{d}\ell(\phi(\theta),d)$
is uniformly bounded in both $\theta$ and $d \in V^{*}$. Consequently,
the Leibniz theorem for differentiation under the integral sign,
 see e.g.~Aliprantis and Burkinshaw \cite[Thm.~24.5]{aliprantis1998principles},
implies, for fixed $\pi$, that
\begin{equation}
\label{dL}
\nabla_{d} \L(\pi,d):=\E_{\theta \sim \pi_{x}}\bigl[\nabla_{d}\ell(\phi(\theta),d)\bigr],\quad d \in V^{*}\, .
\end{equation}
Consequently, the relation $0=\nabla_{d} \L(\pi,d_{\pi})$ at the unique minimum $d_{\pi}$  of
  $\min_{d\in V}{\L(\pi,d)}$  implies that
\begin{equation}
\label{oemiir}
\E_{\theta \sim \pi_{x}}\bigl[\nabla_{d}\ell(\phi(\theta),d_{\pi})\bigr]=0, \quad \pi \in \mathcal{P}(\Theta)\,.
\end{equation}

The formula \eqref{eq_conditional}
for the conditional measure  and the
 positivity of its denominator
imply that we can write \eqref{oemiir} as
\begin{equation}
\label{oemiir2}
\frac{1}{ \int_{\Theta}{p(x|\cdot)d\pi}}\E_{\theta \sim \pi}\bigl[p(x|\theta)\nabla_{d}\ell(\phi(\theta),d_{\pi})\bigr] =0\,
\end{equation}
which is equivalent to
\begin{equation}
\label{oemiir2a}
\E_{\theta \sim \pi}\bigl[p(x|\theta)\nabla_{d}\ell(\phi(\theta),d_{\pi})\bigr] =0\,
\end{equation}

Consequently, adding the constraint \eqref{oemiir2a}, equivalent to the   minimization problem,
 the maxmin problem   on the left-hand side of
\eqref{opt_ODAD2} can be written
\begin{equation}
\label{maxopt}
\begin{cases}
\text{Maximize } \L(\pi,d)
 \\
\text{Subject to }  \,\, \pi  \in \mathcal{P}_{x}^{\Psi}(\alpha), \, \, d \in V^{*}
\\
\E_{\theta \sim \pi}\bigl[p(x|\theta)\nabla_{d}\ell(\phi(\theta),d)\bigr] =0
\end{cases}
\end{equation}
which again using the conditional formula \eqref{eq_conditional} and the definition of
$ \L(\pi,d)$ can be written
\begin{equation}
\label{maxopt2}
\begin{cases}
\text{Maximize } \frac{1}{\int_{\Theta}{p(x|\theta)d\pi(\theta)}}\E_{\theta \sim \pi}\bigl[p(x|\theta)\ell(\phi(\theta),d)\bigr]
 \\
\text{Subject to }  \,\, \pi  \in \mathcal{P}_{x}^{\Psi}(\alpha), \, \, d \in V^{*}
\\
\E_{\theta \sim \pi}\bigl[p(x|\theta)\nabla_{d}\ell(\phi(\theta),d)\bigr] =0
\end{cases}
\end{equation}
which, introducing a new variable, can be written
\begin{equation}
\label{maxopt3}
\begin{cases}
\text{Maximize } \frac{1}{\epsilon}\E_{\theta \sim \pi}\bigl[p(x|\theta)\ell(\phi(\theta),d)\bigr]
 \\
\text{Subject to }  \,\, \pi  \in \mathcal{P}_{x}^{\Psi}(\alpha), \, \, d \in V^{*}, \epsilon >0
\\
\E_{\theta \sim \pi}\bigl[p(x|\theta)\nabla_{d}\ell(\phi(\theta),d)\bigr] =0,\quad
\epsilon =\E_{\theta \sim \pi}[p(x|\theta)]\, .
\end{cases}
\end{equation}
Now fix $\epsilon >0$ and $d \in V^{*}$ and consider the inner maximization loop
\begin{equation}
\label{maxopt4}
\begin{cases}
\text{Maximize } \frac{1}{\epsilon}\E_{\theta \sim \pi}\bigl[p(x|\theta)\ell(\phi(\theta),d)\bigr]
 \\
\text{Subject to }  \,\, \pi  \in \mathcal{P}_{x}^{\Psi}(\alpha), \, \,
\\
\E_{\theta \sim \pi}\bigl[p(x|\theta)\nabla_{d}\ell(\phi(\theta),d)\bigr] =0,\quad
\epsilon =\E_{\theta \sim \pi}[p(x|\theta)]\, .
\end{cases}
\end{equation}
Since this is linear optimization of the integration of a  non-negative, and thus integrable function, with possible integral value $+\infty$ for all $\pi$,  over the full simplex of probability measures subject to
$\dim(V)+1$ linear equality constraints defined by integration against measurable functions,
  plus one linear inequality constraint defined by integration against  a measurable function,
  \cite[Thm.~4.1]{owhadi2013optimal}, which uses von Weizsacker and Winkler \cite[Cor.~3]{weizsacker1979integral}, see also
Karr
\cite{karr1983extreme} which is applicable under more assumptions on the model $\mathbb{P}$, implies  this optimization problem can be reduced to optimization over the convex combination
of $dim(V)+3$ Dirac measures supported on $\Theta$. Since the full problem is
 the supremum of such problems,  using the compactness of the space $\mathcal{P}^{\Psi,m}_{x}(\alpha)$ in the weak topology, the  primary assertion follows. When $\Psi$ is the identity function
one of the constraints disappears, and the assertion in that case follows.

\subsection{Proof of Theorem \ref{thm_minball2}}

The proof follows from the invariance of the variance under $\varphi_{*}$ and the equality of their maximum variance problems established in the  proof of Theorem \ref{thm_duala}  and
    Lim and McCann's \cite[Thm.~2]{lim2021geometrical}
generalization of  their $\ell^{2}$ result \cite[Thm.~1]{lim2021geometrical}.

\subsection{Proof of Theorem \ref{thm_miniball}}
First consider the $\epsilon >0$ case.
Our proof will use results from B\u{a}doiu, Har-Peled and Indyk \cite{badoiu2002approximate}.
Consider the REPEAT loop.  As previously mentioned, the
 FOR loop always gets broken at Step \ref{forloopbreak} for some $x$ since,
by Theorem \ref{thm_duala}, the center of the ball must lie in the convex hull of the  $n+2$ points, but by  Caratheodory's theorem this center also lies in the convex hull of $n+1$ of those points, and Theorem \ref{thm_duala} then asserts that this ball is also the minimum enclosing ball of those $n+1$ points.
Clearly, the breaking of the step implies that the elimination of the point does not change the
 current ball.
Consequently, the only change in the  current ball is through the discovery
in Step \ref{distantpoint} of a distant point and its addition to the working set
$A_{k}$ followed by the calculation  in Step \ref{newball}  of a new minimum ball containing this
 enlarged working set. Let $B(A_{k})$ denote the minimum enclosing ball of $A_{k}$,
$R(A_{k})$ denote its radius and, overloading notation, let us  denote $R(A_{k}):=R(B_{k})$.
Then when a new point is added to $A_{k}$ to obtain $A_{k+1}$, it follows from  $A_{k} \subset A_{k+1}$
that $B(A_{k+1}) \supset A_{k+1} \supset  A_{k}$ and therefore  $R(A_{k+1}) \geq R(A_{k})$. Likewise
$A_{k} \subset K$ implies that  $R(A_{k}) \leq R$, the radius of the minimum enclosing  ball $B$ of $K$.
Consequently
 the  sequence  $R(A_{k}) \leq R$ of the radii  of the  balls is monotonically increasing and bounded by $R$.
Moreover,
\cite[Clm.~2.4]{badoiu2002approximate}, using \cite[Lem.~2.2]{badoiu2002approximate} from
Goel et al.~\cite{goel2001reductions},
implies  that, until the stopping criterion in Step \ref{untils}     is satisfied, we have
\begin{equation}
\label{oeokjiir}
  R(A_{k+1}) \geq \bigl(1+\frac{\epsilon^{2}}{16}\bigr) R(A_{k}),
\end{equation}
and when the stopping criterion is satisfied it follows from
Step \ref{distantpoint} that
the output in Step \ref{returns}  satisfies
\begin{equation}
\label{eiienuuururu}
B^{(1+\epsilon)(1+\delta)}_{k-1}\supset K.
\end{equation}
Observe that we have $R\leq \Delta$   where $\Delta:=\diam(K)$ and
  the initialization implies that $R(A_{0}) \geq \frac{1}{2(1+\delta)}\Delta$. Consequently
 \eqref{oeokjiir}  implies that the radius $ R(A_{k})$  increases by at least
$\frac{\epsilon^{2}}{32((1+\delta)}\Delta$ at each step.  Since the sequence is bounded by
$R \leq \Delta$ it follows that at most $\frac{16}{\epsilon^{2}}(1+2\delta)$ of the REPEAT loop can be taken before terminating at Step \ref{untils}. Upon termination
 the returned  ball $B^{*}:=B^{(1+\epsilon)(1+\delta)}_{k-1}$ in  Step \ref{returns}, by  \eqref{eiienuuururu}, satisfies
\[ B^{*} \supset K,\]
and since
\[R(B^{(1+\epsilon)(1+\delta)}_{k-1})= (1+\epsilon)(1+\delta)R(B_{k-1}) \leq (1+\epsilon)(1+\delta)R\]
we obtain
\[ R\bigl(B^{*}\bigr) \leq  (1+\epsilon)(1+\delta)R , \]
establishing the  primary assertion. Since each step in the REPEAT loop adds at most one new point to the
working set $A_{k}$, it follows that
the working set size is bounded by $2$ plus the number of steps in the REPEAT loop, that is
  $2+\frac{16}{\epsilon^{2}}(1+2\delta)$. Since the WHILE loop keeps the bound $\leq n+2$, the proof is finished.

Now consider the $\epsilon =0$ case. First let $\delta=0$.
By compactness of the set $K$, there exists a sub-sequence $(A_{t_1})$, indexed by $T_1\subseteq \mathbb{N}$, of $(A_t)$ such that $C(A_{t_1})\to C_1$, in our notation whenever the algorithm stops or we reach to a fixed point the sequence repeats the last set of points. For every $t_1\in T_1$, let $y^{t_1}$ be the point selected as a furthest point from $B(A_{t_1})$ in the algorithm to form $A_{t_1+1}$. Again by compactness of the set $K$, there exists a sub-sequence $(y^{t_2})$, indexed by  $T_2\subseteq T_1\subseteq \mathbb{N}$, of $(y^{t_1})$ such that $y^{t_2}\to y^*$.
By another application of compactness of $K$, there exists a sub-sequence $(A_{t_3})$, indexed by $T_3\subseteq T_2\subseteq T_1\subseteq \mathbb{N}$, of $(A_{t_2})$ such that $C(A_{t_3}\cup\{{y^{t_3}}\})\to C_2$.
Since $C(A_{t_3})\to C_1$ and $R(A_{t_3})\uparrow R_0\leq R(K)$ as $t_3\in T_3\to \infty$, it follows that $B(A_{t_3})\to B(C_1,R_0)$.

To complete the proof it is sufficient to show that $K\subseteq B(C_1,R_0)$,
  since then $R_0\leq R(K)$ implies that $B(C_1,R_0)= B(K)$. To that end, we demonstrate that
\[d:=\max_{x\in K} dist(x,B(C_1,R_0))=0.\]
Since $dist()$ is a continuous function in both of its arguments, $B(A_{t_3})\to B(C_1,R_0)$ as $t_3\in T_3\to \infty$, $y^{t_3^k}\in \operatorname{argmax}_{x\in K} dist(x,B(A_{t_3^k}))$ for every $t_3^k\in T_3$ and  $y^{t_3^k}\to y^*$, it follows that
\begin{equation}
\label{eooooe}
dist(y^*,B(C_1,R_0))=d\, .
\end{equation}
 By the choice of $T_3$, $y^{t_3}\to y^*$, $B(A_{t_3})\to B(C_1,R_0)$, and $B(A_{t_3}\cup\{{y^{t_3}}\})\to B(C_2,R_0)$. Therefore, for $\epsilon_1>0$, there exists a large number $N_{\epsilon_1}\in \mathbb{N}$, such that for all $t_3\in T_3$ with $t_3 \geq N_{\epsilon_1}$, we have
\begin{equation}
\label{eopokiier}
 A_{t_3}\subseteq B(C_1,R_0+\epsilon_1)\, \, \text {and} \,\,A_{t_3}\cup\{y^{t_3}\}\cup{y^*} \subseteq B(C_2,R_0+\epsilon_1).
\end{equation}
Consequently   \eqref{eooooe}, \eqref{eopokiier} and the triangle inequality imply
\[dist(C_1,C_2)\geq dist(y^*,C_1)-dist(y^*,C_2)\geq (d+R_0)-(R_0+\epsilon_1)=d-\epsilon_1\]
 and
\[A_{t_3}\subseteq B(C_1,R_0+\epsilon_1)\cap B(C_2,R_0+\epsilon_1),  \quad t_3\in T_3,
  t_3 \geq N_{\epsilon_1}.\]

 Let $\bar{C}=\frac{C_1+C_2}{2}$ and consider the hyperplane orthogonal to the vector $C_1-C_2$ passing through $\bar{C}$. By Pythagoras' Theorem, $B(C_1,R_0+\epsilon_1)\cap B(C_2,R_0+\epsilon_1)\subseteq B(\bar{C}, \sqrt{(R_0+\epsilon_1)^2-(\frac{d-\epsilon_1}{2})^2})$ and therefore, $A_{t_3}\subseteq B(\bar{C}, \sqrt{(R_0+\epsilon_1)^2-(\frac{d-\epsilon_1}{2})^2})$, which implies that $R(A_{t_3}) \leq   \sqrt{(R_0+\epsilon_1)^2-(\frac{d-\epsilon_1}{2})^2}$ for all $t_3\in T_3$ with $t_3 \geq N_{\epsilon_1}$.
By sending $\epsilon_1$ to zero, we obtain that $R(A_{t_3})\leq \sqrt{(R_0)^2-(\frac{d}{2})^2}$ as $t_3\in T_3\to \infty$, but $ R(A_{t_3})\uparrow R_0$ implies $d=0$, which completes the proof.

For the general case $\delta\geq 0$, we can use the same technique. Let $K'=(y^t)_{t\in\mathbb{N}}$ be a sequence of points selected as the furthest point in Step \ref{distantpoint} (with the relative error size of $\delta$) in one complete execution of the algorithm. Using the result and the language of the case $\delta=0$ applied to the set $K'$, we obtain that $B(K')=B(C_1,R_0)$, where $C_1$ and $R_0$ are the center and radius returned by the algorithm as $t \to \infty$, with the convention that whenever the algorithm stops we repeat the last set of points up to infinity.

Note that $\max_{x\in K}dist(C_1,x)\leq (1+\delta)R_0$, since otherwise the algorithm would have not  converged to $C_1$, and therefore
 $K\subseteq B(C_1, (1+\delta)R_0)$ which implies that $R(K)\leq(1+\delta)R_0$. Moreover, by $K'\subseteq K$ we have $R_0=R(K')\leq R(K)$ and therefore  $R_0\leq R(K)\leq (1+\delta)R_0$, completing the proof.

\subsection*{Acknowledgments}
Part of this research was carried out at the Jet Propulsion Laboratory, California Institute of Technology, under a contract with the National Aeronautics and Space Administration.
The authors gratefully acknowledge support
from Beyond Limits (Learning Optimal Models) through CAST (The Caltech Center for Autonomous Systems and Technologies) and partial support
from the Air Force Office of Scientific Research under award number FA9550-18-1-0271 (Games for Computation and Learning).

\noindent \copyright 2021. California Institute of Technology. Government sponsorship acknowledged.

\bibliographystyle{plain}
\bibliography{merged}

\end{document}